\definecolor{Darkblue}{rgb}{0,0,0.4}
\definecolor{Brown}{cmyk}{0,0.81,1.,0.60}
\definecolor{Purple}{cmyk}{0.45,0.86,0,0}
\newcommand{\mydriver}{hypertex}
 \renewcommand{\mydriver}{pdftex}
\newcommand{\lref}[2][]{\hyperref[#2]{#1~\ref*{#2}}}
\newtheorem{theorem}{Theorem}[section]
\newtheorem{lemma}[theorem]{Lemma}
\newtheorem{observation}[theorem]{Observation}
\newtheorem{claim}[theorem]{Claim}
\numberwithin{algorithm}{section}
\newenvironment{proof}{

\noindent{\bf Proof:}}
{\hfill$\blacksquare$

}
\newenvironment{proofof}[1]{

\noindent{\bf Proof of {#1}:}}
{\hfill$\blacksquare$

}
\newcommand{\junk}[1]{}
\newcommand{\ignore}[1]{}
\newcommand{\poly}{\operatorname{poly}}
\newcommand{\argmin}{\operatorname{argmin}}
\newcommand{\sse}{\subseteq}
\newcommand{\Si}{{\mathcal{S}_i}}
\renewcommand{\S}{{\mathcal{S}}}
\newcommand{\ts}{\textstyle}
\newcommand{\Opt}{\ensuremath{\mathsf{Opt}\xspace}}
\newcommand{\LPOpt}{\ensuremath{\mathsf{LPOpt}\xspace}}
\newcommand{\mab}{\ensuremath{\mathsf{MAB}}\xspace}
\newcommand{\sks}{\ensuremath{\mathsf{StocK}}\xspace}
\newcommand{\parent}{\mathsf{parent}}
\newcommand{\head}{\mathsf{Head}}
\newcommand{\comp}{\mathsf{comp}}
\newcommand{\ptime}{\mathsf{time}}
\newcommand{\prob}{\mathsf{prob}}
\newcommand{\arm}{\mathsf{arm}}
\newcommand{\lpt}{\mathbb{T}}
\newcommand{\depth}{\mathsf{depth}}
\newcommand{\factor}{3}
\newcommand{\E}{\mathbb{E}}
\newcounter{note}[section]
\newcommand{\qedsymb}{\hfill{\rule{2mm}{2mm}}}
\renewenvironment{proof}{\begin{trivlist} \item[\hspace{\labelsep}{\bf
\noindent Proof.\/}] }{\qedsymb\end{trivlist}}%
\newcommand{\initOneLiners}{%
    \setlength{\itemsep}{0pt}
    \setlength{\parsep }{0pt}
    \setlength{\topsep }{0pt}
}
\newenvironment{OneLiners}[1][\ensuremath{\bullet}]
    {\begin{list}
        {#1}
        {\initOneLiners}}
    {\end{list}}
\begin{document}

\title{Approximation Algorithms for Correlated Knapsacks \\ and
  Non-Martingale Bandits}
\author{
Anupam Gupta\thanks{Deparment of Computer Science, Carnegie Mellon University, Pittsburgh
    PA 15213.}
\and
Ravishankar Krishnaswamy$^*$
\and
Marco Molinaro\thanks{Tepper School of Business, Carnegie Mellon University, Pittsburgh
    PA 15213.}
\and
R. Ravi$^\dagger$}
\date{}

\maketitle

\begin{abstract}

In the stochastic knapsack problem, we are given a knapsack of size $B$,
and a set of jobs whose sizes and rewards are drawn from a known
probability distribution. However, the only way to know the actual size
and reward is to schedule the job---when it completes, we get to know
these values. How should we schedule jobs to maximize the expected total
reward? We know constant-factor approximations for this problem when we
assume that rewards and sizes are independent random variables, and that
we cannot prematurely cancel jobs after we schedule them. What can we
say when either or both of these assumptions are changed?

The stochastic knapsack problem is of interest in its own right, but
techniques developed for it are applicable to other stochastic packing
problems. Indeed, ideas for this problem have been useful for budgeted
learning problems, where one is given several arms which evolve in a
specified stochastic fashion with each pull, and the goal is to pull the
arms a total of $B$ times to maximize the reward obtained.  Much recent
work on this problem focus on the case when the evolution of the arms
follows a martingale, i.e., when the expected reward from the future is
the same as the reward at the current state.  What can we say when the
rewards do not form a martingale?

In this paper, we give constant-factor approximation algorithms for the
stochastic knapsack problem with correlations and/or cancellations, and
also for budgeted learning problems where the martingale condition
is not satisfied, using similar ideas. Indeed, we can show that
previously proposed linear programming relaxations for these problems
have large integrality gaps. We propose new time-indexed LP relaxations;
using a decomposition and ``gap-filling'' approach, we convert these
fractional solutions to distributions over strategies, and then use the
LP values and the time ordering information from these strategies to
devise a randomized adaptive scheduling algorithm.  We hope our LP formulation
and decomposition methods may provide a new way to address other
correlated bandit problems with more general contexts.

\end{abstract}

\thispagestyle{empty}
\setcounter{page}{0}
\newpage

\section{Introduction}
\label{sec:introduction}

Stochastic packing problems seem to be conceptually harder than their
deterministic counterparts---imagine a situation where some rounding
algorithm outputs a solution in which the budget constraint has been
exceeded by a constant factor. For deterministic packing problems (with
a single constraint), one can now simply pick the most profitable subset
of the items which meets the packing constraint; this would give us a
profit within a constant of the optimal value. The deterministic packing
problems not well understood are those with multiple (potentially
conflicting) packing constraints.

However, for the stochastic problems, even a single packing constraint
is not simple to handle. Even though they arise in diverse situations,
the first study from an approximations perspective was in an important
paper of Dean et al.~\cite{DeanGV08} (see also~\cite{dgv05,
  Dean-thesis}). They defined the stochastic knapsack problem, where
each job has a random size and a random reward, and the goal is to give
an adaptive strategy for irrevocably picking jobs in order to maximize
the expected value of those fitting into a knapsack with size $B$---they
gave an LP relaxation and rounding algorithm, which produced
\emph{non-adaptive} solutions whose performance was surprisingly within
a constant-factor of the best \emph{adaptive} ones (resulting in a
constant adaptivity gap, a notion they also introduced). However, the
results required that (a)~the random rewards and sizes for items were
independent of each other, and (b)~once a job was placed, it could not
be prematurely canceled---it is easy to see that these assumptions
change the nature of the problem significantly.

The study of the stochastic knapsack problem was very influential---in
particular, the ideas here were used to obtain approximation algorithms
for \emph{budgeted learning problems} studied by Guha and
Munagala~\cite{GuhaM-soda07,GuhaM-stoc07,GuhaM09} and Goel et
al.~\cite{GoelKN09}, among others. They considered problems in the
multi-armed bandit setting with $k$ arms, each arm evolving according to
an underlying state machine with probabilistic transitions when pulled.
Given a budget $B$, the goal is to pull arms up to $B$ times to maximize
the reward---payoffs are associated with states, and the reward is some
function of payoffs of the states seen during the evolution of the
algorithm.  (E.g., it could be the sum of the payoffs of all states
seen, or the reward of the best final state, etc.) The above papers gave
$O(1)$-approximations, index-based policies and adaptivity gaps for
several budgeted learning problems. However, these results all required
the assumption that the rewards satisfied a \emph{martingale property},
namely, if an arm is some state $u$, one pull of this arm would bring an
expected payoff equal to the payoff of state $u$ itself --- the
motivation for such an assumption comes from the fact that the different
arms are assumed to be associated with a fixed (but unknown) reward, but
we only begin with a prior distribution of possible rewards.  Then, the
expected reward from the next pull of the arm, \emph{conditioned} on the
previous pulls, forms a Doob martingale.

However, there are natural instances where the martingale property need
not hold. For instance, the evolution of the prior could not just depend
on the observations made but on external factors (such as time) as
well. Or, in a marketing application, the evolution of a customer's
state may require repeated ``pulls'' (or marketing actions) before the
customer transitions to a high reward state and makes a purchase, while
the intermediate states may not yield any reward.
These lead us to consider the following problem: there are a collection
of $n$ arms, each characterized by an arbitrary (known) Markov chain,
and there are rewards associated with the different states. When we play
an arm, it makes a state transition according to the associated Markov
chain, and fetches the corresponding reward of the new state. What
should our strategy be in order to maximize the expected total reward we
can accrue by making at most $B$ pulls in total?

\subsection{Results} Our main results are the following: We give the first
constant-factor approximations for the general version of the stochastic
knapsack problem where rewards could be correlated with the sizes.
Our techniques are general and also apply to the setting when jobs could be canceled arbitrarily.
We then extend those ideas to give the first
constant-factor approximation algorithms for a class of budgeted learning
problems with Markovian transitions where the martingale property is not satisfied. We summarize these in \lref[Table]{tab:results}.

\begin{table}
\begin{center}
\begin{tabular}{ | l | l | l | l | }
    \hline
   Problem& Restrictions &  Paper  \\ \hline
     Stochastic Knapsack & Fixed Rewards, No Cancellation &  \cite{dgv05} \\ \hline
      & Correlated Rewards, No Cancellation &  \lref[Section]{sec:nopmtn} \\ \hline
      & Correlated Rewards, Cancellation &  \lref[Section]{sec:sk} \\ \hline
     Multi-Armed Bandits & Martingale Assumption &  \cite{GuhaM-soda07} \\ \hline
      & No Martingale Assumption &  \lref[Section]{sec:mab} \\ \hline
    \end{tabular}
 \caption{Summary of Results}\label{tab:results}
\end{center}
\end{table}

\subsection{Why Previous Ideas Don't Extend, and Our Techniques}
\label{sec:high-level-idea}

One reason why stochastic packing problems are more difficult than their
deterministic counterparts is that, unlike in the deterministic setting,
here we cannot simply take a solution with expected reward $R^*$ that
packs into a knapsack of size $2B$ and convert it (by picking a subset
of the items) into a solution which obtains a constant fraction of the
reward $R^*$ whilst packing into a knapsack of size $B$. In fact, there
are examples where a budget of $2B$ can fetch much more reward than what
a budget of size $B$ can (see
\lref[Appendix]{sec:badness-corr}). Another distinction from
deterministic problems is that allowing cancellations can drastically
increase the value of the solution (see
\lref[Appendix]{sec:badness-cancel}). The model used in previous works
on stochastic knapsack and on budgeted learning circumvented both
issues---in contrast, our model forces us to address them.

\textbf{Stochastic Knapsack:} Dean et
al.~\cite{DeanGV08, Dean-thesis} assume that the reward/profit of an
item is independent of its stochastic size. Moreover, their model does
not consider the possibility of canceling jobs in the middle. These assumptions
simplify the structure of the decision tree and make it possible to
formulate a (deterministic) knapsack-style LP, and round it. However,
as shown in \lref[Appendix]{sec:egs}, their LP relaxation performs
poorly when either correlation or cancellation is allowed. This is the first
issue we need to address.

\textbf{Budgeted Learning:} Obtaining approximations for budgeted
learning problems is a more complicated task, since cancellations maybe inherent in the problem formulation, i.e., any strategy would stop playing a particular arm and switch to another, and the rewards by playing any arm are
naturally correlated with the (current) state and hence the number of previous pulls made on the
item/arm. The first issue
is often tacked by using more elaborate LPs with a flow-like structure
that compute a probability distribution over the different times at which the LP stops playing an arm (e.g., \cite{GuhaM-stoc07}), but the latter issue is less understood.
Indeed, several papers on this topic present strategies that fetch an
expected reward which is a constant-factor of an optimal
solution's reward, but which may violate the budget by a constant
factor. In order to obtain an approximate solution without violating the
budget, they critically make use of the \emph{martingale
  property}---with this assumption at hand, they can truncate the last
arm played to fit the budget without incurring any loss in expected
reward.  However, such an idea fails when the martingale property is not
satisfied, and these LPs now have large integrality gaps (see
\lref[Appendix]{sec:badness-corr}).

At a high level, a major drawback with previous LP relaxations for both
problems is that the constraints are \emph{local} for each arm/job,
i.e., they track the probability distribution over how long each
item/arm is processed (either till completion or cancellation), and there
is an additional global constraint binding the total number of
pulls/total size across items. This results in two different issues. For
the (correlated) stochastic knapsack problem, these LPs do not capture
the case when all the items have high contention, since they want to
play early in order to collect profit. And for the general multi-armed
bandit problem, we show that no local LP can be good since such LPs do
not capture the notion of \emph{preempting} an arm, namely switching
from one arm to another, and possibly returning to the original arm
later later. Indeed, we show cases when any near-optimal strategy must
switch between different arms (see
\lref[Appendix]{sec:preemption-gap})---this is a major difference from
previous work with the martingale property where there exist
near-optimal strategies that never return to any arm~\cite[Lemma
2.1]{GuhaM09}. At a high level, the lack of the martingale property
means our algorithm needs to make adaptive decisions, where each move is
a function of the previous outcomes; in particular this may involve
revisiting a particular arm several times, with interruptions in the
middle.  

We resolve these issues in the following manner: incorporating
cancellations into stochastic knapsack can be handled by just adapting
the flow-like LPs from the multi-armed bandits case. To resolve the
problems of contention and preemption, we formulate a \emph{global
  time-indexed} relaxation that forces the LP solution to commit each
job to begin at a time, and places constraints on the maximum expected
reward that can be obtained if the algorithm begins an item a particular
time. Furthermore, the time-indexing also enables our rounding scheme to
extract information about when to preempt an arm and when to re-visit it
based on the LP solution; in fact, these decisions will possibly be
different for different (random) outcomes of any pull, but the LP
encodes the information for each possibility. We believe that our
rounding approach may be of interest in other applications in Stochastic
optimization problems.


Another important version of budgeted learning is when we are allowed to
make up to $B$ plays as usual but now we can ``exploit'' at most $K$ times: reward is only fetched when an arm is exploited and again depends on its current state. There is a further constraint that once an arm is exploited, it must then be discarded.
The LP-based approach here can be easily extended to that case as well.

\subsection{Roadmap}
We begin in \lref[Section]{sec:nopmtn} by presenting a constant-factor approximation algorithm for the stochastic knapsack problem (\sks) when rewards could be correlated with the sizes, but decisions are irrevocable, i.e., job cancellations are not allowed.
Then, we build on these ideas in \lref[Section]{sec:sk}, and present our results
for the (correlated) stochastic knapsack problem, where job cancellation is allowed.

In \lref[Section]{sec:mab}, we move on to the more general class of multi-armed bandit (\mab) problems.
For clarity in exposition, we present our algorithm for \mab, assuming that
the transition graph for each arm is an \emph{arborescence} (i.e., a directed tree), and then
generalize it to arbitrary transition graphs in
\lref[Section]{dsec:mab}.

We remark that while our LP-based approach for
the budgeted learning problem implies approximation algorithms for the
stochastic knapsack problem as well, the knapsack problem provides a
gentler introduction to the issues---it motivates and gives insight into
our techniques for \mab. Similarly, it is easier to understand our techniques for
the \mab problem when the transition graph of each arm's Markov chain is
a tree.
Several illustrative examples are presented in
\lref[Appendix]{sec:egs}, e.g., illustrating why we need adaptive
strategies for the non-martingale \mab problems, and why some natural ideas do not work.
Finally, the extension of our algorithm for \mab for the case when rewards are available only when the arms are explicitly exploited with budgets on both the exploration and exploitation pulls
appear in \lref[Appendix]{xsec:mab}. Note that this algorithm strictly generalizes the previous work on budgeted learning for \mab with the martingale property~\cite{GuhaM-stoc07}. 

\subsection{Related Work}
\label{sec:related-work}

Stochastic scheduling problems have been long studied since the 1960s
(e.g.,~\cite{BirgeL97, Pinedo}); however, there are fewer papers on
approximation algorithms for such problems. Kleinberg et
al.~\cite{KRT-sched}, and Goel and Indyk~\cite{GI99} consider stochastic
knapsack problems with chance constraints: find the max-profit set which
will overflow the knapsack with probability at most $p$. However, their
results hold for deterministic profits and specific size distributions.
Approximation algorithms for minimizing average completion times with
arbitrary job-size distributions was studied by~\cite{MohringSU99,
  SkutU01}. The work most relevant to us is that of Dean, Goemans and
Vondr\'ak~\cite{DeanGV08, dgv05, Dean-thesis} on stochastic knapsack and
packing; apart from algorithms (for independent rewards and sizes), they
show the problem to be PSPACE-hard when correlations are allowed.
\cite{ChawlaR06} study stochastic flow problems. Recent work of Bhalgat
et al.~\cite{BGK11} presents a PTAS but violate the capacity by a factor
$(1+\epsilon)$; they also get better constant-factor approximations
without violations.

The general area of learning with costs is a rich and diverse one (see,
e.g.,~\cite{Bert05,Gittins89}). Approximation algorithms start with the
work of Guha and Munagala~\cite{GuhaM-stoc07}, who gave LP-rounding
algorithms for some problems.  Further papers by these
authors~\cite{GuhaMS07, GuhaM09} and by Goel et al.~\cite{GoelKN09} give
improvements, relate LP-based techniques and index-based policies and
also give new index policies. (See also~\cite{GGM06,GuhaM-soda07}.)
\cite{GuhaM09} considers switching costs, \cite{GuhaMP11} allows pulling
many arms simultaneously, or when there is delayed feedback.  All these
papers assume the martingale condition.


\newcommand{\er}{\mathsf{ER}}
\newcommand{\skssmall}{{\textsf{StocK-Small}}\xspace}
\newcommand{\skslarge}{{\textsf{StocK-Large}}\xspace}
\newcommand{\sksnocancel}{{\textsf{StocK-NoCancel}}\xspace}

\section{The Correlated Stochastic Knapsack without Cancellation} \label{sec:nopmtn}

We begin by considering the stochastic knapsack problem (\sks), when the job rewards may be correlated with its size.
This generalizes the problem studied by Dean et al. \cite{dgv05} who assume that the rewards are independent of the size of the job.
We first explain why the LP of~\cite{dgv05} has a large integrality gap for our problem; this will naturally motivate our time-indexed formulation.
We then present a simple randomized rounding algorithm which produces a non-adaptive strategy and show that it is an $O(1)$-approximation.

\subsection{Problem Definitions and Notation}
\label{sec:knap-model}

We are given a knapsack of total budget $B$ and a collection of $n$
stochastic items. For any item $i \in [1,n]$, we are given a probability
distribution over $(\mathsf{size}, \mathsf{reward})$ pairs specified as
follows: for each integer value of $t \in [1,B]$, the tuple $(\pi_{i,t},
R_{i,t})$ denotes the probability $\pi_{i,t}$ that item $i$ has a size
$t$, and the corresponding reward is $R_{i,t}$. Note that the reward for a
job is now correlated to its size; however, these quantities for two
different jobs are still independent of each other.

An algorithm to \emph{adaptively} process these items can do the
following actions at the end of each timestep;
\begin{inparaenum}
\item[(i)] an item may complete at a certain size, giving us the
  corresponding reward, and the algorithm may choose a new item to start
  processing, or
\item[(ii)] the knapsack becomes full, at which point the algorithm
  cannot process any more items, and any currently running job does not
  accrue any reward.
\end{inparaenum}
The objective function is to maximize the total expected reward obtained
from all completed items. Notice that we do not allow the algorithm to cancel an item before it completes. We relax this requirement in \lref[Section]{sec:sk}.

\subsection{LP Relaxation}
	The LP relaxation in~\cite{dgv05} was (essentially) a knapsack LP where the sizes of items are replaced by the expected sizes, and the rewards are replaced by the expected rewards. While this was sufficient when an item's reward is fixed (or chosen randomly but independent of its size), we give an example in \lref[Appendix]{sec:badness-corr} where such an LP (and in fact, the class of more general LPs used for approximating \mab problems) would have a large integrality gap. As mentioned in \lref[Section]{sec:high-level-idea}, the reason why local LPs don't work is that there could be high contention for being scheduled early (i.e., there could be a large number of items which all fetch reward if they instantiate to a large size, but these events occur with low probability). In order to capture this contention, we write a global time-indexed LP relaxation.

	The variable $x_{i,t} \in [0,1]$ indicates that item $i$ is
scheduled at (global) time $t$; $S_i$ denotes the random variable for
the size of item $i$, and $\er_{i,t} = \sum_{s
\le B - t} \pi_{i,s} R'_{i,s}$ captures the expected reward that can
be obtained from item $i$ \emph{if it begins} at time $t$; (no reward is obtained for sizes that cannot
fit the (remaining) budget.) 	%
	\begin{alignat}{2} \tag{$\mathsf{LP}_{\sf NoCancel}$} \label{lp:large}
		\max &\ts \sum_{i,t} \er_{i,t} \cdot x_{i,t} &\\
		&\ts \sum_t x_{i,t} \le 1 &\forall i \label{LPbig1}\\
		&\ts \sum_{i, t' \le t} x_{i,t'} \cdot \E[\min(S_i,t)]
                \le 2t  \qquad &\forall t \in [B] \label{LPbig2}\\
		&x_{i,t} \in [0,1] &\forall t \in [B], \forall i \label{LPbig3}
	\end{alignat}	

While the size of the above LP (and the running time of the rounding
algorithm below) polynomially depend on $B$, i.e., pseudo-polynomial, it
is possible to write a compact (approximate) LP and then round it;
details on the polynomial time implementation appear in \lref[Appendix]{app:polytime-nopmtn}.

Notice the constraints involving the \emph{truncated random variables} in equation~\eqref{LPbig2}: these are crucial for showing the correctness of the rounding algorithm \sksnocancel. Furthermore, the ideas used here will appear subsequently in the \mab algorithm later; for \mab, even though we can't explicitly enforce such a constraint in the LP, we will end up inferring a similar family of inequalities from a near-optimal LP solution.

\begin{lemma} \label{thm:lp-large-valid}
The above relaxation is valid for the \sks problem when cancellations are not permitted, and has objective value $\LPOpt \geq \Opt$, where $\Opt$ is the expected profit of an optimal adaptive policy.
\end{lemma}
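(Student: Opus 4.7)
The plan is to exhibit a fractional feasible LP solution whose objective value equals the expected reward of an optimal adaptive policy. Fix such a policy, and let $x^*_{i,t}$ denote the probability (over the random sizes of all items) that the policy begins processing item $i$ at global time $t$. I would then argue that (a)~the LP objective evaluated at $x^*$ is exactly $\Opt$, and (b)~$x^*$ satisfies the constraints \eqref{LPbig1}--\eqref{LPbig3}, which together imply $\LPOpt \ge \Opt$.

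The objective value is immediate from linearity and independence: conditioned on the policy starting item $i$ at time $t$, the event "$i$ starts at $t$" depends only on the histories of items processed before $i$, while $S_i$ is independent of that history. Hence the conditional expected reward extracted from $i$ is $\sum_{s \le B-t} \pi_{i,s} R_{i,s} = \er_{i,t}$, since only completions that fit within the remaining budget accrue reward. Summing $x^*_{i,t} \cdot \er_{i,t}$ over $i$ and $t$ therefore recovers $\Opt$. For constraint~\eqref{LPbig1}, the no-cancellation assumption means that on each sample path the policy starts item $i$ at most once, so $\sum_t x^*_{i,t}$ equals the probability that $i$ is ever started and is at most $1$; the box constraints \eqref{LPbig3} are trivial.

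The main obstacle is verifying \eqref{LPbig2}, and the factor of $2$ on the right-hand side is exactly what makes the argument go through. I would fix $t \in [B]$ and, on any realization of the policy, look at the time window $[1, 2t]$. An item $i$ that the policy starts at some time $t' \le t$ occupies a contiguous block of this window of length at least $\min(S_i, t)$: if $S_i \le t$ then the entire execution of $i$ sits inside the window, and if $S_i > t$ then $i$ begins at $t' \le t$ and runs for at least $t$ more time units before either completing or being truncated by budget exhaustion, all of which falls inside $[1, 2t]$. Since at most one job runs at any instant, the total occupied length within $[1, 2t]$ is at most $2t$. Taking expectations over the policy's randomness yields
\[ \sum_{i}\sum_{t' \le t} x^*_{i,t'}\cdot \E[\min(S_i,t)] \;\le\; 2t, \]
which is precisely \eqref{LPbig2}. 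The reason the doubled window is necessary is that if we argued with the tighter window $[1,t]$ we would only obtain the truncation $\E[\min(S_i, t-t'+1)]$, depending awkwardly on the start time $t'$; inflating the window to $[1,2t]$ lets us replace this quantity by the uniform $\E[\min(S_i,t)]$ appearing in the LP, at the cost of a factor of $2$ on the right-hand side.
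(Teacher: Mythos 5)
Your proposal is correct and follows essentially the same route as the paper: the same candidate solution $x^*$, the same independence observation used both for the objective and for passing to expectations, and the same factor-of-two slack in constraint~\eqref{LPbig2}, which the paper derives by isolating the last item started by time $t$ (bounding its truncated size by $t$ and the total size of the earlier, fully completed items by $t$) rather than via your doubled window $[1,2t]$. One phrasing caveat: an item with $S_i>t$ that is cut off by budget exhaustion need not literally run for $t$ steps inside $[1,2t]$, but since only the last item started can be so truncated and its contribution is anyway $\min(S_i,t)\le t$, the deterministic bound $\sum_{i}\min(S_i,t)\le 2t$ still holds exactly as in the paper's argument.
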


\begin{proof}
Consider an optimal policy $\Opt$ and let
$x^*_{i,t}$ denote the probability that item $i$ is scheduled
at time $t$. We first show that $\{x^*\}$ is a feasible solution
for the LP relaxation \ref{lp:large}. 	
	It is easy to see that constraints~\eqref{LPbig1} and~\eqref{LPbig3} are
satisfied. To prove that \eqref{LPbig2} are also satisfied, consider
some $t \in [B]$ and some run (over random choices of item sizes) of the optimal policy. Let $\mathbf{1}^{{\sf sched}}_{i,t'}$ be indicator variable
that item $i$ is scheduled at time $t'$ and let
$\mathbf{1}^{{\sf size}}_{i,s}$ be the indicator variable for whether the size of item $i$
is $s$. Also, let $L_t$ be the random variable indicating the
last item scheduled at or before time $t$. Notice that $L_t$
is the only item scheduled before or at time $t$ whose
execution may go over time $t$. Therefore, we get that
$$\sum_{i \neq L_t} \sum_{t' \le t} \sum_{s\leq B} \mathbf{1}^{{\sf sched}}_{i,t'} \cdot
\mathbf{1}^{{\sf size}}_{i,s} \cdot s   \le t.$$ Including $L_t$ in the summation and truncating the sizes by $t$, we immediately obtain $$\sum_i \sum_{t' \le t} \sum_{s} \mathbf{1}^{{\sf sched}}_{i,t'} \cdot
\mathbf{1}^{{\sf size}}_{i,s} \cdot \min(s, t) \le 2t.$$ Now, taking expectation (over all of \Opt's sample paths) on
both sides and using linearity of expectation we have $$\sum_i
\sum_{t' \le t} \sum_{s} \E \left[\mathbf{1}^{{\sf sched}}_{i,t'} \cdot \mathbf{1}^{{\sf size}}_{i,s}\right]
\cdot \min(s,t) \le 2t.$$

However, because $\Opt$ decides whether to schedule an item before observing the size it instantiates to, we have that
$\mathbf{1}^{{\sf sched}}_{i,t'}$ and $\mathbf{1}^{{\sf size}}_{i,s}$ are independent random variables; hence, the LHS above can be re-written as
	\begin{align*}
		&\sum_i \sum_{t' \le t} \sum_s \Pr[\mathbf{1}^{{\sf sched}}_{i,t'} = 1
\wedge \mathbf{1}^{{\sf size}}_{i,s} = 1] \min(s,t) \\
		 &= \sum_i \sum_{t' \le t} \Pr[\mathbf{1}^{{\sf sched}}_{i,t'} = 1]
\sum_s \Pr[\mathbf{1}^{{\sf size}}_{i,s} = 1] \min(s,t) \\
		 &= \sum_i \sum_{t' \le t} x^*_{i,t'} \cdot
\E[\min(S_i,t)]
	\end{align*}
	Hence constraints \eqref{LPbig2} are satisfied. 	
	Now we argue that the expected reward of $\Opt$ is equal to
the value of the solution $x^*$. Let $O_i$ be the random variable denoting the reward
obtained by $\Opt$ from item $i$. Again, due to the independence
between  $\Opt$ scheduling an item and the size it instantiates to, we get
that the expected reward that $\Opt$ gets from executing item
$i$ at time $t$ is $$\E[O_i | \mathbf{1}^{{\sf sched}}_{i,t} = 1] = \sum_{s
\le B - t} \pi_{i,s} R_{i,s} = \er_{i,t}.$$ Thus the expected
reward from item $i$ is obtained by considering all possible
starting times for $i$:
	\begin{align*}
		\E[O_i] = \sum_t \Pr[\mathbf{1}^{{\sf sched}}_{i,t} = 1] \cdot \E[O_i |
\mathbf{1}^{{\sf sched}}_{i,t} = 1] = \sum_t \er_{i,t} \cdot x^*_{i,t}.
	\end{align*} 	
	This shows that \ref{lp:large} is a valid relaxation for
our problem and completes the proof of the lemma. 	
\end{proof}

We are now ready to present our rounding algorithm \sksnocancel (\lref[Algorithm]{alg:sksnocancel}). It a simple randomized rounding procedure which (i)  picks the start time of each item according to the corresponding distribution in the optimal LP solution, and (ii) plays the items in order of the (random) start times. To ensure that the budget is not violated, we also drop each item independently with some constant probability.
\begin{algorithm}[ht!]
  \caption{Algorithm \sksnocancel}
  \begin{algorithmic}[1]
  \label{alg:sksnocancel}

    \STATE for each item $i$, \textbf{assign} a random start-time $D_i = t$ with
    probability $\frac{x^*_{i,t}}{4}$; with probability $1 - \sum_{t} \frac{x^*_{i,t}}{4}$, completely ignore item $i$ ($D_i = \infty$ in this case).  \label{alg:big1}


    \FOR{$j$ from $1$ to $n$}
    	\STATE Consider the item $i$ which has the $j$th smallest deadline (and $D_i \neq \infty$) \label{alg:big2}

    	\IF{the items added so far to the knapsack occupy at most $D_i$ space}
    		\STATE add $i$ to the knapsack. \label{alg:big3}
    	\ENDIF \label{alg:big4}
    \ENDFOR
\end{algorithmic}
\end{algorithm}

	Notice that the strategy obtained by the rounding procedure
obtains reward from all items which are not dropped and which
do not fail (i.e. they can start being scheduled before the sampled start-time $D_i$ in \lref[Step]{alg:big1}); we now bound the failure probability.
	\begin{lemma} \label{lem:big-fail}
		For every $i$, $\Pr(i~\mathsf{fails} \mid  D_i = t) \le 1/2$.
	\end{lemma}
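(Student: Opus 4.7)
The plan is to reduce the failure event to a statement about truncated item sizes, apply Markov's inequality, and then invoke the LP constraint~\eqref{LPbig2}. The factor of $4$ in the sampling step and the factor of $2$ on the right-hand side of~\eqref{LPbig2} will combine to give the desired $1/2$ bound.

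First, I would note that the algorithm considers items in increasing order of $D_j$, so any item that was placed in the knapsack before $i$ must have $D_j < D_i = t$ (with ties broken arbitrarily; absorbing ties into $D_j \le t$ only weakens the bound). If $i$ fails, the sizes already packed exceed $t$, and these sizes are at most the sizes of all non-dropped items with $D_j \le t$. Hence, writing $S_j$ for the realized size of item $j$,
\[
\{i \text{ fails}\} \;\subseteq\; \Big\{\sum_{j \ne i,\ D_j \le t} S_j > t\Big\} \;\subseteq\; \Big\{\sum_{j \ne i,\ D_j \le t} \min(S_j, t) \ge t\Big\}.
\]
The second containment is the key truncation step and is elementary: if $\sum_j S_j > t$ with $S_j \ge 0$, then either some $S_j \ge t$ (which alone contributes $t$ after truncation) or every $S_j < t$ (so truncation changes nothing).

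Next, I would apply Markov's inequality to the truncated sum. Two independence facts are needed: the random deadlines $\{D_j\}_{j \ne i}$ are drawn independently of $D_i$, so conditioning on $D_i = t$ does not affect them, and for each $j$ the deadline $D_j$ is independent of the stochastic size $S_j$. These yield
\[
\E\!\left[\sum_{j \ne i,\ D_j \le t} \min(S_j, t) \,\Big|\, D_i = t\right] = \sum_{j \ne i}\sum_{t' \le t} \frac{x^*_{j,t'}}{4}\cdot \E[\min(S_j, t)].
\]

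Finally, I would bound this using LP constraint~\eqref{LPbig2}, which asserts that $\sum_i \sum_{t' \le t} x^*_{i,t'}\cdot \E[\min(S_i, t)] \le 2t$. Dividing by $4$ shows that the conditional expectation above is at most $t/2$, and Markov's inequality then gives $\Pr[i \text{ fails} \mid D_i = t] \le (t/2)/t = 1/2$. The only mildly delicate point in the whole argument is the truncation inclusion in the first paragraph; indeed, this is precisely why \eqref{LPbig2} carries the factor of $2$ on its right-hand side rather than $1$.
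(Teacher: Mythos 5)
Your proof is correct and follows essentially the same route as the paper's: the paper defines $Z_j$ as the portion of $[0,t]$ occupied by item $j$, bounds it by $\mathbf{1}_{D_j \le t}\min(S_j,t)$, and applies independence, linearity, constraint~\eqref{LPbig2}, and Markov's inequality exactly as you do. Your explicit justification of the truncation containment is a slightly more spelled-out version of the same step.
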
 	
	\begin{proof}
		Consider an item $i$ and time $t \neq \infty$ and condition on the event that $D_i = t$.
Let us consider
the execution of the algorithm when it tries to add item $i$ to
the knapsack in \lref[steps]{alg:big2}-\ref{alg:big4}. Now, let
$Z$ be a random variable denoting \emph{how much of the interval} $[0,t]$ of
the knapsack is occupied  by previously scheduling items, at the time when $i$ is considered for addition; since $i$ does not fail
when $Z < t$, it suffices to prove that $\Pr(Z \ge t) \le 1/2$.
		
For some item $j \neq i$, let $\mathbf{1}_{D_j \le t}$ be the indicator variable that $D_j \le t$;
notice that by the order in which algorithm \sksnocancel adds items into the knapsack, it is also the indicator that $j$ was considered before $i$.
In addition, let $\mathbf{1}^{{\sf size}}_{j,s}$ be the indicator variable that $S_j
= s$. Now, if $Z_j$ denotes the total amount of
the interval $[0,t]$ that that $j$ occupies, we have
$$ Z_j \le \mathbf{1}_{D_j \le t}
\sum_s \mathbf{1}^{{\sf size}}_{j,s} \min(s, t).$$
 Now, using the independence of
$\mathbf{1}_{D_j \le t}$ and $\mathbf{1}^{{\sf size}}_{j,s}$, we have
		%
		\begin{equation}
			\E[Z_j] \ts \le \E[\mathbf{1}_{D_j \le t}] \cdot \E[\min(S_j,
t)] =  \frac{1}{4} \sum_{t' \le t} x^*_{j,t'} \cdot
\E[\min(S_j, t)]
		\end{equation}
		Since $Z = \sum_j Z_j$, we can use linearity of expectation
and the fact that $\{x^*\}$ satisfies LP constraint~\eqref{LPbig2} to get
		\begin{align*}
			\E[Z] &\ts \le \frac{1}{4} \sum_j \sum_{t' \le t}
x^*_{j,t'} \cdot \E[\min(S_j, t)] \le \frac{t}{2}\;.
		\end{align*}
		To conclude the proof of the lemma, we apply Markov's
inequality to obtain $\Pr(Z \ge t) \le 1/2$.
	\end{proof}
To complete the analysis, we use the fact that any item chooses a random start time $D_i = t$ with probability $x^*_{i,t}/4$, and conditioned on this event, it is added to the knapsack with probability at least $1/2$ from \lref[Lemma]{lem:big-fail}; in this case, we get an expected reward of at least $\er_{i,t}$. The theorem below (formally proved in \lref[Appendix]{app:nopmtn-proof} then follows by linearity of expectations.

		\begin{theorem}\label{thm:large}
		The expected reward of our randomized algorithm is at
least $\frac18$ of $\LPOpt$.
	\end{theorem}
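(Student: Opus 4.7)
The plan is to use linearity of expectation over items and, for each item $i$, condition on its sampled start time $D_i = t$. By construction this event has probability $x^*_{i,t}/4$, and conditioned on it, \lref[Lemma]{lem:big-fail} already guarantees that $i$ is inserted into the knapsack with probability at least $1/2$. So the remaining task is to lower bound the expected reward $i$ contributes, conditioned on both $D_i=t$ and $i$ not failing.

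The key observation is that when $i$ is inserted, it actually begins at some time $\tau \le D_i = t$: \sksnocancel{} processes items in increasing order of $D_j$ and only starts $i$ if the space used so far is at most $D_i$. Hence, once $i$ starts, the knapsack has remaining capacity at least $B-t$, so every size realization $s \le B-t$ completes and earns reward $R_{i,s}$. Moreover, the deadlines $\{D_j\}$ in \lref[Step]{alg:big1} are sampled independently of the sizes $\{S_j\}$, so conditioning on $D_i = t$ and on the scheduling outcomes of earlier items (which depend only on other deadlines and other items' sizes) does not bias $S_i$. Therefore
$$\E\bigl[\text{reward from } i \,\big|\, D_i = t,\; i \text{ does not fail}\bigr] \;\ge\; \sum_{s \le B - t} \pi_{i,s} R_{i,s} \;=\; \er_{i,t}.$$

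Combining the three factors (sampling probability $x^*_{i,t}/4$, non-failure probability $\ge 1/2$, conditional reward $\ge \er_{i,t}$) and summing over $t$, the expected reward the algorithm obtains from item $i$ is at least
$$\sum_{t} \frac{x^*_{i,t}}{4}\cdot \frac{1}{2}\cdot \er_{i,t} \;=\; \frac{1}{8}\sum_{t} x^*_{i,t}\cdot \er_{i,t}.$$
Summing over $i$ and applying linearity of expectation gives total expected reward at least $\tfrac{1}{8}\sum_{i,t} \er_{i,t}\, x^*_{i,t} = \tfrac{1}{8}\,\LPOpt$. I do not anticipate a substantive obstacle: the failure probability bound is already handled by \lref[Lemma]{lem:big-fail}, and the only subtle point is the conditional independence argument justifying the $\er_{i,t}$ lower bound, which follows directly from the fact that deadlines are drawn obliviously to the size realizations.
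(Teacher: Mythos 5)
Your proposal is correct and follows essentially the same route as the paper's proof (in \lref[Appendix]{app:nopmtn-proof}): condition on $D_i=t$, invoke \lref[Lemma]{lem:big-fail} for the $1/2$ insertion probability, and use the independence of $S_i$ from the deadlines and the other items' sizes to conclude that the conditional expected reward is at least $\er_{i,t}$ since the item starts no later than $t$. Your explicit justification of the conditional-independence step is, if anything, slightly more careful than the paper's one-line parenthetical.
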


\section{Stochastic Knapsack with Correlated Rewards and Cancellations}
\label{sec:sk}

In this section, we present our algorithm for stochastic knapsack (\sks)
where we allow correlations between rewards and sizes, and also allow
cancellation of jobs.
%
The example in \lref[Appendix]{sec:badness-cancel} shows that there can be an arbitrarily large gap in the expected profit between strategies that can cancel jobs and those that can't. Hence we need to write new LPs to capture the benefit of cancellation, which we do in the following manner.

Consider any job $j$: we can create two jobs from it, the ``early''
version of the job, where we discard profits from any instantiation
where the size of the job is more than $B/2$, and the ``late'' version
of the job where we discard profits from instantiations of size
at most $B/2$. Hence, we can get at least half the optimal value by
flipping a fair coin and either collecting rewards from either the early
or late versions of jobs, based on the outcome. In the next section,
we show how to obtain a constant factor approximation for the first
kind. For the second kind, we argue that cancellations don't help; we can then reduce it to \sks without cancellations (considered in \lref[Section]{sec:nopmtn}).



\subsection{Case I: Jobs with Early Rewards} \label{caseSmall}

We begin with the setting in which only small-size instantiations of items may fetch reward, i.e., the rewards $R_{i,t}$ of every item $i$ are assumed to be $0$ for $t > B/2$. 
In the following LP relaxation \ref{lpone}, $v_{i,t} \in [0,1]$ tries to
capture the probability with which $\Opt$ will process item $i$ for
\emph{at least} $t$ timesteps\footnote{In the following two sections, we
  use the word timestep to refer to processing one unit of some item.},
$s_{i,t} \in [0,1]$ is the probability that $\Opt$ stops processing item
$i$ \emph{exactly} at $t$ timesteps. The time-indexed formulation causes
the algorithm to have running times of $\poly(B)$---however, it is easy
to write compact (approximate) LPs and then round them; we describe the
necessary changes to obtain an algorithm with running time $\poly(n,
\log B)$ in \lref[Appendix]{app:polytime}.
\begin{alignat}{2}
  \max &\ts \sum_{1 \leq t \leq B/2} \sum_{1 \leq i \leq n}  v_{i,t} \cdot R_{i,t}  \frac{\pi_{i,t}}{\sum_{t' \geq t} \pi_{i,t'}}   & &
  \tag{$\mathsf{LP}_S$} \label{lpone} \\
  & v_{i,t} = s_{i,t} + v_{i,t+1} & \qquad & \forall \,
  t \in [0,B], \, i \in [n]   \label{eq:1} \\
  &s_{i,t} \geq  \frac{\pi_{i,t}}{\sum_{t' \geq t} \pi_{i,t'}} \cdot v_{i,t} & \qquad & \forall \, t \in
  [0,B], \, i \in [n] \label{eq:2} \\
  &\ts \sum_{i \in [n]} \sum_{t \in [0,B]} t \cdot s_{i,t}  \leq B
    &   \label{eq:3}\\
  &v_{i,0} = 1 & \qquad & \forall \, i \label{eq:4} \\
  v_{i,t}, s_{i,t} &\in [0,1] & \qquad & \forall \, t \in [0,B], \, i \in
  [n] \label{eq:5}
\end{alignat}

\begin{theorem}
  \label{thm:lp1-valid}
  The linear program~(\ref{lpone}) is a valid relaxation for the \sks
  problem, and hence the optimal value $\LPOpt$ of the LP  is at least
  the total expected reward $\Opt$ of an optimal solution.
\end{theorem}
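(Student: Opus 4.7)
The plan is to construct a feasible LP solution $(v^\ast, s^\ast)$ directly from an optimal adaptive policy $\Opt$ and to show that its LP objective value equals the expected reward of $\Opt$. For each item $i$ and $t \in [0,B]$, define $v^\ast_{i,t} = \Pr[\Opt \text{ processes item } i \text{ for at least } t \text{ timesteps}]$ and $s^\ast_{i,t} = \Pr[\Opt \text{ stops processing item } i \text{ after exactly } t \text{ timesteps}]$, where ``stops'' means either the job completes at size $t$ or $\Opt$ voluntarily cancels at step $t$.

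Most constraints are bookkeeping. Constraint (\ref{eq:1}) follows because the event ``$\Opt$ plays $i$ for $\geq t$ steps'' partitions disjointly into ``stops at exactly $t$'' and ``plays $i$ for $\geq t+1$ steps.'' Constraint (\ref{eq:4}) is the trivial convention that every item is processed for $\geq 0$ steps, and (\ref{eq:5}) holds because these are probabilities. For (\ref{eq:3}), I would observe that on every sample path of $\Opt$ the total processing time across all items is at most $B$, and the contribution of item $i$ to this total is exactly $t$ whenever $\Opt$ stops $i$ at step $t$; taking expectations gives $\sum_{i,t} t \cdot s^\ast_{i,t} \leq B$.

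The main obstacle is constraint (\ref{eq:2}). The key claim I would establish is that, conditional on the event $\{\Opt \text{ processes } i \text{ for at least } t \text{ steps}\}$, the distribution of item $i$'s true size $S_i$ is the original $\pi_i$-distribution truncated to $\{s : s \geq t\}$; equivalently, $\Pr[S_i = s \mid \Opt \text{ plays } i \text{ for } \geq t] = \pi_{i,s} / \sum_{t' \geq t} \pi_{i,t'}$ for $s \geq t$. This holds because $\Opt$'s decision at each step $k < t$ whether to continue playing $i$ depends only on (a) the event $\{S_i \geq k\}$ (equivalent to ``$i$ has not yet completed'') and (b) outcomes of other items, which are independent of $S_i$; so conditioning on ``$\Opt$ continues through step $t$'' contributes no information about $S_i$ beyond $\{S_i \geq t\}$. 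Since $S_i = t$ forces $\Opt$ to stop at step $t$, the probability of stopping at step $t$ is at least the probability of this completion event, giving $s^\ast_{i,t} \geq v^\ast_{i,t} \cdot \frac{\pi_{i,t}}{\sum_{t' \geq t} \pi_{i,t'}}$, which is exactly (\ref{eq:2}).

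Finally, the same conditioning argument shows that the expected reward $\Opt$ collects from item $i$ completing at size $t$ equals $R_{i,t} \cdot v^\ast_{i,t} \cdot \frac{\pi_{i,t}}{\sum_{t' \geq t} \pi_{i,t'}}$, since this is $R_{i,t}$ times the probability that $\Opt$ plays $i$ through step $t$ and the size instantiates to exactly $t$. Summing over all items $i$ and all sizes $t \leq B/2$ that can fetch reward in Case I recovers the LP objective, so $(v^\ast, s^\ast)$ is a feasible solution of value $\Opt$, proving $\LPOpt \geq \Opt$. The conditioning step behind (\ref{eq:2}) is the only delicate point; everything else is a direct translation of an expectation over $\Opt$'s randomness into a linear constraint.
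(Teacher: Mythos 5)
Your proposal is correct and follows essentially the same route as the paper's proof: define $(v^\ast,s^\ast)$ as the processing/stopping probabilities of $\Opt$, verify the flow and budget constraints by taking expectations over sample paths, and use the fact that conditioning on ``$\Opt$ processes $i$ for $\geq t$ steps'' leaves $S_i$ with the truncated distribution to get constraint~(\ref{eq:2}) and the objective value. Your explicit justification of that conditioning step (that $\Opt$'s continue/cancel decisions depend only on the event $\{S_i \geq k\}$ and on other, independent items) is the same observation the paper makes, just spelled out slightly more carefully.
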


\begin{proof}
  Consider an optimal solution $\Opt$ and let $v^*_{i,t}$
  and $s^*_{i,t}$ denote the probability that $\Opt$ processes item $i$
  for at least $t$ timesteps, and the probability that $\Opt$ stops
  processing item $i$ at exactly $t$ timesteps.  We will now show that all the constraints of ~\ref{lpone} are satisfied one by one.

To this end, let $R_i$ denote the random variable (over different executions of $\Opt$) for the amount of processing done on job $i$.
Notice that $ \Pr[R_i \geq t] = \Pr[R_i \geq (t+1)]  + \Pr[R_i = t]$.
But now, by definition we have $\Pr[R_i \geq t] = v^*_{i,t}$ and $\Pr[R_i = t] = s^*_{i,t}$.
This shows that $\{v^*, s^*\}$ satisfies these constraints.

  For the next constraint, observe that conditioned on $\Opt$ running an item
  $i$ for at least $t$ time steps, the probability of item $i$ stopping due to its size having
  instantiated to exactly equal to $t$ is $\pi_{i,t}/\sum_{t' \geq t}
  \pi_{i,t'}$, i.e.,
 $\Pr [ R_i = t \mid R_i \geq t ] \geq \pi_{i,t}/\sum_{t' \geq t} \pi_{i,t'}$.
 This shows that $\{v^*, s^*\}$ satisfies constraints~(\ref{eq:2}).

  Finally, to see why constraint~(\ref{eq:3}) is satisfied, consider any
  particular run of the optimal algorithm and let $\mathbf{1}^{stop}_{i,t}$
  denote the indicator random variable of the event $R_i = t$.
Then we have
  \[ \sum_{i} \sum_{t} \mathbf{1}^{stop}_{i,t} \cdot t \leq B \]
  Now, taking expectation over all runs of $\Opt$ and using linearity of
  expectation and the fact that $\E[\mathbf{1}^{stop}_{i,t}] = s^*_{i,t}$, we
  get constraint~(\ref{eq:3}).  As for the objective function, we again
  consider a particular run of the optimal algorithm and let
  $\mathbf{1}^{proc}_{i,t}$ now denote the indicator random variable for the event $(R_i \geq t)$,
and $\mathbf{1}^{size}_{i,t}$ denote the indicator variable for whether the size
  of item $i$ is instantiated to exactly $t$ in this run. Then we have the total
  reward collected by $\Opt$ in this run to be exactly
  \[ \sum_{i} \sum_{t} \mathbf{1}^{proc}_{i,t} \cdot \mathbf{1}^{size}_{i,t} \cdot
  R_{i,t} \]
  Now, we simply take the expectation of the above random variable over
  all runs of $\Opt$, and then use the following fact about
  $\E[\mathbf{1}^{proc}_{i,t} \mathbf{1}^{size}_{i,t}]$:
  \begin{eqnarray}
    \nonumber \E[\mathbf{1}^{proc}_{i,t} \mathbf{1}^{size}_{i,t}] &=& \Pr[\mathbf{1}^{proc}_{i,t} = 1 \wedge \mathbf{1}^{size}_{i,t} = 1]\\
    \nonumber & =&  \Pr[\mathbf{1}^{proc}_{i,t} = 1] \Pr[\mathbf{1}^{size}_{i,t} = 1 \, |\, \mathbf{1}^{proc}_{i,t} = 1] \\
    \nonumber & =& v^*_{i,t} \frac{\pi_{i,t}}{\sum_{t' \geq t} \pi_{i,t'}}
  \end{eqnarray}
  We thus get that the expected reward collected by $\Opt$ is exactly
  equal to the objective function value of the LP formulation for the
  solution $(v^*, s^*)$.
\end{proof}


Our rounding algorithm is very natural, and simply tries
to mimic the probability distribution (over when to stop each
item) as suggested by the optimal LP solution. To this end, let
$(v^*, s^*)$ denote an optimal fractional solution. The reason
why we introduce some damping (in the selection probabilities)
up-front is to make sure that we could appeal to Markov's inequality and ensure that the knapsack does not get
violated with good probability.

\begin{algorithm}[ht!]
  \caption{Algorithm \skssmall}
  \begin{algorithmic}[1]
  \label{alg:skssmall}
    \FOR{each item $i$}

    \STATE \textbf{ignore} $i$ with probability $1-1/4$ (i.e., do not schedule it at all). \label{alg:st:1}

    \FOR{$0 \leq t \leq B/2$}

    \STATE \textbf{cancel} item $i$ at this step with probability $\frac{s^*_{i,t}}{v^*_{i,t}} - \frac{\pi_{i,t}}{\sum_{t' \geq t} \pi_{i,t'}}$ and \textbf{continue} to next item.  \label{alg:st:2}


    \STATE process item $i$ for its $(t+1)^{st}$ timestep.  \label{alg:st:4}


    \IF{item $i$ terminates after being processed for exactly $(t+1)$ timesteps}

    \STATE \textbf{collect} a reward of $R_{i,t+1}$ from this item; \textbf{continue} onto next item;  \label{alg:st:5}

    \ENDIF

    \ENDFOR

    \ENDFOR
\end{algorithmic}
\end{algorithm}

Notice that while we let the algorithm proceed even if its budget is violated, we will
collect reward only from items that complete before time $B$. This simplifies the analysis a fair bit, both here and for the \mab algorithm.
In \lref[Lemma]{lem:stop-dist} below (proof in \lref[Appendix]{app:small}), we show that for any item that is not dropped in \lref[step]{alg:st:1}, its probability distribution over
stopping times is identical to the optimal LP solution $s^*$. We then use this to argue that the expected reward of our algorithm is $\Omega(1)\LPOpt$.
\begin{lemma} \label{lem:stop-dist}
Consider item $i$ that was not dropped in
\lref[step]{alg:st:1},
Then, for any timestep $t \geq 0$, the following hold:
\begin{OneLiners}
\item[(i)] The probability (including cancellation\&
  completion) of stopping at timestep $t$ for item $i$ is $s^*_{i,t}$.
\item[(ii)] The probability that item $i$ gets processed for
its
  $(t+1)^{st}$ timestep is exactly $v^*_{i,t+1}$
\item[(iii)] If item $i$ has been processed for $(t+1)$ timesteps,
  the probability of completing successfully at timestep $(t+1)$ is
  $\pi_{i,t+1}/\sum_{t' \geq t +1} \pi_{i,t'}$
\end{OneLiners}
\end{lemma}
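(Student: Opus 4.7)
The plan is to prove parts (i), (ii), (iii) together by induction on $t$, coupling the algorithm's marginal probabilities to the LP variables $v^*$ and $s^*$. Part (iii) reduces to a single conditional probability computation, while (i) and (ii) flow together inductively.

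First I would dispatch (iii) directly. If the algorithm has processed item $i$ for $t+1$ timesteps, then by the natural-completion check inside the inner loop, termination was not observed during any earlier step, so the only information the algorithm has revealed about $S_i$ is that $S_i \geq t+1$. Bayes' rule then gives $\Pr[S_i = t+1 \mid S_i \geq t+1] = \pi_{i,t+1}/\sum_{t' \geq t+1}\pi_{i,t'}$, which is exactly the probability that processing the $(t+1)^{st}$ step triggers natural termination.

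For (i) and (ii) I would induct on $t$, maintaining the invariant that (conditional on not being dropped in step~\ref{alg:st:1}) the marginal probability of processing item $i$'s $t^{th}$ timestep is $v^*_{i,t}$ and the marginal probability of stopping at exactly $t$ timesteps is $s^*_{i,t}$. The base case $t=0$ uses constraint~\eqref{eq:4}, which sets $v^*_{i,0}=1$ and matches the fact that every undropped item enters iteration $0$ of the inner loop; since item sizes are positive integers, $\pi_{i,0}=0$, so the prescribed cancellation probability collapses to $s^*_{i,0}/v^*_{i,0} = s^*_{i,0}$, giving (i) at $t=0$, and~\eqref{eq:1} then yields that the first step is processed with probability $1-s^*_{i,0}=v^*_{i,1}$, giving (ii). For the inductive step, assume the invariant through $t-1$, so the $t^{th}$ step is processed with probability $v^*_{i,t}$; then decompose the event ``stop at exactly $t$ timesteps'' into two disjoint sub-events: natural completion at step $t$ (observed during iteration $t-1$) and explicit cancellation at iteration $t$. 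By (iii) and the inductive hypothesis, the first sub-event has marginal probability $v^*_{i,t}\cdot \pi_{i,t}/\sum_{t'\geq t}\pi_{i,t'}$; the cancellation probability prescribed in the algorithm is chosen precisely so that the second sub-event contributes an additional $s^*_{i,t} - v^*_{i,t}\cdot\pi_{i,t}/\sum_{t'\geq t}\pi_{i,t'}$ to the marginal, and the two together sum to $s^*_{i,t}$, which is (i). Part (ii) then follows by another application of~\eqref{eq:1}: the probability of processing the $(t+1)^{st}$ step is $v^*_{i,t} - s^*_{i,t} = v^*_{i,t+1}$.

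The main delicate point, and the only place the LP constraints are used nontrivially, is verifying that the cancellation probability in the inner loop genuinely lies in $[0,1]$: non-negativity is precisely constraint~\eqref{eq:2}, while $\leq 1$ follows from~\eqref{eq:1} combined with $v^*_{i,t+1}\geq 0$. Beyond this, the argument is careful accounting that keeps the two independent sources of stopping (nature's choice of size, and the algorithm's independent cancellation coin) from being double-counted.
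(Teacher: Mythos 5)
Your proposal is correct and follows essentially the same route as the paper's proof: induction on $t$, with the base case using $v^*_{i,0}=1$ and $\pi_{i,0}=0$, the stopping event at time $t$ split into natural completion (with conditional probability given by part~(iii)) plus the algorithm's cancellation coin, and part~(ii) recovered from the identity $v^*_{i,t}-s^*_{i,t}=v^*_{i,t+1}$ of constraint~\eqref{eq:1}. The only additions are cosmetic: you prove~(iii) up front rather than inside the induction, and you explicitly record that constraint~\eqref{eq:2} is what makes the prescribed cancellation probability non-negative, a point the paper's proof leaves implicit.
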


\begin{theorem} \label{thm:small}
  The expected reward of our randomized algorithm is at least $\frac18$ of $\LPOpt$.
\end{theorem}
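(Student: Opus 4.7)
The plan is to split the analysis into two parts: (a) compute the expected reward our algorithm would collect if we ignored the budget constraint, and match it to the LP objective using Lemma~3.3; (b) show that for each item $i$, the budget is respected at the moment $i$ completes with constant probability.

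For part (a), fix an item $i$ and condition on it not being dropped in step~\ref{alg:st:1} (which happens with probability $1/4$). By parts (ii) and (iii) of Lemma~\ref{lem:stop-dist}, the probability that item~$i$ is processed for its $(t{+}1)^{st}$ timestep \emph{and} completes successfully at that step equals exactly $v^*_{i,t+1}\cdot\pi_{i,t+1}/\sum_{t' \geq t+1}\pi_{i,t'}$. Multiplying by $R_{i,t+1}$ and summing over $t$, the expected ``budget-free'' reward from item $i$, conditioned on $i$ not being dropped, is exactly item $i$'s contribution to the LP objective. Unconditioning via the $1/4$ factor gives at least $\tfrac14$ of the LP contribution of $i$.

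For part (b), let $T_{-i}$ denote the total number of timesteps the algorithm spends on items $j \neq i$ (regardless of whether $i$ eventually completes within budget). By Lemma~\ref{lem:stop-dist}(i), the expected time the algorithm spends on a kept item $j$ is $\sum_t t \cdot s^*_{j,t}$, so with the $1/4$ damping,
\[
\E[T_{-i}] \;\leq\; \tfrac14 \sum_{j \neq i}\sum_t t\cdot s^*_{j,t} \;\leq\; \tfrac14\cdot B,
\]
using LP constraint~\eqref{eq:3}. Crucially, the randomness driving $T_{-i}$ is independent of the randomness for item $i$, so this bound holds even after conditioning on $i$ not being dropped. Markov's inequality then gives $\Pr[T_{-i} > B/2] \leq 1/2$. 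Since rewards in Case~I are only collected when the item completes by step $B/2$, the event $\{T_{-i} \leq B/2\}$ together with a successful completion of $i$ guarantees that the total usage is at most $B$ and the reward is actually collected.

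Putting the two parts together, for each item $i$ the expected reward collected by the algorithm is at least $\tfrac14 \cdot \tfrac12 = \tfrac18$ times item $i$'s contribution to the LP objective; summing over $i$ by linearity of expectation gives the $\tfrac18 \cdot \LPOpt$ bound. The main subtlety to flag is that the completion event for item $i$ and the value of $T_{-i}$ must be treated as independent random variables, which is the reason the analysis can afford to apply Markov item-by-item rather than worrying about joint budget overruns.
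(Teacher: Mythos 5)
Your proof is correct and follows essentially the same route as the paper's: use Lemma~\ref{lem:stop-dist} to equate each surviving item's stopping-time distribution with $s^*$ and its budget-free reward with its LP contribution, bound the expected space used by other items by $B/4$ via constraint~\eqref{eq:3}, apply Markov to get the $1/2$, and combine with the $1/4$ retention probability. Your explicit handling of the independence between item $i$'s randomness and $T_{-i}$ is a slightly more careful rendering of the step the paper passes over with ``in the worst case, we process it after all other items.''
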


\begin{proof}
  Consider any item $i$. In the worst case, we process it after all other items. Then the total expected size occupied thus far is at most
  $\sum_{i' \neq i} \mathbf{1}^{keep}_{i'} \sum_{t \geq 0} t \cdot s^*_{i',t}$,
  where $\mathbf{1}^{keep}_{i'}$ is the indicator random variable denoting
  whether item $i'$ is not dropped in \lref[step]{alg:st:1}. Here we have used \lref[Lemma]{lem:stop-dist} to argue that if an item $i'$ is selected, its stopping-time distribution follows $s^*_{i',t}$.
 Taking expectation over the randomness in \lref[step]{alg:st:1}, the expected space occupied by other jobs is at most $\sum_{i'
    \neq i} \frac{1}{\factor} \sum_{t \geq 0} t \cdot s^*_{i',t} \leq
  \frac{B}{4}$.  Markov's inequality implies that this is at most
  $B/2$ with probability at least $1/2$. In this case, if item $i$ is started (which happens w.p. $1/4$), it runs without violating the knapsack, with expected reward $\sum_{t \geq 1} v^*_{i,t} \cdot \pi_{i,t}/(\sum_{t' \geq t}
  \pi_{i,t'})$; the total expected reward is then at least $\sum_{i} \frac{1}{8} \sum_{t} v^*_{i,t}
  \pi_{i,t}/(\sum_{t' \geq t} \pi_{i,t'}) \geq \frac{\LPOpt}{8}$.
\end{proof}


\subsection{Case II: Jobs with Late Rewards} \label{sec:large}

Now we handle instances in which only large-size instantiations of items may fetch reward, i.e., the rewards $R_{i,t}$ of every item $i$ are assumed to be $0$ for $t \leq B/2$.
For such instances, we now argue that \emph{cancellation is not helpful}. As a consequence, we can use the results of \lref[Section]{sec:nopmtn} and obtain a constant-factor approximation algorithm!

To see why, intuitively, as an algorithm processes a job for its $t^{th}$ timestep for $t < B/2$, it gets no more information about the reward than when starting (since all rewards are at large sizes).
Furthermore, there is no benefit of canceling a job once it has run for at least $B/2$ timesteps -- we can't get any reward by starting some other item.

More formally, consider a
(deterministic) strategy $S$ which in some state makes the
decision of scheduling item $i$ and halting its execution if it
takes more than $t$ timesteps. First suppose that $t \le B/2$;
since this job does will not be able to reach size larger than
$B/2$, no reward will be accrued from it and hence we can
change this strategy by skipping the scheduling of $i$ without
altering its total reward. Now consider the case where $t >
B/2$. Consider the strategy $S'$ which behaves as $S$ except
that it does not preempt $i$ in this state but lets $i$ run to
completion. We claim that $S'$ obtains at least as much
expected reward as $S$. First, whenever item $i$ has size at
most $t$ then $S$ and $S'$ obtain the same reward. Now suppose
that we are in a scenario where $i$ reached size $t > B/2$.
Then item $i$ is halted and $S$ cannot obtain any other
reward in the future, since no item that can fetch any reward would complete before the budget runs out; in the same
situation, strategy $S'$ obtains non-negative rewards. 	
Using this argument we can eliminate all the cancellations
of a strategy without decreasing its expected reward. 	
	\begin{lemma}
		There is an optimal solution in this case which does not cancel.
	\end{lemma}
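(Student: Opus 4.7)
The plan is to take an optimal adaptive policy $\Opt$ and, one cancellation event at a time, transform it into a policy that never cancels without decreasing the expected reward. I would model $\Opt$ as a (deterministic) decision tree whose nodes encode the history (items tried, observed sizes, time elapsed) and whose outgoing edges are labeled by outcomes of the next pull; by fixing any optimal realization of random bits we may indeed assume determinism. Pick any node $v$ at which $\Opt$ schedules item $i$ and commits to preempt after exactly $t$ processed time-steps if $i$ has not yet completed, and split into the two cases suggested in the paragraph preceding the lemma.

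In Case~1 ($t \le B/2$), every realization on which preemption occurs contributes zero reward from $i$, because $R_{i,s}=0$ for $s \le B/2$ and $i$'s size is at most $t$ whenever it completes on or before the cancellation. I would modify the strategy at $v$ to skip $i$ entirely and jump directly into the subtree that the original $\Opt$ would execute after the cancellation; this subtree now inherits $t$ additional units of remaining budget. Since any policy can freely waste extra budget by idling, running the same subtree with strictly more budget can only weakly increase its expected reward, and we have eliminated a zero-reward item from the schedule. In Case~2 ($t > B/2$), I would modify the strategy at $v$ to let $i$ run to completion instead of preempting. Coupling random bits: on every realization with size of $i$ at most $t$, both policies collect the same reward and leave the same residual budget, so the subtree below $v$ is unaffected; on every realization with size of $i$ greater than $t$, the original policy cancels at time $t$ and faces residual budget strictly less than $B/2$, so no future item can ever reach a state of size larger than $B/2$, and hence the original continuation contributes zero future reward, whereas the modified policy collects the non-negative reward of $i$ finishing at some size in $(t, B - \tau_v]$, where $\tau_v$ is the time elapsed before $v$.

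Finally, I would argue termination: each of the two transformations strictly reduces the number of cancellation nodes in the decision tree and introduces no new ones, so after finitely many steps we reach a policy that never cancels and whose expected reward is at least $\Opt$. The main obstacle — and the only one requiring care — is the coupling in Case~1: one has to be clear that inheriting the ``cancel'' subtree with extra budget is legitimate precisely because a policy can always elect to not use available time. Once that observation is in hand, the proof is a straightforward tree-surgery argument and no reasoning beyond the hypothesis $R_{i,s}=0$ for $s \le B/2$ is used.
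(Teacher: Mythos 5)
Your overall strategy---the split into $t \le B/2$ versus $t > B/2$ and the local exchange argument at a single cancellation node, iterated until no cancellations remain---is exactly the paper's, and your Case~2 analysis (couple the realizations; identical reward when $i$'s size is at most $t$; when it exceeds $t$ the original policy can collect nothing further because the residual budget drops below $B/2$, while letting $i$ run to completion yields a non-negative reward) matches the paper's argument essentially verbatim.

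The gap is in Case~1, in the specific surgery you commit to. Splicing in \emph{only} the cancellation subtree (with $t$ extra units of budget) does not dominate the original policy. At node $v$ the original policy branches: on realizations where $i$ completes at some size $s \le t$ it follows a continuation subtree $T_s$ with residual budget $B - \tau_v - s$, and only on the realization $S_i > t$ does it follow the cancellation subtree $T_{\mathsf{cancel}}$ with residual budget $B - \tau_v - t$. Your modified policy earns the value of $T_{\mathsf{cancel}}$ run with budget $B - \tau_v$, which you correctly lower-bound by $W(T_{\mathsf{cancel}})$, but what you need is a lower bound by the convex combination $\sum_{s \le t} \pi_{i,s}\, W(T_s) + \Pr[S_i > t]\, W(T_{\mathsf{cancel}})$, and this can fail: the subtrees $T_s$ may be far more valuable than $T_{\mathsf{cancel}}$, and handing $T_{\mathsf{cancel}}$ extra budget only lets the items \emph{it} already schedules avoid truncation---it does not re-optimize for the larger budget. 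The fix is small but necessary: since $S_i$ is independent of the other items and $i$ fetches zero reward in this regime, the branching on $i$'s observed size carries no information, so you may replace the whole gadget by the single best branch, i.e.\ the subtree among $\{T_s\}_{s\le t} \cup \{T_{\mathsf{cancel}}\}$ of maximum continuation value, started at time $\tau_v$; each branch started earlier earns at least its original value, so the maximum dominates the convex combination (equivalently, hallucinate $i$'s size, follow the corresponding subtree, and derandomize). This is what the paper's terser ``skip the scheduling of $i$'' implicitly requires. With that correction your induction on the number of cancellation nodes goes through as described.
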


As mentioned earlier, we can now appeal to the results of \lref[Section]{sec:nopmtn} and obtain a constant-factor approximation for the large-size instances. Now we can combine the algorithms that handle the two different scenarios (or choose one at random and run it), and get a constant fraction of the expected reward that an optimal policy fetches.



\section{Multi-Armed Bandits}
\label{sec:mab}

We now turn our attention to the more general Multi-Armed Bandits
problem (\mab). In this framework, there are $n$ \emph{arms}: arm $i$
has a collection of states denoted by $\Si$, a starting state $\rho_i
\in \Si$; Without loss of generality, we assume that $\Si \cap \mathcal{S}_j =
\emptyset$ for $i \neq j$. Each arm also has a \emph{transition graph}
$T_i$, which is given as a polynomial-size (weighted) directed tree
rooted at $\rho_i$; we will relax the tree assumption later. If
there is an edge $u \to v$ in $T_i$, then the edge weight $p_{u,v}$
denotes the probability of making a transition from $u$ to $v$ if we
play arm $i$ when its current state is node $u$; hence $\sum_{v: (u,v)
  \in T_i} p_{u,v} =1$. Each time we play an arm, we get a reward whose
value depends on the state from which the arm is played. Let us denote
the reward at a state $u$ by $r_u$. Recall that the martingale property on rewards requires that $\sum_{v: (u,v) \in T_i} p_{u,v} r_v = r_u$ for all states $u$.

{\bf Problem Definition.} For a concrete example, we consider the
following budgeted learning problem on \emph{tree transition graphs}. Each of the arms starts at the start state $\rho_i
\in \Si$. We get a reward from each of the states we play, and the goal
is to maximize the total expected reward, while not exceeding a
pre-specified allowed number of plays $B$ across all arms. The framework
described below can handle other problems (like the explore/exploit
kind) as well, and we discuss this in \lref[Appendix]{xsec:mab}.

Note that the Stochastic Knapsack problem considered in the previous
section is a special case of this problem where each item corresponds to
an arm, where the evolution of the states corresponds to the explored
size for the item. Rewards are associated with each stopping size, which
can be modeled by end states that can be reached from the states of the
corresponding size with the probability of this transition being the
probability of the item taking this size. Thus the resulting trees are
paths of length up to the maximum size $B$ with transitions to end
states with reward for each item size.
For example, the transition graph in \lref[Figure]{fig:redn} corresponds to an item which instantiates to a size of $1$ with probability $1/2$ (and fetches a reward $R_1$), takes size $3$ with probability $1/4$ (with reward $R_3$), and size $4$ with the remaining probability $1/4$ (reward is $R_4$). Notice that the reward on stopping at all intermediate nodes is $0$ and such an instance therefore does not satisfy the martingale property. Even though the rewards are obtained in this example on reaching a state rather than playing it, it is not hard to modify our methods for this version as well.

\begin{figure}[ht]
\centering
\includegraphics[scale=0.4]{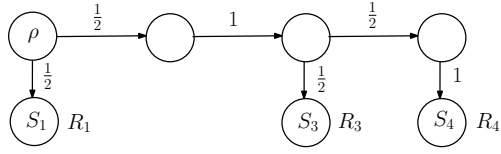}
\caption{Reducing Stochastic Knapsack to MAB}
\label{fig:redn}
\end{figure}

\paragraph{Notation.}
The transition graph $T_i$ for arm $i$ is an out-arborescence defined on
the states $\Si$ rooted at $\rho_i$.  Let $\depth(u)$ of a node $u \in
\Si$ be the depth of node $u$ in tree $T_i$, where the root $\rho_i$ has
depth $0$. The unique parent of node $u$ in $T_i$ is denoted by
$\parent(u)$.  Let $\S = \cup_{i} \Si$ denote the set of all states in
the instance, and $\arm(u)$ denote the arm to which state $u$ belongs,
i.e., the index $i$ such that $u \in \Si$. Finally, for $u \in \Si$, we
refer to the act of playing arm $i$ when it is in state $u$ as ``playing
state $u \in \Si$'', or ``playing state $u$'' if the arm is clear in
context.

\subsection{Global Time-indexed LP}
\label{sec:mab-lp}


In the following, the variable $z_{u,t} \in [0,1]$ indicates that the
algorithm plays state $u \in \Si$ at time $t$.  For state $u \in \Si$
and time $t$, $w_{u,t} \in [0,1]$ indicates that arm $i$ \emph{first
  enters} state $u$ at time $t$: this happens if and only if the
algorithm \emph{played} $\parent(u)$ at time $t-1$ and the arm made a
transition into state $u$.

\begin{alignat}{2} \tag{$\mathsf{LP}_\mathsf{mab}$} \label{lp:mab}
	\max \ts \sum_{u,t} r_u &\cdot z_{u,t}\\
	w_{u,t} &= z_{\parent(u), t-1}  \cdot p_{\parent(u),u} & \qquad \forall t \in [2,B],\, u \in \S \setminus \cup_{i} \{\rho_i\} \label{eq:mablp1}\\
	\ts \sum_{t' \le t} w_{u,t'} &\geq \ts \sum_{t' \leq t} z_{u,t'} & \qquad \forall t \in [1,B], \, u \in \S \label{eq:mablp2}\\
	\ts \sum_{u \in \S} z_{u,t} &\le 1 & \qquad \forall t \in [1,B]  \label{eq:mablp3}\\
	w_{\rho_i, 1} &= 1 & \qquad \forall i \in [1,n] \label{eq:mablp4}
\end{alignat}

\begin{lemma}
The value of an optimal LP solution $\LPOpt$ is at least $\Opt$, the expected reward of an optimal adaptive strategy.
\end{lemma}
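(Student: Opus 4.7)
The plan is the standard one: starting from an optimal adaptive strategy $\Opt$, define LP variables as the natural probabilities, and then check each constraint and the objective in turn. Concretely, for every state $u \in \S$ and every time $t \in [1,B]$, let $z^*_{u,t}$ be the probability (over $\Opt$'s sample path, including randomness in arm transitions) that $\Opt$ plays arm $\arm(u)$ while it is in state $u$ at the $t$-th time step, and let $w^*_{u,t}$ be the probability that arm $\arm(u)$ first enters state $u$ at time $t$. Since the transition graph $T_i$ is an arborescence, an arm enters any given state $u \neq \rho_i$ at most once along any sample path, so $w^*_{u,t}$ is also the probability that the arm simply enters $u$ exactly at time $t$.

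Next I would verify the four feasibility constraints. For (\ref{eq:mablp4}), each arm begins at $\rho_i$ at time $1$ deterministically, so $w^*_{\rho_i,1}=1$. For (\ref{eq:mablp3}), the algorithm plays at most one state at each time $t$, so the events ``$\Opt$ plays $u$ at time $t$'' as $u$ ranges over $\S$ are disjoint and their probabilities sum to at most $1$. For (\ref{eq:mablp2}), note that $\Opt$ can only play $u$ after arm $\arm(u)$ has entered state $u$; therefore on each sample path the cumulative number of plays of $u$ up to time $t$ is at most the indicator that the arm has ever reached $u$ by time $t$, and taking expectations gives $\sum_{t'\leq t} z^*_{u,t'} \leq \sum_{t'\leq t} w^*_{u,t'}$.

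The main subtle step is constraint (\ref{eq:mablp1}), which needs the (conditional) independence of the transition out of $\parent(u)$ from $\Opt$'s decision to play it. Since $T_i$ is a tree, entering $u$ at time $t$ happens if and only if $\Opt$ plays $\parent(u)$ at time $t-1$ and the arm then transitions to $u$. The decision of $\Opt$ to play $\parent(u)$ at time $t-1$ is a function of the observed history through time $t-2$ together with the arm's state at time $t-1$ being $\parent(u)$; conditioned on this, the next transition is drawn fresh from the Markov chain and lands in $u$ with probability exactly $p_{\parent(u),u}$. Hence $w^*_{u,t} = z^*_{\parent(u),t-1}\cdot p_{\parent(u),u}$, as required.

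Finally, for the objective, the expected reward of $\Opt$ equals $\E\bigl[\sum_{t=1}^{B} r_{\text{state played at }t}\bigr]$; swapping the expectation and the sum and decomposing by which state is played yields $\sum_{u,t} r_u \cdot z^*_{u,t}$, which is exactly the LP objective at $(z^*,w^*)$. Thus $(z^*,w^*)$ is a feasible solution of value $\Opt$, so $\LPOpt \geq \Opt$. I expect the only real obstacle to be stating the independence argument cleanly for (\ref{eq:mablp1}); everything else is a direct translation of probabilities into LP variables.
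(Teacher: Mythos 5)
Your proposal is correct and follows essentially the same route as the paper: define $z^*$ and $w^*$ as the natural play/entry probabilities of $\Opt$, verify each constraint via sample-path arguments and linearity of expectation (using the tree structure so each state is played and entered at most once), and handle constraint~\eqref{eq:mablp1} by conditioning on whether $\parent(u)$ is played at time $t-1$ and using that the transition is drawn independently of that decision. The objective-value computation also matches the paper's.
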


\begin{proof}
 We convention that $\Opt$ starts playing at time $1$. Let $z^*_{u,t}$ denote the probability that $\Opt$ plays state $u$ at time $t$, namely, the probability that arm $\arm(u)$ is in state $u$ at time $t$ and is played at time $t$. Also let $w^*_{u,t}$ denote the probability that $\Opt$ ``enters'' state $u$ at time $t$, and further let $w^*_{\rho_i,1} = 1$ for all $i$.

	We first show that $\{z^*, w^*\}$ is a feasible solution for \ref{lp:mab} and later argue that its LP objective is at least $\Opt$. Consider constraint \eqref{eq:mablp1} for some $t \in [2, B]$ and $u \in \S$. The probability of entering state $u$ at time $t$ conditioned on $\Opt$ playing state $\parent(u)$ at time $t - 1$ is $p_{\parent(u),u}$. In addition, the probability of entering state $u$ at time $t$ conditioning on $\Opt$ not playing state $\parent(u)$ at time $t - 1$ is zero. Since $z^*_{\parent(u),t-1}$ is the probability that $\Opt$ plays state $\parent(u)$ at time $t - 1$, we remove the conditioning to obtain $w^*_{u,t} = z^*_{\parent(u),t-1} \cdot p_{\parent(u),u}$.
	
	Now consider constraint \eqref{eq:mablp2} for some $t \in [1, B]$ and $u \in \S$. For any outcome of the algorithm (denoted by a sample path $\sigma$),
 let $\mathbf{1}^{enter}_{u',t'}$ be the indicator variable that $\Opt$ enters state $u'$ at time $t'$ and let $\mathbf{1}^{play}_{u',t'}$ be the indicator variable that $\Opt$ plays state $u'$ at time $t'$. Since $T_i$ is acyclic, state $u$ is played at most once in $\sigma$ and is also entered at most once in $\sigma$. Moreover, whenever $u$ is played before or at time $t$, it must be that $u$ was also entered before or at time $t$, and hence $\sum_{t' \le t} \mathbf{1}^{play}_{u,t'} \le \sum_{t' \le t} \mathbf{1}^{enter}_{u, t'}$. Taking expectation on both sides and using the fact that $\E[\mathbf{1}^{play}_{u,t'}] = z^*_{u,t'}$ and $\E[\mathbf{1}^{enter}_{u,t'}] = w^*_{u,t'}$, linearity of expectation gives $\sum_{t' \le t} z^*_{u,t'} \le \sum_{t' \le t} w^*_{u,t'}$.
	
	To see that constraints \eqref{eq:mablp3} are satisfied, notice that we can play at most one arm (or alternatively one state) in each time step, hence $\sum_{u \in \S} \mathbf{1}^{play}_{u,t} \le 1$ holds for all $t \in [1, B]$; the claim then follows by taking expectation on both sides as in the previous paragraph.   Finally, constraints \eqref{eq:mablp4} is satisfied by definition of the start states.
	
	To conclude the proof of the lemma, it suffices to show that $\Opt = \sum_{u,t} r_u \cdot z^*_{u,t}$. Since $\Opt$ obtains reward $r_u$ whenever it plays state $u$, it follows that $\Opt$'s reward is given by $\sum_{u,t} r_u \cdot \mathbf{1}^{play}_{u,t}$; by taking expectation we get $\sum_{u,t} r_u z^*_{u,t} = \Opt$, and hence $\LPOpt \geq \Opt$.
\end{proof}


\subsection{The Rounding Algorithm}

In order to best understand the motivation behind our rounding algorithm, it would be useful to go over the example which illustrates the necessity of preemption (repeatedly switching back and forth between the different arms) in \lref[Appendix]{sec:preemption-gap}.

At a high level, the rounding algorithm proceeds as follows. In Phase~I,
given an optimal LP solution, we decompose the fractional solution for each arm
into a convex\footnote{Strictly speaking, we do not get convex
  combinations that sum to one; our combinations sum to $\sum_t
  z_{\rho_i, t}$, the value the LP assigned to pick to play the root of
  the arm over all possible start times, which is at most one.}
combination of integral ``strategy forests'' (which are depicted in
\lref[Figure]{fig:treeforest}): each of these tells us at what times to
play the arm, and in which states to abandon the arm. Now, if we sample
a random strategy forest for each arm from this distribution, we may end
up scheduling multiple arms to play at some of the timesteps, and hence
we need to resolve these conflicts. A natural first approach might be to (i) sample a strategy forest for each arm, (ii) play these arms in a random order, and (iii) for any arm follow the decisions (about whether to abort or continue playing) as suggested by the sampled strategy forest. In essence, we are ignoring the times at which the sampled strategy forest has
scheduled the plays of this arm and instead playing this arm continually
until the sampled forest abandons it. While such a non-preemptive strategy works when the martingale property holds, the example in \lref[Appendix]{sec:preemption-gap} shows that preemption is unavoidable.

Another approach would be to try to play the sampled
forests at their prescribed times; if multiple forests want to play at
the same time slot, we round-robin over them. The expected number of
plays in each timestep is 1, and the hope is that round-robin will
not hurt us much. However, if some arm needs $B$ contiguous steps to get to a
state with high reward, and a single play of some other arm gets
scheduled by bad luck in some timestep, we would end up getting nothing!

Guided by these bad examples, we try to use the continuity information
in the sampled strategy forests---once we start playing some contiguous
component (where the strategy forest plays the arm in every consecutive
time step), we play it to the end of the component. The na\"{\i}ve
implementation does not work, so we first alter the LP solution to get
convex combinations of ``nice'' forests---loosely, these are forests
where the strategy forest plays contiguously in almost all timesteps, or
in at least half the timesteps. This alteration is done in Phase~II, and
then the actual rounding in Phase~III, and the analysis appears in
\lref[Section]{sec:phase-iii}.

\subsubsection{Phase I: Convex Decomposition}
\label{sec:phase-i}

In this step, we decompose the fractional solution into a convex
combination of ``forest-like strategies'' $\{\lpt(i,j)\}_{i,j}$,
corresponding to the $j^{th}$ forest for arm $i$. We first formally
define what these forests look like:
The $j^{th}$ \emph{strategy forest} $\lpt(i,j)$ for arm $i$ is an
assignment of values $\ptime(i,j,u)$ and $\prob(i,j,u)$ to each state $u
\in \Si$ such that:
\begin{OneLiners}
\item[(i)] For $u \in \Si$ and $v = \parent(u)$, it holds that
  $\ptime(i,j,u) \geq 1+ \ptime(i,j,v)$, and
\item[(ii)] For $u \in \Si$ and $v = \parent(u)$, if $\ptime(i,j,u) \neq
  \infty$ then $\prob(i,j,u) = p_{v,u}\,\prob(i,j,v)$; else if
  $\ptime(i,j,u) = \infty$ then $\prob(i,j,u) = 0$.
\end{OneLiners}
We call a triple $(i,j,u)$ a \emph{tree-node} of $\lpt(i,j)$. When $i$ and $j$ are understood from the context, we identify the tree-node $(i,j,u)$ with the state $u$.

For any state $u$, the values $\ptime(i,j,u)$ and $\prob(i,j,u)$ denote
the time at which the arm $i$ is played at state $u$, and the
probability with which the arm is played, according to the strategy
forest $\lpt(i,j)$.\footnote{When $i$ and $j$ are clear from the
  context, we will just refer to state $u$ instead of the triple $(i,j,u)$.}  The probability values are particularly simple: if
$\ptime(i,j,u) = \infty$ then this strategy does not play the arm at
$u$, and hence the probability is zero, else $\prob(i,j,u)$ is equal to
the probability of reaching $u$ over the random transitions according to
$T_i$ if we play the root with probability $\prob(i,j,\rho_i)$. Hence, we can compute $\prob(i,j,u)
$ just given $\prob(i,j,
\rho_i)$ and whether or not $\ptime(i,j,u) = \infty$. Note that the
$\ptime$ values are not necessarily consecutive, plotting these on the timeline and connecting a state to its parents only when they are in consecutive timesteps (as
in \lref[Figure]{fig:treeforest}) gives us forests, hence the name.

\begin{figure}[ht]
\centering
\subfigure[Strategy forest: numbers are $\ptime$s]{
\includegraphics[scale=0.5]{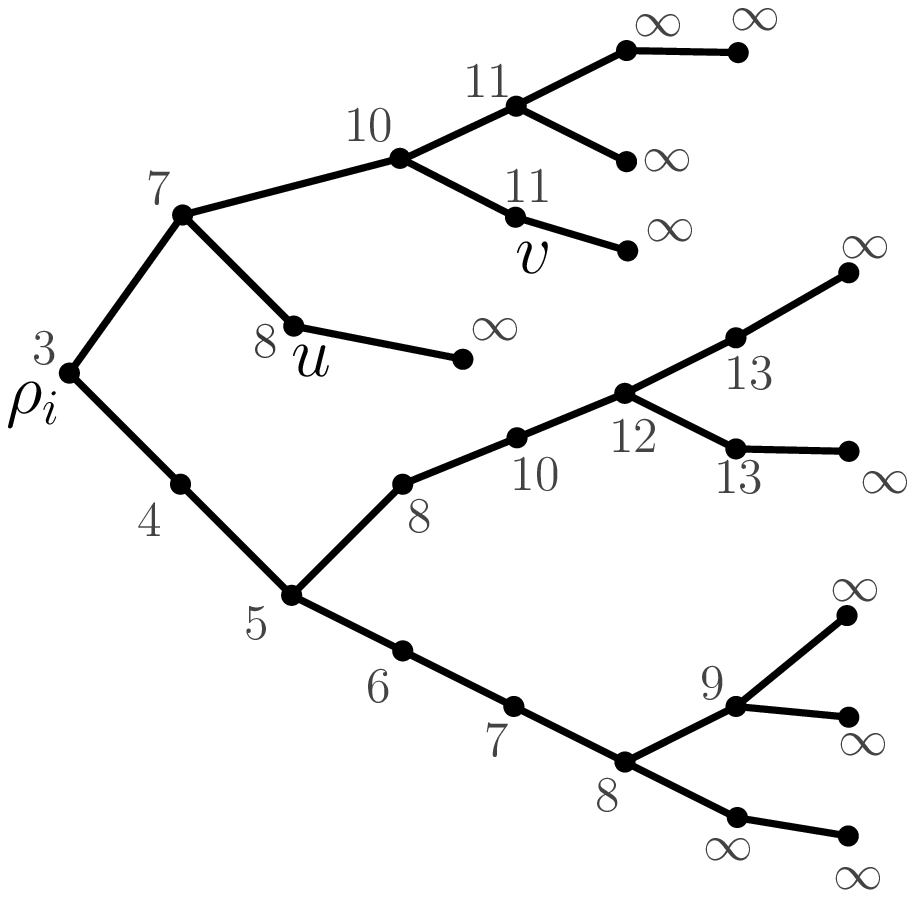}
\label{fig:subfig1}
}
\hspace{20pt}
\subfigure[Strategy forest shown on a timeline]{
\includegraphics[scale=0.5]{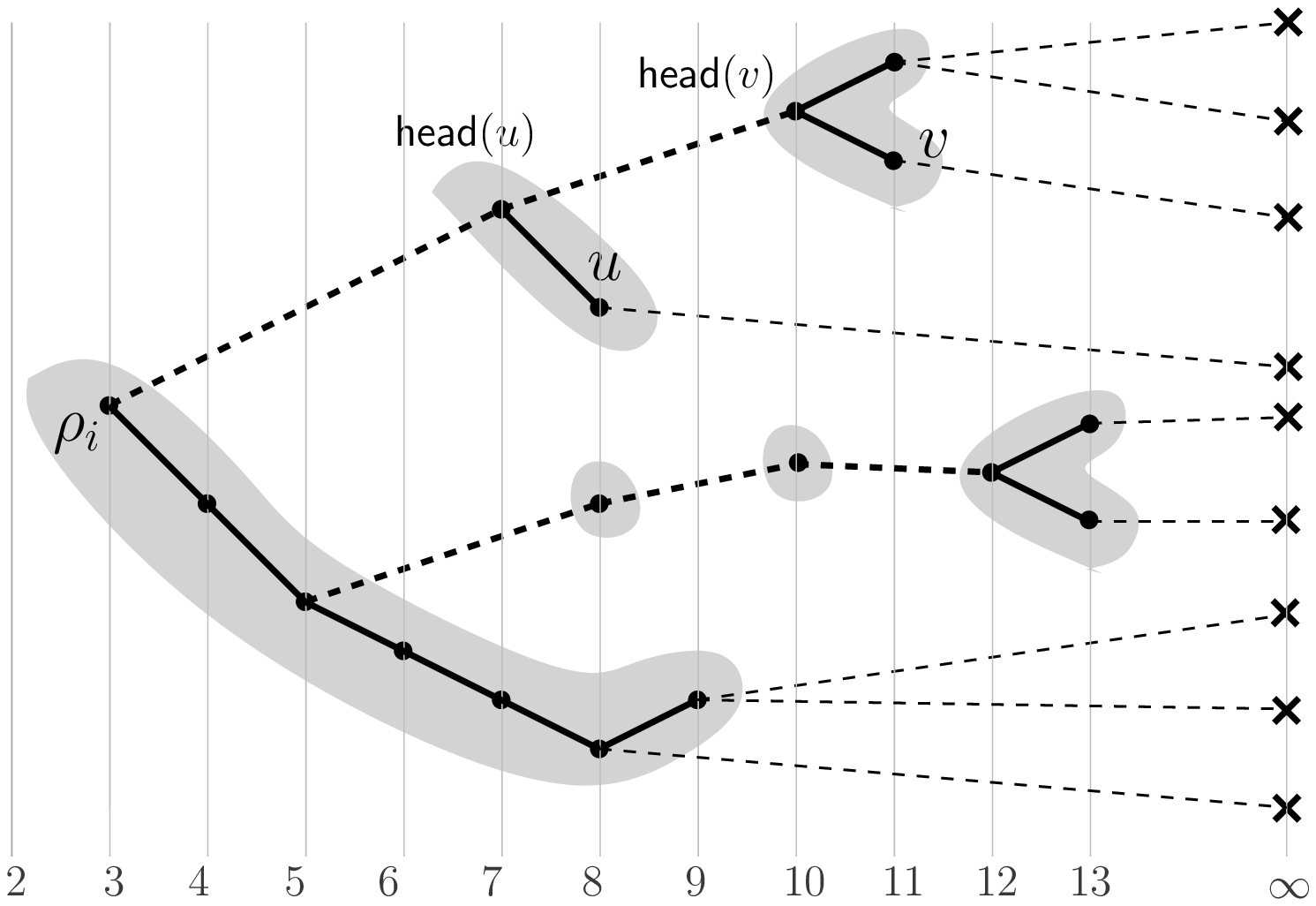}
\label{fig:subfig2}
}
\caption{Strategy forests and how
  to visualize them: grey blobs are connected components.}
\label{fig:treeforest}
\end{figure}


The algorithm to construct such a decomposition proceeds in rounds for
each arm $i$; in a particular round, it ``peels'' off such a strategy as
described above, and ensures that the residual fractional solution
continues to satisfy the LP constraints, guaranteeing that we can repeat
this process, which is similar to (but slightly more involved than)
performing flow-decompositions. The decomposition lemma is proved in
\lref[Appendix]{sec:details-phase-i}:
\begin{lemma}
  \label{lem:convexppt}
  Given a solution to~(\ref{lp:mab}), there exists a collection of
  at most $nB|\S|$ strategy forests $\{\lpt(i,j)\}$ such that $z_{u,t} =
  \sum_{j:\ptime(i,j,u) = t} \prob(i,j,u)$.\footnote{To reiterate, even though we call this a convex decomposition, the sum of the probability values of the root state of any arm is at most one by constraint~\ref{eq:mablp3}, and hence the sum of the probabilities of the root over the decomposition could be less than one in general.} Hence, $\sum_{(i, j, u):
    \ptime(i,j,u)=t} \prob(i,j,u) \leq 1$ for all $t$.
\end{lemma}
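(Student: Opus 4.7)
My plan is to construct the decomposition arm by arm, by repeatedly peeling off one strategy forest from a residual fractional solution in the style of a flow decomposition. For arm $i$, I would maintain residuals $\tilde z_{u,t}$ initialized from the LP together with their cumulative sums $\tilde Z_{u,t} = \sum_{t' \le t} \tilde z_{u,t'}$ and $\tilde W_{u,t} = p_{\parent(u),u}\,\tilde Z_{\parent(u), t-1}$, and the invariant $\tilde W_{u,t} \ge \tilde Z_{u,t}$ (the residual form of \eqref{eq:mablp2}). To extract the $j$-th strategy, pick $t^* := \min\{t : \tilde z_{\rho_i,t} > 0\}$ and set $\ptime(i,j,\rho_i) = t^*$; then walk $T_i$ top-down, and for each $u$ with parent $v$ set $\ptime(i,j,u)$ to the smallest $t \ge \ptime(i,j,v)+1$ with $\tilde z_{u,t} > 0$, or to $\infty$ if $\ptime(i,j,v) = \infty$ or no such $t$ exists. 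Define $q(u) = \prod_{w \text{ on root-to-}u \text{ path}} p_{\parent(w),w}$ (with $q(\rho_i) = 1$), take
\[
  \alpha \;=\; \min_{u \,:\, \ptime(i,j,u) \neq \infty} \frac{\tilde z_{u,\ptime(i,j,u)}}{q(u)},
\]
set $\prob(i,j,u) = \alpha\,q(u)$ on played states and $0$ elsewhere, and subtract $\prob(i,j,u)$ from $\tilde z_{u,\ptime(i,j,u)}$ for every played $u$.

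The crux is to verify that a peel preserves the invariant $\tilde W_{u,t} \ge \tilde Z_{u,t}$. A peel changes $\tilde Z_u$ only at $t_u := \ptime(i,j,u)$, by $-\prob(i,j,u)$, and changes $\tilde W_u$ only at $t_v+1$ (where $v = \parent(u)$), by $-p_{v,u}\prob(i,j,v) = -\prob(i,j,u)$. For $t < t_v+1$ or $t \ge t_u$ these changes compensate; the only concern is $t \in [t_v+1,\,t_u-1]$, where cumulative inflow drops but cumulative outflow does not. By the earliest-feasible rule $\tilde z_{u,s} = 0$ for $s$ in that window, so $\tilde Z_{u,t}$ is constant there, while $\tilde W_{u,t} = p_{v,u}\tilde Z_{v,t-1}$ is non-decreasing in $t$; thus the tightest pre-peel slack in the window lies at $t = t_v+1$ and equals
\[
  \bigl(\tilde W_{u,t_v} - \tilde Z_{u,t_v}\bigr) \;+\; p_{v,u}\,\tilde z_{v,t_v} \;\ge\; p_{v,u}\,\tilde z_{v,t_v},
\]
by the pre-peel invariant at $t_v$. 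Since the choice of $\alpha$ enforces $\prob(i,j,v) \le \tilde z_{v,t_v}$ at the parent, we get $\prob(i,j,u) = p_{v,u}\,\prob(i,j,v) \le p_{v,u}\,\tilde z_{v,t_v}$, which is exactly the slack needed to preserve the invariant throughout the window. In other words, the $\alpha$-constraint at a played parent already pays for all the cumulative slack required at each child, and no separate constraint on $\alpha$ is necessary.

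For termination and the decomposition identity, the minimizer in the definition of $\alpha$ drives at least one $\tilde z_{u,t_u}$ to zero each iteration, and $\alpha > 0$ whenever some $\tilde z_{\rho_i,t}$ is positive, so each arm takes at most $B|\Si|$ peels and the total number of forests is at most $nB|\S|$. When the process halts, $\tilde z_{\rho_i,\cdot} \equiv 0$, and a top-down induction on $T_i$ using $\tilde Z_u \le \tilde W_u = p_{v,u}\tilde Z_v$ forces all residuals to vanish; summing the peels then gives $\sum_{j : \ptime(i,j,u) = t} \prob(i,j,u) = z_{u,t}$ for every $u,t$. The per-timestep bound $\sum_{(i,j,u):\ptime(i,j,u)=t}\prob(i,j,u) \le 1$ is immediate from \eqref{eq:mablp3}. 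The main difficulty I expect is the cumulative-invariant bookkeeping sketched above; once that is in hand, the rest of the argument is a clean flow-decomposition induction.
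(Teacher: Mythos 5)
Your proposal is correct and follows essentially the same route as the paper's proof in Appendix~\ref{sec:details-phase-i}: iteratively peel a strategy forest rooted at the earliest fractionally played copy of each state, scale by the minimum ratio $\epsilon=\min_u z_{u,\ptime(u)}/\pi_u$, and argue that the residual solution still satisfies constraints~\eqref{eq:mablp1}--\eqref{eq:mablp3}, with termination in at most $B|\Si|$ peels per arm. The only cosmetic differences are that the paper proves the globally-earliest positive times automatically satisfy $\ptime(u)\ge\ptime(\parent(u))+1$ (rather than enforcing it top-down), and it disposes of the window $t<\ptime(i,j,u)$ simply by noting the cumulative $z$ there is zero, whereas your slack computation at $t_v+1$ does the same job with a bit more bookkeeping.
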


For any $\lpt(i,j)$, these $\prob$ values satisfy a ``preflow''
condition: the in-flow at any node $v$ is always at least the out-flow, namely $\prob(i,j,v) \ge \sum_{u: \parent(u)=v} \prob(i,j,u)$. This leads to
the following simple but crucial observation.

\begin{observation}
  \label{obs:treeflow}
  For any arm $i$, for any set of states $X \sse \Si$ such that no state
  in $X$ is an ancestor of another state in $X$ in the transition tree $T_i$, and
  for any $z \in \Si$ that is an ancestor of all states in $X$,
  $\prob(i,j,z) \geq \sum_{x \in X} \prob(i,j,x)$.

  More generally, given similar conditions on $X$, if $Z$ is a set of
  states such that for any $x \in X$, there exists $z \in Z$ such that
  $z$ is an ancestor of $x$, we have $\sum_{z \in Z} \prob(i,j,z) \geq
  \sum_{x \in X} \prob(i,j,x)$
\end{observation}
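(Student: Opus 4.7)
The plan is to deduce both parts of the observation from the \emph{preflow} inequality $\prob(i,j,v) \ge \sum_{u:\parent(u)=v}\prob(i,j,u)$ that was highlighted just before the statement. For readability I would drop the $(i,j)$ indices throughout the argument and simply write $\prob(v)$, and I would let $T_v$ denote the subtree of $T_i$ rooted at $v$.

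For the first claim (a single common ancestor $z$), I would induct on the height of $T_z$. The case $z \in X$ is trivial: the antichain hypothesis, combined with the fact that $z$ is an ancestor of every element of $X$, forces $X = \{z\}$, so the inequality is $\prob(z) \ge \prob(z)$. Otherwise every $x \in X$ lies strictly below some child $c$ of $z$, and we can form the partition $X_c := X \cap T_c$ ranging over the children $c$ of $z$. Each $X_c$ is a subset of an antichain and hence an antichain in $T_c$, with $c$ an ancestor of all its elements, so the inductive hypothesis gives $\prob(c) \ge \sum_{x \in X_c}\prob(x)$. Summing over the children and applying preflow at $z$ yields
\[
\prob(z) \;\ge\; \sum_{c:\parent(c)=z}\prob(c) \;\ge\; \sum_{c}\sum_{x \in X_c}\prob(x) \;=\; \sum_{x\in X}\prob(x).
\]

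For the more general claim, I would reduce it to the single-ancestor version by a simple assignment. For each $x \in X$ pick some $\phi(x) \in Z$ that is an ancestor of $x$ (guaranteed by hypothesis), and let $X_z := \phi^{-1}(z)$. Then each $X_z \subseteq X$ is automatically an antichain in $T_i$, every state in $X_z$ is a descendant of $z$, and the $X_z$ partition $X$ as $z$ ranges over $Z$. Applying the first part to each pair $(z, X_z)$ with $X_z$ nonempty and summing (those $z \in Z$ with $X_z = \emptyset$ only add nonnegative terms to the left) gives
\[
\sum_{z \in Z}\prob(z) \;\ge\; \sum_{z \in Z}\sum_{x \in X_z}\prob(x) \;=\; \sum_{x \in X}\prob(x).
\]

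There is no genuine obstacle here: the observation is a flow-style fact that follows by iterating the preflow inequality. The two places that merit a moment of care are the degenerate case $z \in X$ in the first part (handled by the antichain hypothesis collapsing $X$ to $\{z\}$) and the freedom in choosing $\phi$ in the second part (irrelevant because we only need an upper bound on $\sum_{x\in X}\prob(x)$, and $Z$ being non-antichain only inflates the left-hand side).
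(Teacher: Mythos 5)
Your proof is correct, and it follows exactly the route the paper intends: the paper states this observation without proof, offering only the preflow inequality $\prob(i,j,v) \ge \sum_{u:\parent(u)=v}\prob(i,j,u)$ as justification, and your induction on the subtree height (plus the partition of $X$ by an assignment $\phi$ into $Z$ for the second part) is precisely the standard way to iterate that inequality. The two edge cases you flag ($z \in X$ and the non-uniqueness of $\phi$) are handled correctly.
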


\subsubsection{Phase II: Eliminating Small Gaps}
\label{sec:phase-ii}

While \lref[Appendix]{sec:preemption-gap} shows that preemption is necessary
to remain competitive with respect to $\Opt$, we also should not get
``tricked'' into switching arms during very short breaks taken by the
LP. For example, say, an arm of length $(B-1)$ was played in two
continuous segments with a gap in the middle. In this case, we should
not lose out on profit from this arm by starting some other arms' plays
during the break. To handle this issue, whenever some path on the strategy
tree is almost contiguous---i.e., gaps on it are relatively small---we
make these portions completely contiguous. Note that we will not make the
entire tree contiguous, but just combine some sections together.

Before we make this formal, here is some useful notation:
Given $u \in \Si$, let $\head(i,j,u)$ be its ancestor node $v \in
\Si$ of least depth such that the plays from $v$ through $u$ occur in consecutive $\ptime$ values. More formally, the path $v = v_1, v_2,
\ldots, v_l = u$ in $T_i$ is such that $\ptime(i,j,v_{l'}) = \ptime(i,j,v_{l' - 1}) + 1$
for all $l' \in [2, l]$. We also define the \emph{connected component}
of a node $u$, denoted by $\comp(i,j,u)$, as the set of all nodes $u'$ such
that $\head(i,j,u) = \head(i,j,u')$. \lref[Figure]{fig:treeforest} shows the
connected components and heads.

The main idea of our \emph{gap-filling} procedure is the following: if a head state $v = \head(i,j,u)$ is played at time $t = \ptime(i,j,v)$ s.t. $t < 2 \cdot \depth(v)$, then we ``advance'' the $\comp(i,j,v)$ and get rid of the gap between $v$ and its parent (and recursively apply this rule)\footnote{The intuition is that such vertices have only a small gap in their play and should rather be played contiguously.}. The procedure can be described in more detail as follows.

\begin{algorithm}[ht!]
  \caption{Gap Filling Algorithm \textsf{GapFill}}
  \begin{algorithmic}[1]
  \label{alg:ridgaps}
    \FOR{$\tau$ $=$ $B$ to $1$}
    \WHILE{there exists a tree-node $u \in \lpt(i,j)$ such that $\tau = \ptime(\head(u)) <  2 \cdot \depth(\head(u))$} \label{alg:gap1}
      \STATE {\bf let} $v = \head(u)$. \label{alg:setV}
      \IF{$v$ is not the root of $\lpt(i,j)$}
    	  \STATE {\bf let} $v' = \parent(v)$.
        \STATE {\bf advance} the component $\comp(v)$ rooted at $v$ such that $\ptime(v) \leftarrow \ptime(v') + 1$, to make $\comp(v)$ contiguous with the ancestor forming one larger component. Also alter the $\ptime$s of $w \in \comp(v)$ appropriately to maintain contiguity with $v$ (and now with $v'$).
      \ENDIF
    \ENDWHILE \label{alg:gap3}
    \ENDFOR
\end{algorithmic}
\end{algorithm}

One crucial property is that these ``advances'' do not increase by much the number of plays that occur at any given time $t$. Essentially this is because if for some time slot $t$ we ``advance'' a set of components that were originally scheduled after $t$ to now cross time slot $t$, these components moved because their ancestor paths (fractionally) used up at least $t/2$ of
the time slots before $t$; since there are $t$ time slots to be used up,
each to unit extent, there can be at most $2$ units of components being
moved up. Hence, in the following, we assume that our $\lpt$'s satisfy the properties in the following lemma:

\begin{lemma} \label{lem:gapfill} Algorithm \textsf{GapFill} produces a
  modified collection of $\lpt$'s such that
  \begin{OneLiners}
  \item[(i)] For each $i,j, u$ such that $r_u > 0$,
    $\ptime(\head(i,j,u)) \ge 2 \cdot \depth(\head(i,j,u))$.
  \item[(ii)] The total extent of plays at any time $t$, i.e.,
    $\sum_{(i,j,u): \ptime(i,j,u)=t} \prob(i,j,u)$ is at most $3$.
\end{OneLiners}
\end{lemma}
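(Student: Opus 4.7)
Plan. Part (i) is essentially by construction. My plan is a reverse induction on $\tau$ running from $B$ down to $1$: the inductive hypothesis is that after iteration $\tau$ of the outer for-loop, every tree-node $u$ whose current head $h=\head(i,j,u)$ satisfies $\ptime(h)\ge\tau$ also satisfies $\ptime(h)\ge 2\,\depth(h)$. The inductive step works because the inner while loop at iteration $\tau$ keeps firing until no tree-node still has $\ptime(\head)=\tau$ with the condition violated, and any subsequent advance (at smaller $\tau$) only decreases $\ptime$-values and so cannot disturb guarantees previously established at larger $\tau$. Termination of the inner loop is by the monovariant $\sum_u \ptime(u)$ (a non-negative integer that strictly decreases at every advance, since the head's $\ptime$ drops by at least $1$). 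Observing that at the final iteration $\tau=1$ every tree-node has current $\ptime(\head)\ge 1$ yields the claim.

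For part (ii), the heart of the lemma, fix $t$ and let $P(t):=\sum_{(i,j,u):\,\ptime(u)=t}\prob(u)$. My first step is to invoke the preflow property (Observation~4.4): for each final component $C$ with head $h_C$, the tree-nodes in $C$ playing at time $t$ form an antichain of descendants of $h_C$ (a component is a contiguous subtree and $\ptime$ is strictly increasing along parent-child edges), so their total $\prob$-mass is at most $\prob(h_C)$. Hence $P(t)\le\sum_{C\text{ crosses }t}\prob(h_C)$, summed over final components whose time-range contains $t$. I then plan to separate the plays contributing to $P(t)$ into those with $\ptime_0(u)=t$ (``unmoved'') and those with $\ptime_0(u)>t$ (``moved''). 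The unmoved contribution is bounded by the original LP extent $\sum_u z_{u,t}\le 1$ via constraint~\eqref{eq:mablp3}. For the moved contribution, I use the guarantee from part~(i): since $\ptime(h_C)\ge 2\,\depth(h_C)$ and $\ptime(h_C)\le t$, we get $\depth(h_C)\le t/2$, and by preflow each of the $\depth(h_C)$ ancestors of $h_C$ in $T_i$ carries $\prob\ge\prob(h_C)$ and occupies a distinct time slot in $[1,\ptime(h_C)-1]\subseteq[1,t-1]$. This ``ancestor budget'' of at least $\depth(h_C)\,\prob(h_C)$ per moved component, combined with the facts that \textsf{GapFill} only shifts probability mass leftward and that the original LP mass at each time slot is at most $1$, gives a bound of $2$ on the moved contribution. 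Summing, $P(t)\le 1+2=3$.

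The main obstacle will be making the charging argument for the moved contribution precise. There are two sources of difficulty: (a) different moved components in the same $\lpt(i,j)$ may share ancestors in $T_i$ when their heads lie on a common root-to-leaf path, so a naive per-component ancestor-charging double-counts; and (b) during \textsf{GapFill}, a single tree-node can move multiple times as its containing component successively merges into ever-larger ancestor components, so a play now at $t$ may have ``crossed $t$'' during any one of several advances. I plan to resolve this by charging each moved play to the \emph{unique} advance that first brought its $\ptime$ from $>t$ to $\le t$; at that moment the advance's witness head $v$ satisfied $\ptime(v)<2\,\depth(v)$, so the ancestors of $v$ collectively used strictly more than half of the $\ptime(v)-1$ time slots below $v$. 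Then the factor-$2$ slack in $\ptime\ge 2\,\depth$ is exactly what makes the ``free'' portion of the budget $[1,t]$ large enough to absorb the inflow and keep the sum bounded by $2$, even after accounting for overlaps along shared root-to-leaf paths.
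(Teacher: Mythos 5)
Your part (i) is fine and matches the paper's (simpler) observation: a surviving head never has its $\ptime$ changed, so any head violating the condition is necessarily advanced away at its own round. For part (ii) your architecture --- split the mass at time $t$ into ``unmoved'' plays (at most $1$ by \lref[Lemma]{lem:convexppt}) and ``moved'' plays, and charge the moved mass to ancestors via \lref[Observation]{obs:treeflow} --- is also the paper's. But the execution has a genuine gap, in two places. First, the depth inequality you extract points the wrong way. From the post-condition $\ptime(\head)\ge 2\cdot\depth(\head)$ you get $\depth(h_C)\le t/2$, an \emph{upper} bound on the number of ancestors; with $\depth(h_C)=1$ your ``ancestor budget'' is just $\prob(h_C)$, and summing against the original mass in $[1,t]$ only yields $\sum_C\prob(h_C)\le t$, not $2$. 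The factor $2$ needs the \emph{lower} bound $\depth(h)>t/2$, which comes from the \emph{violated} condition at the moment a head $h$ with original $\ptime^{B+1}(h)>t$ is first advanced: at that moment $h$ is still its own head carrying its original $\ptime$, so $\ptime^{B+1}(h)<2\cdot\depth(h)$ and hence $\depth(h)>t/2$. Your last paragraph gestures at this but does not carry it out, and in particular never verifies that the $t/2$ ancestors you charge to have \emph{original} $\ptime\le t$ (the paper gets this by restricting to heads whose parent originally sits strictly before $t$; an ``at that moment'' version must worry about ancestors that have themselves already moved).

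Second, your per-play charging scheme breaks for plays that cross $t$ during rounds $\tau<t$ (these exist: a deep node of a component whose head currently sits at $\tau<t$ can be carried from a position $>t$ to exactly $t$ by an advance in round $\tau$). For such a crossing the witness head satisfies only $\depth>\tau/2$, which is not enough ancestors to pay for a slot-$t$ arrival. The paper avoids this entirely by first proving, via contiguity and \lref[Observation]{obs:treeflow}, that the total mass at time $t$ is non-increasing over all rounds after round $\tau=t$, and then bounding the snapshot at the end of round $t$, where every head that has crossed $t$ still had original $\ptime>t$ and original parent-$\ptime<t$. Without that monotonicity step your charging does not close.
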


The proof appears in \lref[Appendix]{sec:details-phase-ii}.


\subsubsection{Phase III: Scheduling the Arms}
\label{sec:phase-iii}

Having done the preprocessing, the rounding algorithm is simple: it
first randomly selects at most one strategy forest from the collection
$\{\lpt(i,j)\}_j$ for each arm $i$. It then picks an arm with the
earliest connected component (i.e., that with smallest
$\ptime(\head(i,j,u))$) that contains the current state (the root
states, to begin with), plays it to the end---which either results in
terminating the arm, or making a transition to a state played much later
in time, and repeats. The formal description appears in
\lref[Algorithm]{alg:roundmab}.  (If there are ties in
\lref[Step]{alg:mabstep4}, we choose the smallest index.) Note that the
algorithm runs as long as there is some active node, regardless of
whether or not we have run out of plays (i.e., the budget is
exceeded)---however, we only count the profit from the first $B$ plays
in the analysis.

\newcommand{\activestate}{\mathsf{currstate}}
\begin{algorithm}[ht!]
  \caption{Scheduling the Connected Components: Algorithm \textsf{AlgMAB}}
  \begin{algorithmic}[1]
  \label{alg:roundmab}
    \STATE for arm $i$, \textbf{sample} strategy $\lpt(i,j)$ with
    probability $\frac{\prob(i,j,\rho_i)}{24}$; ignore arm
    $i$ w.p.\ $1 - \sum_{j}
    \frac{\prob(i,j,\rho_i)}{24}$.  \label{alg:mabstep1}
    \STATE let $A \gets$ set of ``active'' arms which chose a
    strategy in the random process. \label{alg:mabstep2}
    \STATE for each $i \in A$, \textbf{let} $\sigma(i) \gets$ index $j$
    of the chosen $\lpt(i,j)$ and \textbf{let} $\activestate(i) \gets $ root
    $\rho_i$. \label{alg:mabstep3}
    \WHILE{active arms $A \neq \emptyset$}
    \STATE \textbf{let} $i^* \gets$ arm with state played earliest in the
    LP (i.e., $i^* \gets \argmin_{i \in A} \{ \ptime(i, \sigma(i), \activestate(i))
    \}$.  \label{alg:mabstep4}
    \STATE \textbf{let} $\tau \gets \ptime(i^*, \sigma(i^*), \activestate(i^*))$.
    \WHILE{$\ptime(i^*, \sigma(i^*), \activestate(i^*)) \neq \infty$
      \textbf{and} $\ptime(i^*, \sigma(i^*), \activestate(i^*)) = \tau$} \label{alg:mabLoop}
    \STATE \textbf{play} arm $i^*$ at state $\activestate(i^*)$ \label{alg:mabPlay}
    \STATE \textbf{update} $\activestate(i^*)$ be the new state of arm
    $i^*$; \textbf{let} $\tau \gets \tau + 1$. \label{alg:mabstep5}
    \ENDWHILE \label{alg:mabEndLoop}
    \IF{$\ptime(i^*, \sigma(i^*), \activestate(i^*)) = \infty$} \label{alg:mabAbandon}
    \STATE \textbf{let} $A \gets A \setminus \{i^*\}$
    \ENDIF
    \ENDWHILE
\end{algorithmic}
\end{algorithm}

Observe that \lref[Steps]{alg:mabLoop}-\ref{alg:mabstep5} play a connected component of a strategy forest contiguously. In particular, this means that all $\activestate(i)$'s considered in \lref[Step]{alg:mabstep4} are head vertices of the corresponding strategy forests. These facts will be crucial in the  analysis.





\begin{lemma} \label{lem:visitprob}
  For arm $i$ and strategy $\lpt(i,j)$, conditioned on $\sigma(i) = j$ after \lref[Step]{alg:mabstep1} of \textsf{AlgMAB},
the probability of playing state $u \in \Si$
  is $\prob(i,j,u)/\prob(i,j,\rho_i)$, where the probability is over the
  random transitions of arm $i$.
\end{lemma}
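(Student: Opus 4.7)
The plan is to fix arm $i$, condition on the event $\sigma(i) = j$, and then trace the root-to-$u$ path $\rho_i = v_0, v_1, \ldots, v_l = u$ in $T_i$ while matching the algorithm's execution against the recursive definition of $\prob(i,j,\cdot)$. The target identity is that the conditional play probability equals $\prod_{m=1}^{l} p_{v_{m-1},v_m}$ whenever every $\ptime(i,j,v_m) \neq \infty$ and is zero otherwise; unfolding the strategy-forest rule $\prob(i,j,w) = p_{\parent(w),w}\,\prob(i,j,\parent(w))$ together with the convention $\prob(i,j,w) = 0$ when $\ptime(i,j,w) = \infty$ shows that this telescoped product is exactly $\prob(i,j,u)/\prob(i,j,\rho_i)$.

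I will first establish the following operational fact about \textsf{AlgMAB}: whenever an arm $i' \in A$ has activestate $w$ with $\ptime(i', \sigma(i'), w) \neq \infty$, that state eventually gets played in \lref[Step]{alg:mabPlay}. The outer loop runs until $A$ is empty, and in each pass the selected arm either advances its activestate strictly forward in $\ptime$ or is removed from $A$ upon hitting an $\infty$-labelled state; since each strategy forest has finitely many states, arm $i$'s current activestate must eventually attain the minimum $\ptime$ across active arms and be picked at \lref[Step]{alg:mabstep4}. Importantly, the presence of other arms can only delay this moment; it cannot prevent it, so arm $i$'s realized play sequence is governed solely by its own (independent) Markov-chain transitions.

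With this in hand, the calculation becomes a straightforward induction on $l$. The base case $l = 0$: the conditioning event $\sigma(i) = j$ has positive probability only if $\prob(i,j,\rho_i) > 0$, which by the zero/infinity convention forces $\ptime(i,j,\rho_i) \neq \infty$; so the operational fact makes $\rho_i$ played with probability $1 = \prob(i,j,\rho_i)/\prob(i,j,\rho_i)$. For the inductive step, assume $v_{l-1}$ is played with probability $\prob(i,j,v_{l-1})/\prob(i,j,\rho_i)$. Immediately after this play, arm $i$'s Markov chain lands on $v_l$ with probability $p_{v_{l-1},v_l}$, independently of everything else, making $v_l$ the new activestate; then $v_l$ is played iff $\ptime(i,j,v_l) \neq \infty$ (by the operational fact in one direction and by \lref[Step]{alg:mabAbandon} in the other). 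Multiplying, the play probability of $v_l$ equals $p_{v_{l-1},v_l}\cdot\prob(i,j,v_{l-1})/\prob(i,j,\rho_i)\cdot\mathbf{1}[\ptime(i,j,v_l) \neq \infty]$, which by the recursive $\prob$-rule collapses to $\prob(i,j,v_l)/\prob(i,j,\rho_i)$, closing the induction.

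I do not anticipate a substantial obstacle: the lemma is essentially a compatibility statement between the algorithm's state-by-state behavior and the definition of the strategy forest. The only mildly subtle point is the operational fact, but the minimum-$\ptime$ selection rule together with the obvious termination argument handles it cleanly, and the across-arm independence of Markov transitions ensures that interleaving plays of other arms cannot corrupt the conditional distribution over which states of arm $i$ are reached and played.
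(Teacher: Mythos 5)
Your proof is correct and follows essentially the same route as the paper's: trace the unique root-to-$u$ path, observe that \textsf{AlgMAB} simply follows the arm's transition probabilities and abandons only at $\infty$-labelled states, and telescope the strategy-forest recursion $\prob(i,j,w) = p_{\parent(w),w}\prob(i,j,\parent(w))$ to get $\prob(i,j,u)/\prob(i,j,\rho_i)$. The only difference is that you spell out the termination/``operational fact'' (that a finite-$\ptime$ activestate is always eventually played, since the algorithm runs until $A$ is empty regardless of budget), which the paper leaves implicit.
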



The above lemma is relatively simple, and proved in
\lref[Appendix]{sec:details-phase-iii}.  The rest of the section proves
that in expectation, we collect a constant factor of the LP reward of
each strategy $\lpt(i,j)$ before running out of budget; the analysis is
inspired by our \sks rounding procedure.  We mainly focus on the
following lemma.
\begin{lemma} \label{lem:beforetime}
  Consider any arm $i$ and strategy $\lpt(i,j)$. Then, conditioned on $\sigma(i)
  = j$ and on the algorithm playing state $u \in \Si$, the probability that this play happens before time
  $\ptime(i,j,u)$ is at least $1/2$.
\end{lemma}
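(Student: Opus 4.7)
Let $v = \head(i,j,u)$ and $t_v = \ptime(i,j,v)$, so that $\ptime(u) = t_v + (\depth(u) - \depth(v))$. The strategy is to decompose the real time $T(u)$ at which \textsf{AlgMAB} plays $u$ into a deterministic term plus a stochastic ``interference'' term $Y$ coming from arms other than $i$, bound $\E[Y]$ using Lemmas \ref{lem:visitprob} and \ref{lem:gapfill}, and then apply Markov's inequality. The pattern mirrors the analysis of \skssmall in Section~\ref{caseSmall}.

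First I would exploit the structure of \textsf{AlgMAB} to observe that components are processed in non-decreasing order of their heads' $\ptime$: the outer loop always selects the active arm whose current state has the smallest $\ptime$ value, and the inner loop then plays a full contiguous component. Thus if $N_{<C_v}$ denotes the number of plays in all components that precede $v$'s component $C_v$, we have
\[
T(u) \;=\; N_{<C_v} \;+\; \bigl(\depth(u) - \depth(v) + 1\bigr),
\]
since once $C_v$ is reached the algorithm walks straight through to $u$ without interruption (given we have conditioned on $u$ being played). Under the conditioning $\sigma(i)=j$ and ``$u$ is played,'' arm $i$ contributes exactly the $\depth(v)$ ancestors of $v$ to $N_{<C_v}$. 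Writing $Y$ for the remaining contribution from arms $i'\neq i$, we get $T(u) = \depth(u) + 1 + Y$, so it suffices to show $\Pr[Y \le t_v - \depth(v) - 1] \ge 1/2$.

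Next I would estimate $\E[Y]$. By Lemma~\ref{lem:visitprob} combined with Step~\ref{alg:mabstep1}, every state $x \in \lpt(i',j')$ is played with unconditional probability $\prob(i',j',x)/24$. Since an interfering play must lie in a component of some arm $i'\neq i$ whose head $\ptime$ is at most $t_v$ (components with larger head $\ptime$ are processed only after $C_v$), telescoping over such states and then re-grouping by $\ptime$ gives
\[
\E[Y] \;\le\; \frac{1}{24}\!\sum_{t' \le t_v}\;\sum_{(i'',j'',x):\,\ptime(x)=t'} \prob(i'',j'',x) \;\le\; \frac{3\,t_v}{24} \;=\; \frac{t_v}{8},
\]
where the last inequality uses Lemma~\ref{lem:gapfill}(ii). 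Finally, since $Y$ is integer-valued and Lemma~\ref{lem:gapfill}(i) gives $t_v \ge 2\,\depth(v)$ whenever $r_u>0$ (the only case that contributes to the expected reward), Markov's inequality yields
\[
\Pr\bigl[Y \ge t_v - \depth(v)\bigr] \;\le\; \frac{\E[Y]}{t_v - \depth(v)} \;\le\; \frac{t_v/8}{t_v/2} \;=\; \tfrac14 \;\le\; \tfrac12,
\]
which proves $\Pr[T(u) \le \ptime(u)] \ge 1/2$, as desired.

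The main obstacle in this plan is the telescoping step that bounds $\E[Y]$: a component whose head has $\ptime \le t_v$ may contain nodes with $\ptime$ considerably greater than $t_v$ if the component is long, and these ``tail'' plays still precede $u$ in the algorithm. I expect the clean resolution to charge each such overflow play to its own $\ptime$-slot and re-invoke Lemma~\ref{lem:gapfill}(ii) time-slot by time-slot, combining the per-$\ptime$ mass bound with the $1/24$ sampling damping to absorb long components via a union-bound style argument; getting the constants lined up against the $t_v \ge 2\depth(v)$ slack from gap-filling is the delicate part.
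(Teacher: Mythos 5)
Your overall architecture matches the paper's: reduce to the head $v$ of $u$'s connected component, bound the expected interference from other arms, and apply Markov using the slack $t_v \ge 2\depth(v)$ from \textsf{GapFill}. But there is a genuine gap exactly where you flag it, and your proposed repair does not work. Your displayed bound
$\E[Y] \le \frac{1}{24}\sum_{t'\le t_v}\sum_{\ptime(x)=t'}\prob(\cdot)$
counts only states with $\ptime(x)\le t_v$, whereas an interfering component whose head has $\ptime \le t_v$ can contain arbitrarily many ``tail'' states with $\ptime(x) > t_v$, all of which are played before $v$. Your suggested fix --- charging each overflow play to its own $\ptime$-slot and invoking Lemma~\ref{lem:gapfill}(ii) slot by slot --- gives $\sum_{t'>t_v} 3$, which is unbounded; there is no union bound that rescues this.

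The paper's resolution has two ingredients you are missing. First, each arm's contribution is \emph{truncated} at $t_v$: the variable $Z_{i',j'}$ is defined as $\min(t_v, \sum \mathbf{1}_{(i',j',v')})$, because once a single arm has contributed $t_v$ interfering plays the event we care about has already failed, so counting further is pointless. Contiguity then implies that any tail state with $\ptime(x) > 2t_v$ has more than $t_v$ ancestors played from time $t_v$ onward, so it never affects the truncated sum and can be dropped outright. Second, for the remaining tail range $t' \in (t_v, 2t_v]$, Observation~\ref{obs:treeflow} (the preflow property) bounds the $\prob$-mass of straddling states at each such $t'$ by the mass of their ancestors at time $t_v$, contributing an extra $t_v\cdot\sum_{\ptime=t_v}\prob(\cdot)$ term. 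This yields $\E[Z] \le \frac{1}{24}(3t_v + 3t_v) = t_v/4$, and Markov against the threshold $t_v/2$ gives exactly $1/2$ --- so the factor-of-two cushion you believe you have ($t_v/8$ versus $t_v/2$) is illusory. A secondary point: you apply Lemma~\ref{lem:visitprob} unconditionally, but you need the expectation of $Y$ \emph{conditioned} on $\Evt_{iju}$; the paper justifies this by noting that $Z$ depends only on the deterministic order $\preceq$ and on the random choices of arms $i'\ne i$, hence is independent of $\Evt_{iju}$. You should make the analogous independence argument explicit for your $Y$.
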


\newcommand{\Evt}{\mathcal{E}}
\newcommand{\bfv}{\mathbf{v}}
\begin{proof}
  Fix an arm $i$ and an index $j$ for the rest of the proof. Given a state $u \in \Si$, let $\Evt_{iju}$
  denote the event $(\sigma(i) = j) \wedge (\text{state $u$ is played})$. Also, let $\bfv = \head(i,j,u)$ be the head of the
  connected component containing $u$ in $\lpt(i,j)$.  Let r.v.\ $\tau_u$
  (respectively $\tau_\bfv$) be the actual time at which state $u$ (respectively state $\bfv$) is played---these random variables take
  value $\infty$ if the arm is not played in these states. Then
  \begin{equation}
     \Pr [ \tau_u \leq \ptime(i,j,u) \mid \Evt_{iju} ] \geq
    \ts \frac{1}{2}
    \iff \Pr [ \tau_\bfv \leq \ptime(i,j,\bfv) \mid \Evt_{iju} ] \geq
\ts \frac{1}{2}\label{eq:7},
  \end{equation}
  because the time between playing $u$ and $\bfv$ is exactly
  $\ptime(i,j,u) - \ptime(i,j,\bfv)$ since the algorithm plays connected components continuously
  (and we have conditioned on $\Evt_{iju}$). Hence, we can just
  focus on proving the right inequality in~(\ref{eq:7}) for vertex $\bfv$.

  For  brevity of notation, let $t_\bfv = \ptime(i,j,\bfv)$. In addition, we define the order $\preceq$ to indicate which states can be played before $\bfv$. That is, again making use of the fact that the algorithm plays connected components contiguously, we say that $(i',j',v') \preceq (i,j,\bfv)$ iff $\ptime(\head(i',j',v')) \le  \ptime(\head(i,j,\bfv))$. Notice that this order is independent of the run of the algorithm.

  For each arm $i' \neq i$ and index $j'$, we define random variables
  $Z_{i'j'}$ used to count the number of plays that can possibly occur
  before the algorithm plays state $\bfv$. If $\mathbf{1}_{(i',j',v')}$ is
  the indicator variable of event $\Evt_{i'j'v'}$, define
  \begin{equation}
    \ts Z_{i',j'} =  \min \big( t_\bfv \; , \; \sum_{v': (i',j',v') \preceq (i,j,\bfv)}
      \mathbf{1}_{(i',j',v')} \big)~.\label{eq:8}
  \end{equation}
  We truncate $Z_{i',j'}$ at $t_\bfv$
  because we just want to capture how much time \emph{up to} $t_\bfv$
  is being used. Now consider the sum $Z = \sum_{i' \neq i} \sum_{j'}
  Z_{i',j'}$.  Note that for arm $i'$, at most one of the $Z_{i',j'}$
  values will be non-zero in any scenario, namely the index
  $\sigma(i')$ sampled in \lref[Step]{alg:mabstep1}. The first claim
  below shows that it suffices to consider the upper tail of~$Z$, and
  show that $\Pr[Z \geq t_\bfv/2] \leq 1/2$, and the second gives a
  bound on the conditional expectation  of $Z_{i',j'}$. 

  \begin{claim} \label{cl:sumbound} $\Pr[ \tau_\bfv \leq t_\bfv \mid \Evt_{iju} ]
    \geq \Pr[ Z \leq t_\bfv/2 ]$.
  \end{claim}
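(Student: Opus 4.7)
The plan is to show the stronger deterministic containment of events: on the event $\Evt_{iju}$, whenever $Z\leq t_\bfv/2$ we also have $\tau_\bfv \leq t_\bfv$. Once this is done, we invoke the independence between $Z$ and $\Evt_{iju}$ (they depend on disjoint sets of arms, and the algorithm plays connected components contiguously so the identity of the states played in each arm depends only on that arm's transitions and on $\sigma$) to conclude
\[
\Pr[\tau_\bfv \leq t_\bfv \mid \Evt_{iju}] \;\geq\; \Pr[Z \leq t_\bfv/2 \mid \Evt_{iju}] \;=\; \Pr[Z \leq t_\bfv/2].
\]

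To establish the containment, first I would argue that on $\Evt_{iju}$ state $\bfv$ is indeed played, since $\bfv$ is an ancestor of $u$ in $T_i$ and, by the definition of head, lies on the path from $\rho_i$ to $u$ inside the strategy forest $\lpt(i,j)$. Then $\tau_\bfv$ equals $1$ plus the number of plays that occur before $\bfv$ in the execution. I would decompose this count as (a) plays inside arm $i$ of ancestors of $\bfv$, which contribute exactly $\depth(\bfv)$ plays, plus (b) plays of states in arms $i'\neq i$ that the greedy rule in \lref[Step]{alg:mabstep4} schedules before $\bfv$.

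For (b), the key observation is that \textsf{AlgMAB} picks the active arm with smallest head-$\ptime$ and plays its current connected component to completion; hence a state $v'$ of arm $i'$ is played before $\bfv$ only if $v'$ is actually reached by arm $i'$'s transitions under strategy $\sigma(i')$ (the event $\Evt_{i',\sigma(i'),v'}$), and if $\ptime(\head(i',\sigma(i'),v')) \le \ptime(\head(i,j,\bfv))=t_\bfv$, i.e. $(i',\sigma(i'),v')\preceq (i,j,\bfv)$. Consequently the total contribution of arm $i'$ to plays before $\bfv$ is at most
\[
Z'_{i'} \;:=\; \sum_{v':\,(i',\sigma(i'),v')\preceq (i,j,\bfv)} \mathbf{1}_{\Evt_{i',\sigma(i'),v'}},
\]
and plays of different strategies $j'\neq\sigma(i')$ for the same arm $i'$ never occur, so $Z_{i',j'}=0$ for those indices.

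Now suppose the event $\{Z\leq t_\bfv/2\}$ holds. Since $Z=\sum_{i'\neq i}\min(t_\bfv,Z'_{i'})$ and $t_\bfv/2<t_\bfv$, the truncation at $t_\bfv$ is inactive for every $i'$, hence $\sum_{i'\neq i} Z'_{i'} = Z \leq t_\bfv/2$. Combining with the gap-filling bound from \lref[Lemma]{lem:gapfill}(i), namely $\depth(\bfv)=\depth(\head(i,j,u)) \leq t_\bfv/2$, the total number of plays before $\bfv$ is at most $\depth(\bfv)+\sum_{i'\neq i}Z'_{i'}\leq t_\bfv$, giving $\tau_\bfv\leq t_\bfv$ (up to a harmless additive $1$ that can be absorbed by choosing constants, or handled by tie-breaking in the $\preceq$-ordering).

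The main obstacle, and the step I would spend the most care on, is (b): namely justifying that $Z'_{i'}$ really upper-bounds arm $i'$'s plays before $\bfv$. This requires using the facts that components are played contiguously in \lref[Steps]{alg:mabLoop}--\ref{alg:mabEndLoop}, that the heads along any root-to-leaf path in $\lpt(i',\sigma(i'))$ have strictly increasing $\ptime$, and that states $v'$ which would be in a component scheduled after $\bfv$'s component never get played before $\bfv$. Once this structural observation is pinned down, the rest of the argument is the clean two-halves budgeting between the path cost $\depth(\bfv)$ and the interference mass $Z$, each bounded by $t_\bfv/2$ thanks respectively to \lref[Lemma]{lem:gapfill} and the hypothesis $Z\leq t_\bfv/2$.
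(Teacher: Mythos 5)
Your proposal is correct and follows essentially the same route as the paper: establish the deterministic containment (on $\Evt_{iju}$, the event $Z \le t_\bfv/2$ forces all interfering plays plus the $\depth(\bfv) \le t_\bfv/2$ ancestor plays guaranteed by \lref[Lemma]{lem:gapfill}(i) to fit within $t_\bfv$ slots, so $\tau_\bfv \le t_\bfv$), and then drop the conditioning because $Z$ depends only on the transitions and sampled strategies of arms $i' \neq i$, hence is independent of $\Evt_{iju}$. You simply spell out in more detail the counting step that the paper states more tersely, so no further commentary is needed.
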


  \begin{proof}
    We first claim that $\Pr[ \tau_\bfv \leq t_\bfv \mid \Evt_{iju} ] \geq
    \Pr[ Z \leq t_\bfv/2 \mid \Evt_{iju}]$. So, let us condition on
    $\Evt_{iju}$.  Then if $Z \leq t_\bfv/2$, none of the $Z_{i',j'}$
    variables were truncated at $t_\bfv$, and hence $Z$ exactly counts
    the total number of plays (by all other arms $i' \neq i$, from any
    state) that could possibly be played before the algorithm plays $v$
    in strategy $\lpt(i,j)$. Therefore, if $Z$ is smaller than $t_\bfv/2$, then
    combining this with the fact that $\depth(v) \leq t_\bfv/2$ (from
    \lref[Lemma]{lem:gapfill}(i)), we can infer that all the plays
    (including those of $v$'s ancestors) that can be made before playing
    $v$ can indeed be completed within $t_\bfv$. In this case the
    algorithm will definitely play $v$ before $t_\bfv$; hence we get
    that conditioning on $\Evt_{iju}$, the event $\tau_\bfv \leq t_\bfv$ holds when $Z \leq
    t_\bfv/2$.

    Finally, to remove the conditioning: note that $Z_{i'j'}$ is just a
    function of (i) the random variables $\mathbf{1}_{(i',j',v')}$, i.e., the random choices made by playing $\lpt(i',j')$, and (ii) the constant $t_\bfv = \ptime(i,j,v)$. However, the r.vs $\mathbf{1}_{(i',j',v')}$ are clearly independent of the event $\Evt_{iju}$ for $i' \neq i$ since the plays of \textsf{AlgMAB} in one arm are independent of the others, and $\ptime(i,j,v)$ is a constant determined once the strategy forests are created in Phase II.  Hence the event $Z \leq t_\bfv/2$ is independent of $\Evt_{iju}$; hence $\Pr[ Z \leq t_\bfv/2 \mid \Evt_{iju}] = \Pr[ Z
    \leq t_\bfv/2]$, which completes the proof.
\end{proof}

  \begin{claim} \label{cl:localexp}
    \[ {\displaystyle \E[ Z_{i',j'} \, | \, \sigma(i') = j'] \leq
      \sum_{v'~\textsf{s.t}~ \ptime(i',j',v') \leq t_\bfv}
      \frac{\prob(i',j',v')}{\prob(i',j',\rho_{i'})} + t_\bfv \left(
        \sum_{v'~\textsf{s.t}~ \ptime(i',j',v') = t_\bfv}
        \frac{\prob(i',j',v')}{\prob(i',j',\rho_{i'})} \right) } \]
  \end{claim}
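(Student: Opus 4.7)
The plan is to combine \lref[Lemma]{lem:visitprob}---which gives $\Pr[v'\text{ is played}\mid \sigma(i')=j'] = \prob(i',j',v')/\prob(i',j',\rho_{i'})$---with the contiguous-component play rule of \textsf{AlgMAB} to account for the truncation at $t_\bfv$.

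First, recall that $(i',j',v') \preceq (i,j,\bfv)$ iff $\ptime(\head(i',j',v')) \leq t_\bfv$. I would split the untruncated summation inside the $\min$ defining $Z_{i',j'}$ into
\[ W = \ts\sum_{v' \,:\, \ptime(i',j',v') \,\leq\, t_\bfv} \mathbf{1}_{(i',j',v')}, \qquad Y = \ts\sum_{v' \,:\, \ptime(\head(i',j',v')) \,\leq\, t_\bfv \,<\, \ptime(i',j',v')} \mathbf{1}_{(i',j',v')}, \]
so that $W + Y$ equals the untruncated sum in~\eqref{eq:8}. Since the $\ptime$ values along any realized trajectory through $\lpt(i',j')$ are strictly increasing (each play takes a distinct time step), one has $W \leq t_\bfv$ pointwise, and a short case analysis then gives $Z_{i',j'} = \min(t_\bfv, W+Y) \leq W + t_\bfv \cdot \mathbf{1}[Y \geq 1]$. \lref[Lemma]{lem:visitprob} and linearity of expectation immediately yield
\[ \E[W \mid \sigma(i')=j'] = \ts\sum_{v' \,:\, \ptime(i',j',v') \,\leq\, t_\bfv} \frac{\prob(i',j',v')}{\prob(i',j',\rho_{i'})}, \]
which matches the first summand of the claimed bound.

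The main step---and the one I expect to be the main obstacle to write cleanly---is to bound $\Pr[Y \geq 1 \mid \sigma(i')=j']$ by the conditional probability that \emph{some} state with $\ptime$ exactly $t_\bfv$ is played. The key observation is that \textsf{AlgMAB} plays each connected component contiguously from its head (Steps~\ref{alg:mabLoop}--\ref{alg:mabstep5}). Thus, if $v'$ contributes to $Y$, its head $w = \head(i',j',v')$ satisfies $\ptime(w) \leq t_\bfv < \ptime(i',j',v')$, and since $\ptime$ values increase by one along each edge inside $\comp(w)$, there is a unique ancestor $v''$ of $v'$ in $\comp(w)$ with $\ptime(i',j',v'') = t_\bfv$. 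Because $v'$ was actually played and the algorithm plays the path from $w$ down to $v'$ contiguously, $v''$ must also be played. Letting $T = \sum_{v'' : \ptime(i',j',v'')=t_\bfv} \mathbf{1}_{(i',j',v'')}$, this shows $\mathbf{1}[Y \geq 1] \leq T$ pointwise, so Markov's inequality and \lref[Lemma]{lem:visitprob} give
\[ \Pr[Y \geq 1 \mid \sigma(i')=j'] \leq \E[T \mid \sigma(i')=j'] = \ts\sum_{v'' \,:\, \ptime(i',j',v'')=t_\bfv} \frac{\prob(i',j',v'')}{\prob(i',j',\rho_{i'})}, \]
which is exactly the second summand. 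Taking conditional expectation of $Z_{i',j'} \leq W + t_\bfv \cdot \mathbf{1}[Y \geq 1]$ and summing the two bounds completes the proof; the only delicate point is the structural argument linking $Y \geq 1$ to a play at time exactly $t_\bfv$ via the contiguous-component rule.
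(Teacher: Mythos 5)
Your proposal is correct, and it reaches the second summand of the bound by a genuinely different route than the paper. Both arguments handle the first summand identically: the expectation of $\sum_{v':\,\ptime(i',j',v')\le t_\bfv}\mathbf{1}_{(i',j',v')}$ is computed via \lref[Lemma]{lem:visitprob}. For the overflow term, the paper argues deterministically about the LP values: it first discards states with $\ptime(i',j',v')>2t_\bfv$ (showing the truncation already accounts for them via the depth argument), and then, for each time slice $t'\in(t_\bfv,2t_\bfv]$ separately, charges $\sum_{v':\,\ptime=t'}\prob(i',j',v')$ to $\sum_{v':\,\ptime=t_\bfv}\prob(i',j',v')$ using the preflow property (\lref[Observation]{obs:treeflow}); the factor $t_\bfv$ arises as the number of slices. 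You instead bound the whole overflow pointwise by $t_\bfv\cdot\mathbf{1}[Y\ge 1]$ using the truncation, and then dominate $\mathbf{1}[Y\ge 1]$ by the number of states played at time exactly $t_\bfv$, via the observation that any played $v'$ with $\ptime(\head(v'))\le t_\bfv<\ptime(v')$ has a played ancestor in its component with $\ptime$ exactly $t_\bfv$. This is an event-level domination rather than an LP-value charging argument; it buys you a proof that never invokes \lref[Observation]{obs:treeflow} or the $2t_\bfv$ cutoff, at the cost of leaning on the contiguity of component plays (which the paper's version of this step does not need, since it reasons only about the static $\prob$ values). Two cosmetic remarks: the inequality $\Pr[Y\ge 1]\le\E[T]$ is monotonicity of expectation applied to $\mathbf{1}[Y\ge 1]\le T$ (or Markov applied to the integer variable $T$), and the observation $W\le t_\bfv$ is not actually needed for your case analysis, since $Y=0$ gives $\min(t_\bfv,W)\le W$ directly and $Y\ge 1$ gives $\min(t_\bfv,W+Y)\le t_\bfv$.
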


\begin{proof}
  Recall the definition of $Z_{i'j'}$ in Eq~(\ref{eq:8}): any state $v'$
  with $\ptime(i',j',v') > t_\bfv$ may contribute to the sum only if it
  is part of a connected component with head $\head(i',j',v')$ such that
  $\ptime(\head(i',j',v')) \leq t_\bfv$, by the definition of the
  ordering $\preceq$.  Even among such states, if $\ptime(i',j',v') >
  2t_\bfv$, then the truncation implies that $Z_{i',j'}$ is unchanged
  whether or not we include $\mathbf{1}_{(i',j',v')}$ in the sum.
  Indeed, if $\mathbf{1}_{(i',j',v')} = 1$ then all of $v'$'s ancestors
  will have their indicator variables at value $1$; moreover $\depth(v')
  > t_\bfv$ since there is a contiguous collection of nodes that are
  played from this tree $\lpt(i',j')$ from time $t_\bfv$ onwards till
  $\ptime(i',j',v') > 2t_\bfv$; so the sum would be truncated at value
  $t_\bfv$ whenever $\mathbf{1}_{(i',j',v')} = 1$.
  Therefore, we can write
  \begin{equation}
    \label{eq:12}
    Z_{i',j'} \leq \sum_{v':\ptime(i',j',v') \leq t_\bfv}
    \mathbf{1}_{(i',j',v')} + \sum_{\substack{v': t_\bfv <
        \ptime(i',j',v') \leq 2t_\bfv \\ (i',j',v') \preceq (i,j,v)} }
    \mathbf{1}_{(i',j',v')}
  \end{equation}
  Recall we are interested in the conditional expectation given
  $\sigma(i') = j'$. Note that $\Pr[\mathbf{1}_{(i',j',v')} \mid
  \sigma(i') = j'] = \prob(i',j',v')/\prob(i',j',\rho_{i'})$ by
  \lref[Lemma]{lem:visitprob}, hence the first sum in~(\ref{eq:12})
  gives the first part of the claimed bound. Now the second part:
  observe that for any arm $i'$, any fixed value of $\sigma(i') = j'$,
  and any value of $t' \geq t_\bfv$,
  \[{\displaystyle \sum_{\substack{v'~\textsf{s.t}~\ptime(i',j',v') = t'
        \\ (i',j',v') \preceq (i,j,v)}} \prob(i',j',v') \leq
    \sum_{\substack{ v'~\textsf{s.t}~\ptime(i',j',v') = t_\bfv }}
    \prob(i',j',v') }\]
    This is because of the following argument: Any state that appears on the LHS of the sum above is part of a connected component which crosses $t_\bfv$, they must have an ancestor which is played at $t_\bfv$. Also, since all states which appear in the LHS are played at $t'$, no state can be an ancestor of another. Hence, we can apply the second part of \lref[Observation]{obs:treeflow} and get the above inequality. Combining this with the fact that $\Pr[\mathbf{1}_{(i',j',v')} \mid
  \sigma(i') = j'] = \prob(i',j',v')/\prob(i',j',\rho_{i'})$, and applying it for each value of $t'
  \in (t_\bfv, 2t_\bfv]$, gives us the second term.
\end{proof}

Equipped with the above claims, we are ready to complete the proof of \lref[Lemma]{lem:beforetime}. Employing \lref[Claim]{cl:localexp} we get
  \begin{align}
    \E[ Z ] &=  \sum_{i' \neq i} \sum_{j'} \E[Z_{i',j'} ]
    = \sum_{i' \neq i} \sum_{j'} \E[Z_{i',j'} \mid \sigma(i') = j']\cdot
    \Pr[\sigma(i') = j'] \notag \\
    &= \frac{1}{24} \sum_{i' \neq i} \sum_{j'} \bigg\{
    \sum_{v':\ptime(i',j',v') \leq t_\bfv} \prob(i',j',v') +
    t_\bfv \bigg( \sum_{v': \ptime(i',j',v') = t_\bfv}
    \prob(i',j',v') \bigg) \bigg\} \label{eq:10}\\
    &= \frac{1}{24} \left( 3 \cdot t_\bfv + 3\cdot t_\bfv \right) \leq
    \frac{1}{4} t_\bfv \;. \label{eq:11}
  \end{align}
  Equation~(\ref{eq:10}) follows from the fact that each tree $\lpt(i,j)$ is
  sampled with probability $\frac{\prob(i,j,\rho_i)}{24}$ and (\ref{eq:11}) follows from \lref[Lemma]{lem:gapfill}. Applying Markov's inequality, we have that $\Pr[
  Z \geq t_\bfv/2 ] \leq 1/2$. Finally, \lref[Claim]{cl:sumbound} says
  that $\Pr[ \tau_\bfv \leq t_\bfv \mid \Evt_{iju} ] \geq \Pr[Z \leq
  t_\bfv/2 ] \geq 1/2$, which completes the proof.
\end{proof}

\begin{theorem}
  \label{thm:main-mab}
  The reward obtained by the algorithm~\textsf{AlgMAB} is at least $\Omega(\LPOpt)$.
\end{theorem}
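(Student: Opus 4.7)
The plan is to assemble the theorem from the three lemmas already proved: the decomposition \lref[Lemma]{lem:convexppt}, the conditional visit probability \lref[Lemma]{lem:visitprob}, and the ``played-by-deadline'' bound \lref[Lemma]{lem:beforetime}, together with the sampling probability $\prob(i,j,\rho_i)/24$ in \lref[Step]{alg:mabstep1}. The key observation is that \textsf{GapFill} only modifies the $\ptime$ values (and only decreases them), while leaving the $\prob$ values unchanged; hence the decomposition still satisfies $\sum_{j} \prob(i,j,u) = \sum_t z_{u,t}$ for every state $u$, and in particular $\LPOpt = \sum_{u,t} r_u\, z_{u,t} = \sum_{i,j,u} r_u\, \prob(i,j,u)$. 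Moreover, since the LP only defines variables for $t \in [1,B]$, every $\ptime(i,j,u)$ is at most $B$ both before and after \textsf{GapFill}.

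Next, I would fix an arbitrary triple $(i,j,u)$ and compute the probability that \textsf{AlgMAB} plays state $u$ and collects reward $r_u$ from it within the budget. Chaining the three conditioning steps:
\begin{itemize}
\item by \lref[Step]{alg:mabstep1}, $\Pr[\sigma(i)=j] = \prob(i,j,\rho_i)/24$;
\item conditioned on $\sigma(i)=j$, \lref[Lemma]{lem:visitprob} gives that state $u$ is played with probability $\prob(i,j,u)/\prob(i,j,\rho_i)$ (the randomness here being the transitions of arm $i$);
\item conditioned further on $u$ actually being played, \lref[Lemma]{lem:beforetime} ensures that this play happens by time $\ptime(i,j,u)\le B$ with probability at least $1/2$.
\end{itemize}
Multiplying these together (they multiply because the three sources of randomness are independent: the sampling in Step~1, the transitions of arm $i$, and the transitions/samplings of the other arms that drive \lref[Lemma]{lem:beforetime}), the probability that $u$ is played and contributes its reward before the budget runs out is at least $\prob(i,j,u)/48$. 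Hence the expected reward contributed by $(i,j,u)$ to \textsf{AlgMAB} is at least $r_u\,\prob(i,j,u)/48$.

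Finally I would sum over all $(i,j,u)$ using linearity of expectation and the identity $\LPOpt = \sum_{i,j,u} r_u\,\prob(i,j,u)$ recalled above, obtaining total expected reward at least $\LPOpt/48 = \Omega(\LPOpt)$. A small sanity check I would include: the algorithm may exceed the budget in its execution, but only plays completed strictly before time $B$ are counted; this is precisely why the bound $\ptime(i,j,u)\le B$ together with \lref[Lemma]{lem:beforetime} suffices.

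The main conceptual obstacle is essentially already discharged by \lref[Lemma]{lem:beforetime}, whose proof relied on the gap-filling invariant $\ptime(\head(v))\ge 2\depth(\head(v))$ and on the uniform bound of $3$ on total play mass per time slot; what remains here is a routine but delicate bookkeeping argument to ensure that the three conditionings are truly independent and that no reward is lost to gap-filling (the $\prob$ values are preserved and $\ptime$ values only decrease, so the $\le B$ deadline is never violated by the LP solution after Phase~II).
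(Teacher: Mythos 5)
Your proposal is correct and follows essentially the same route as the paper's proof: chain the sampling probability $\prob(i,j,\rho_i)/24$ from \lref[Step]{alg:mabstep1}, the conditional visit probability from \lref[Lemma]{lem:visitprob}, and the $1/2$ deadline bound from \lref[Lemma]{lem:beforetime}, then sum over all $(i,j,u)$ via \lref[Lemma]{lem:convexppt} and linearity of expectation to get expected reward at least $\LPOpt/48$. The added sanity checks (that \textsf{GapFill} preserves the $\prob$ values and only decreases $\ptime$, so the marginals identity and the $\le B$ deadline survive Phase~II) are correct and consistent with the paper's argument.
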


\begin{proof}
The theorem follows by a simple linearity of expectation. Indeed, the expected reward obtained from any state $u \in \Si$ is at least $\sum_{j} \Pr[\sigma(i) = j] \Pr[\textsf{state }~u~\textsf{is played}  \mid  \sigma(i) = j] \Pr [ \tau_u \leq t_u  |  \Evt_{iju}] \cdot R_u \geq \sum_{j} \frac{ \prob(i,j,u)}{24} \frac{1}{2} \cdot R_u$. Here, we have used \lref[Lemmas]{lem:visitprob} and~\ref{lem:beforetime} for the second and third probabilities. But now we can use \lref[Lemma]{lem:convexppt} to infer that $\sum_j \prob(i,j,u) = \sum_t z_{u,t}$; Making this substitution and summing over all states $u \in \Si$ and arms $i$ completes the proof.
\end{proof}

\newcommand{\lpd}{\mathbb{D}}
\newcommand{\lpdt}{\mathbb{DT}}
\newcommand{\state}{\mathsf{state}}
\newcommand{\mroot}{\mathsf{root}}
\newcommand{\currnode}{\mathsf{currnode}}

\section{MABs with Arbitrary Transition Graphs}
\label{dsec:mab}

We now show how we can use techniques akin to those we described for the
case when the transition graph is a tree, to handle the case when it can
be an arbitrary directed graph. A na\"{\i}ve way to do this is to expand
out the transition graph as a tree, but this incurs an exponential
blowup of the state space which we want to avoid. We can assume we
have a layered DAGs, though, since the conversion from a digraph to a
layered DAG only increases the state space by a factor of the horizon
$B$; this standard reduction appears in
\lref[Appendix]{dsec:layered-enough}.

While we can again write an LP relaxation of the problem for layered DAGs, the
challenge arises in the rounding algorithm: specifically, in (i)
obtaining the convex decomposition of the LP solution as in Phase~I, and
(ii) eliminating small gaps as in Phase~II by advancing forests in the
strategy.
\begin{itemize}
\item We handle the first difficulty by considering convex
  decompositions not just over strategy forests, but over slightly
  more sophisticated strategy DAGs. Recall (from \lref[Figure]{fig:treeforest}) that in the tree case, each
  state in a strategy forest was labeled by a unique time and a unique
  probability associated with that time step. As the name suggests, we
  now have labeled DAGs---but the change is more than just that.  Now
  each state has a copy associated with \emph{each} time step in
  $\{1, \ldots, B\}$. This change tries to capture the fact that our strategy may
  play from a particular state $u$ at different times depending on the
  path taken by the random transitions used to reach this state.  (This
  path was unique in the tree case.)

\item Now having sampled a strategy DAG for each arm, one can expand
  them out into strategy forests (albeit with an exponential blow-up in the size), and use Phases~II and~III from our
  previous algorithm---it is not difficult to prove that this algorithm is a constant-factor
  approximation. However, the above is not a poly-time algorithm, since the size of the strategy forests may be exponentially large. If we don't expand the DAG, then we do not see how to
  define gap elimination for Phase~II. But we observe that instead of
  explicitly performing the advance steps in Phase~II, it suffices to
  perform them as a \emph{thought experiment}---i.e., to not alter the
  strategy forest at all, but merely to infer when these advances would
  have happened, and play accordingly in the Phase~III~\footnote{This is similar to the idea of lazy evaluation of strategies. The DAG contains an implicit randomized strategy which we make explicit as we toss coins of the various outcomes using an algorithm.}.  Using this, we
  can give an algorithm that plays just on the DAG, and argue that the
  sequence of plays made by our DAG algorithm faithfully mimics the
  execution if we had constructed the exponential-size tree from the
  DAG, and executed Phases~II and~III on that tree.
\end{itemize}
The details of the LP rounding algorithm for layered DAGs follows in
\lref[Sections]{dsec:lp-dag}-\ref{dsec:phase-iii}.

\subsection{LP Relaxation} \label{dsec:lp-dag}

There is only one change in the LP---constraint~\eqref{eq:mabdaglp1} now
says that if a state $u$ is visited at time $t$, then one of its
ancestors must have been pulled at time $t-1$; this ancestor was
unique in the case of trees.
\begin{alignat}{2} \tag{$\mathsf{LP}_\mathsf{mabdag}$} \label{lp:mabdag}
	\max \ts \sum_{u,t} r_u &\cdot z_{u,t}\\
	w_{u,t} &= \sum_{v} z_{v, t-1}  \cdot p_{v,u} & \qquad \forall t \in [2,B],\, u \in \S \setminus \cup_{i} \{\rho_i\},\, v \in \S \label{eq:mabdaglp1}\\
	\ts \sum_{t' \le t} w_{u,t'} &\geq \ts \sum_{t' \leq t} z_{u,t'} & \qquad \forall t \in [1,B], \, u \in \S \label{eq:mabdaglp2}\\
	\ts \sum_{u \in \S} z_{u,t} &\le 1 & \qquad \forall t \in [1,B]  \label{eq:mabdaglp3}\\
	w_{\rho_i, 1} &= 1 & \qquad \forall i \in [1,n] \label{eq:mabdaglp4}
\end{alignat}	
Again, a similar analysis to the tree case shows that this is a valid
relaxation, and hence the LP value is at least the optimal expected reward.

\subsection{Convex Decomposition: The Altered Phase~I}
\label{dsec:phase-i}

This is the step which changes the most---we need to incorporate the
notion of peeling out a ``strategy DAG'' instead of just a tree.  The
main complication arises from the fact that a play of a state $u$ may
occur at different times in the LP solution, depending on the path to
reach state $u$ in the transition DAG. However, we don't need to keep
track of the entire history used to reach $u$, just how much time has
elapsed so far. With this in mind, we create $B$ copies of each state
$u$ (which will be our nodes in the strategy DAG), indexed by $(u,t)$ for $1 \leq t \leq B$.

The $j^{th}$ \emph{strategy dag} $\lpd(i,j)$ for arm $i$ is an
assignment of values $\prob(i,j,u,t)$ and a relation `$\rightarrow$'
from 4-tuples to 4-tuples of the form $(i,j,u,t) \rightarrow (i,j,v,t')$
such that the following properties hold:
\begin{OneLiners}
\item[(i)] For $u,v \in \Si$ such that $p_{u,v} > 0$ and any time $t$,
  there is exactly one time $t' \geq t+1$ such that $(i,j,u,t)
  \rightarrow (i,j,v,t')$. Intuitively, this says if the arm is played
  from state $u$ at time $t$ and it transitions to state $v$, then it is
  played from $v$ at a unique time $t'$, if it played at all. If $t' =
  \infty$, the play from $v$ never happens.
\item[(ii)] For any $u \in \Si$ and time $t \neq \infty$, $\prob(i,j,u,t) = \sum_{(v,t')~\mathsf{s.t}~(i,j,v,t')\rightarrow(i,j,u,t)} \prob(i,j,v,t') \cdot p_{v,u}$.
\end{OneLiners}

For clarity, we use the following notation throughout the remainder of the section: \emph{states} refer to the states in the original transition DAG, and \emph{nodes} correspond to the tuples $(i,j,u,t)$ in the strategy DAGs. When $i$ and $j$ are clear in context, we may simply refer to a node of the strategy DAG by $(u,t)$.


Equipped with the above definition, our convex decomposition procedure
appears in \lref[Algorithm]{dalg:dconvex}. The main subroutine involved
is presented first~(\lref[Algorithm]{dalg:convex-sub}). This subroutine,
given a fractional solution, identifies the structure of the DAG that
will be peeled out, depending on when the different states are first
played fractionally in the LP solution. Since we have a layered DAG, the
notion of the \emph{depth} of a state is well-defined as the number of
hops from the root to this state in the DAG, with the depth of the root
being $0$.

\newcommand{\peelStrat}{\textsf{PeelStrat}\xspace}
\newcommand{\pprob}{\mathsf{peelProb}}

\begin{algorithm}[ht!]
\caption{Sub-Routine \peelStrat(i,j)}
\begin{algorithmic}[1]
\label{dalg:convex-sub}
	\STATE {\bf mark} $(\rho_i,t)$ where $t$ is the earliest time s.t.\ $z_{\rho_i,t} > 0$ and set $\pprob(\rho_i,t) = 1$. All other nodes are un-marked and have $\pprob(v,t') = 0$.
	\WHILE {$\exists$ a marked unvisited node}
        \STATE {\bf let} $(u,t)$ denote the marked node of smallest depth and earliest time; {\bf update} its status to visited.
	\FOR {every $v$ s.t.\ $p_{u,v} > 0$}
		\IF{there is $t'$ such that $z_{v,t'} > 0$, consider the earliest such $t'$ and}
			\STATE {\bf mark} $(v,t')$ and {\bf set}
                        $(i,j,u,t) \rightarrow (i,j,v,t')$; {\bf update} $\pprob(v,t') := \pprob(v,t') + \pprob(u,t)\cdot p_{u,v}$. \label{dalg:peel3}
		\ELSE
			\STATE {\bf set} $(i,j,u,t) \rightarrow (i,j,v,\infty)$ and leave $\pprob(v,\infty) = 0$.
		\ENDIF
	\ENDFOR
	\ENDWHILE
\end{algorithmic}
\end{algorithm}

The convex decomposition algorithm is now very easy to describe with the sub-routine in \lref[Algorithm]{dalg:convex-sub} in hand.

\begin{algorithm}[ht!]
\caption{Convex Decomposition of Arm $i$}
\begin{algorithmic}[1]
\label{dalg:dconvex}
	\STATE {\bf set} ${\cal C}_i \leftarrow \emptyset$ and {\bf set loop index} $j \leftarrow 1$.
	\WHILE {$\exists$ a state $u \in \Si$ s.t.\ $\sum_{t} z^{j-1}_{u,t} > 0$} \label{dalg:convex1}
	\STATE {\bf run} sub-routine \peelStrat to extract a DAG $\lpd(i,j)$ with the appropriate $\pprob(u,t)$ values.
	\STATE {\bf let} $A \leftarrow \{(u,t)~\mathsf{s.t}~\pprob(u,t) \neq 0\}$.
	\STATE {\bf let} $\epsilon = \min_{(u,t) \in A} z^{j-1}_{u,t}/\pprob(u,t)$. \label{dalg:convex3}
	\FOR{ every $(u,t)$} \label{dalg:convex3a}
		\STATE {\bf set} $\prob(i,j,u,t) = \epsilon \cdot \pprob(u,t)$. \label{dalg:convex4}
		\STATE {\bf update} $z^j_{u,t} = z^{j-1}_{u, t} - \prob(i,j,u,t)$. \label{dalg:convex5}
		\STATE {\bf update} $w^j_{v, t+1} = w^{j-1}_{v, t+1} - \prob(i,j,u,t) \cdot p_{u,v}$ for all $v$. \label{dalg:convex6}
	\ENDFOR	
	\STATE {\bf set} ${\cal C}_i \leftarrow {\cal C}_i \cup \lpd(i,j)$.  	 \label{dalg:convex7}
	\STATE {\bf increment} $j \leftarrow  j + 1$.
	\ENDWHILE
\end{algorithmic}
\end{algorithm}

An illustration of a particular DAG and a strategy dag $\lpd(i,j)$
peeled off is given in \lref[Figure]{dfig:dag} (notice that the states $w$, $y$ and $z$ appear more than once depending on the path taken to reach them).

\begin{figure}[ht]
\centering
\subfigure[DAG for some arm $i$]{
\includegraphics[scale=0.7]{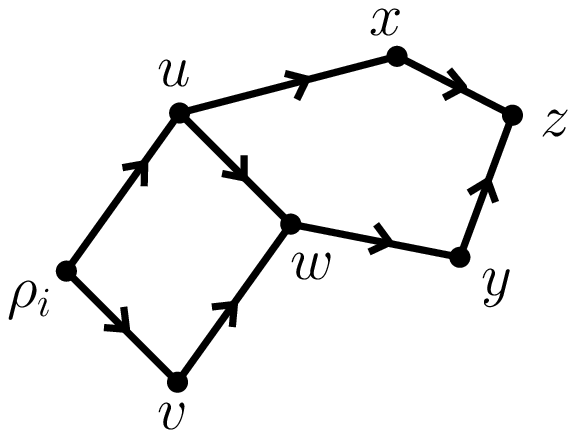}
\label{dfig:subfig1}
}
\hspace{20pt}
\subfigure[Strategy dag $\lpd(i,j)$]{
\includegraphics[scale=0.5]{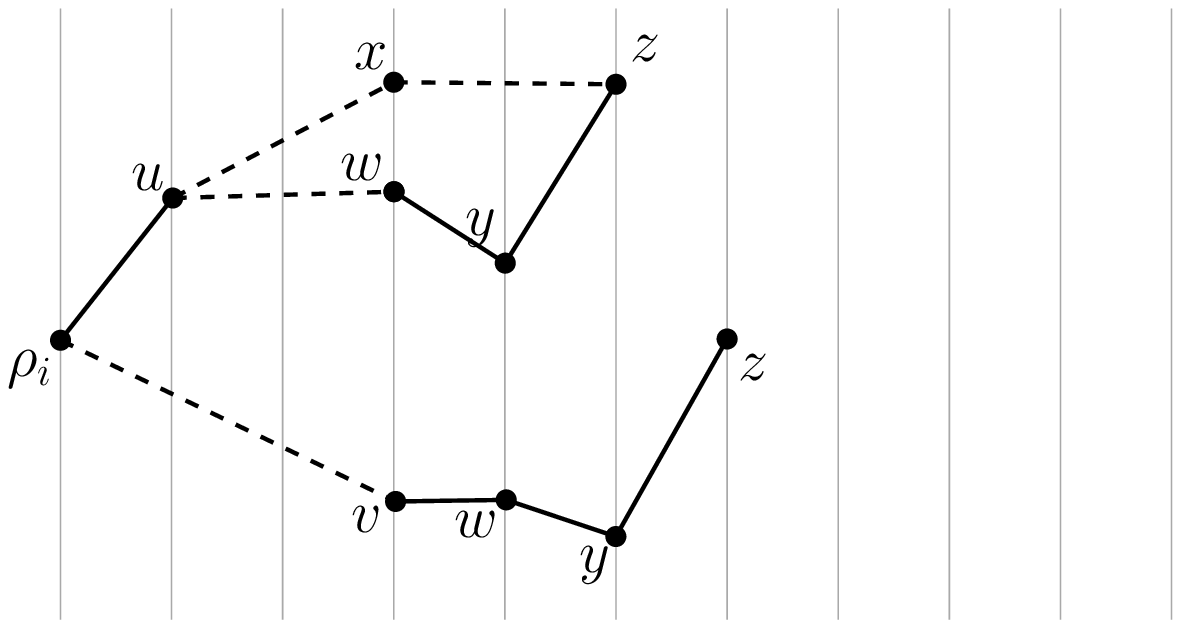}
\label{dfig:subfig2}
}
\caption{Strategy dags and how
  to visualize them: notice the same state played at different times.}
\label{dfig:dag}
\end{figure}

Now we analyze the solutions $\{z^j, w^j\}$ created by \lref[Algorithm]{dalg:dconvex}.

\begin{lemma} \label{dlem:convexstep} Consider an integer $j$ and
  suppose that $\{z^{j-1}, w^{j-1}\}$ satisfies
  constraints~\eqref{eq:mablp1}-\eqref{eq:mablp3} of
  \ref{lp:mabdag}. Then after iteration $j$ of \lref[Step]{dalg:convex1},
  the following properties hold:
  \begin{enumerate}
  \item[(a)] $\lpd(i,j)$ (along with the associated $\prob(i,j,.,.)$
    values) is a valid strategy dag, i.e., satisfies the conditions (i)
    and (ii) presented above.
  \item[(b)] The residual solution $\{z^j, w^j\}$ satisfies
    constraints~\eqref{eq:mabdaglp1}-\eqref{eq:mabdaglp3}.
  \item[(c)] For any time $t$ and state $u \in \Si$, $z^{j-1}_{u,t} -
    z^{j}_{u,t} = \prob(i,j,u,t)$.
  \end{enumerate}
\end{lemma}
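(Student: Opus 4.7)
The plan is to verify the three properties in turn, with (c) being immediate from the construction, (a) requiring two structural checks, and (b) containing the main work. For (c), \lref[Step]{dalg:convex5} directly defines $z^j_{u,t} = z^{j-1}_{u,t} - \prob(i,j,u,t)$, which is exactly the desired identity.

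For part (a), I would verify the two defining conditions of a strategy DAG separately. Condition (i), that each pair $(u,v)$ with $p_{u,v}>0$ yields a unique outgoing edge $(u,t) \to (v,t')$ with $t' \geq t+1$, follows by construction of \peelStrat, which creates exactly one edge per such pair (either to the earliest $t' \geq t+1$ with $z^{j-1}_{v,t'} > 0$, or else to $\infty$). The temporal ordering $t' \geq t+1$ is a consequence of the layered DAG structure, which I would establish by induction on depth: constraints \eqref{eq:mabdaglp4} and \eqref{eq:mabdaglp2} on the residual LP together imply that any state $u$ has $z^{j-1}_{u,t'} = 0$ for all $t' \leq \depth(u)$, so a child $v$ of $u$ (at depth $\depth(u)+1$) can only receive positive $z$-mass at times strictly after the time at which its parent is marked. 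Condition (ii), the probability-flow identity $\prob(i,j,u,t) = \sum_{(v,t') \to (u,t)} \prob(i,j,v,t')\cdot p_{v,u}$, is a direct reading of \lref[Step]{dalg:peel3}: each time an incoming edge is created, $\pprob(u,t)$ is incremented by $\pprob(v,t')\cdot p_{v,u}$, and scaling by $\epsilon$ at \lref[Step]{dalg:convex4} transfers the identity from $\pprob$ to $\prob$.

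For part (b), I would check each LP constraint. Constraint \eqref{eq:mabdaglp3} is preserved simply because $z^j_{u,t} \leq z^{j-1}_{u,t}$. Constraint \eqref{eq:mabdaglp1} is preserved because the paired updates in \lref[Steps]{dalg:convex5}-\ref{dalg:convex6} subtract $\prob(i,j,u,t)$ from $z_{u,t}$ and $\prob(i,j,u,t)\cdot p_{u,v}$ from $w_{v,t+1}$ for every child $v$; summing over $u$ (for fixed $v$ and $t+1$) shows the decrements on the two sides of the equality match. Non-negativity of the residual $z^j$ is ensured by the choice $\epsilon = \min_{(u,t) \in A} z^{j-1}_{u,t}/\pprob(u,t)$ at \lref[Step]{dalg:convex3}, so that $\prob(i,j,u,t) = \epsilon \cdot \pprob(u,t) \leq z^{j-1}_{u,t}$.

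The main obstacle is preserving constraint \eqref{eq:mabdaglp2}, which I would handle via a charging argument between the decreases on the two sides. Since \peelStrat marks each state $u$ at at most one time $t^*_u$ (the earliest $t'$ with $z^{j-1}_{u,t'}>0$), for any $t$ the only decrease in $\sum_{t' \leq t} z_{u,t'}$ is $\prob(i,j,u,t^*_u)$ when $t^*_u \leq t$, and zero otherwise. In the nontrivial case $t^*_u \leq t$, condition (ii) from part (a) gives $\prob(i,j,u,t^*_u) = \sum_{v \,:\, (v,t^*_v)\to(u,t^*_u)} \prob(i,j,v,t^*_v)\cdot p_{v,u}$. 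By condition (i), each such $v$ satisfies $t^*_v+1 \leq t^*_u \leq t$, and by \lref[Step]{dalg:convex6} each such edge triggers a decrease of exactly $\prob(i,j,v,t^*_v)\cdot p_{v,u}$ in $w_{u,t^*_v+1}$, an index lying within the window $t'\leq t$. Summing, the decrease of $\sum_{t' \leq t} w_{u,t'}$ matches or exceeds that of $\sum_{t' \leq t} z_{u,t'}$, preserving the inequality.
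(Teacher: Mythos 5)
Your overall plan mirrors the paper's proof (part (c) read off from \lref[Step]{dalg:convex5}, part (a) from the construction of \peelStrat, and part (b) constraint by constraint with the real work on \eqref{eq:mabdaglp2}), but two steps do not hold up. First, your justification of the temporal ordering in condition (i) is a non sequitur. The depth induction correctly shows that $z^{j-1}_{v,t'}=0$ for all $t'\le \depth(v)$, so the marked time $t^*_v$ of a state $v$ satisfies $t^*_v\ge \depth(v)+1$. But the parent $u$ is marked at $t^*_u$, the \emph{earliest} time with $z^{j-1}_{u,\cdot}>0$, and nothing forces $t^*_u=\depth(u)+1$; if $t^*_u$ is strictly larger (which happens, e.g., when the LP delays playing $u$), then $t^*_v\ge\depth(u)+2$ does not give $t^*_v\ge t^*_u+1$. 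In a DAG a child $v$ can have another parent that the LP plays much earlier than $u$, making $t^*_v$ small while $t^*_u$ is large, so the claim cannot be deduced from depth alone. (The paper asserts condition (i) essentially by fiat here, so you are not alone, but your proffered argument is not a proof of it.)

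Second, and more importantly, your charging argument for constraint \eqref{eq:mabdaglp2} proves an inequality in the wrong direction. The constraint is $\sum_{t'\le t}w_{u,t'}\ge\sum_{t'\le t}z_{u,t'}$; since both sides only decrease, preserving it requires that the decrease of the $w$-side be \emph{at most} the decrease of the $z$-side. You conclude that the $w$-side decrease ``matches or exceeds'' the $z$-side decrease, which, if the excess were strict, would be exactly how the constraint gets violated. The correct containment goes the other way: every decrement to $w_{u,t'}$ with $t'\le t$ comes from some peeled parent $v$ with $t^*_v+1\le t$ and contributes $\prob(i,j,v,t^*_v)\,p_{v,u}$, and since every peeled parent sends its unique edge into the single marked copy $(u,t^*_u)$, these terms form a \emph{sub-sum} of $\prob(i,j,u,t^*_u)=\sum_{v}\prob(i,j,v,t^*_v)p_{v,u}$, which is the $z$-side decrease whenever $t^*_u\le t$; when $t^*_u>t$ the $z$-side is identically zero before and after, so non-negativity of $w^j$ (which you also omit for the residual $w$, though it follows from \eqref{eq:mabdaglp1} and $z^j\ge 0$) finishes that case. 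Note that this corrected argument does not need condition (i) at all --- it only needs condition (ii) and the fact that each state has at most one marked copy --- whereas your version leans on the shakiest step of your part (a). The paper reaches the same conclusion by contradiction, splitting on whether $z^{j-1}_{u,t_u}>0$ at the earliest purportedly violated time and ruling out edges $(v,t_1)\rightarrow(u,t_2)$ with $t_1<t_u<t_2$.
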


\begin{proof}
We show the properties stated above one by one.

\noindent {\bf Property (a):} This follows from the construction of
\lref[Algorithm]{dalg:convex-sub}. More precisely, condition (i) is
satisfied because in \lref[Algorithm]{dalg:convex-sub} each $(u,t)$ is
visited at most once and that is the only time when a pair $(u,t)
\rightarrow (v, t')$ (with $t' \ge t + 1$) is added to the relation. For
condition (ii), notice that every time a pair $(u,t) \rightarrow (v,
t')$ is added to the relation we keep the invariant $\pprob(v, t') =
\sum_{(w,\tau)~\mathsf{s.t}~(i,j,w,\tau) \rightarrow (i,j,v,t')}
\pprob(w, \tau) \cdot p_{w,v}$; condition (ii) then follows since $\prob(.)$ is a
scaling of $\pprob(.)$.

\noindent {\bf Property (b):} Constraint~\eqref{eq:mabdaglp1} of
\ref{lp:mabdag} is clearly satisfied by the new LP solution $\{z^j,
w^j\}$ because of the two updates performed in
\lref[Steps]{dalg:convex5} and~\ref{dalg:convex6}: if we decrease the
$z$ value of any state at any time, the $w$ of all children are
appropriately reduced for the subsequent timestep.

Before showing that the solution $\{z^j, w^j\}$ satisfies
constraint~\eqref{eq:mabdaglp2}, we first argue that after every round
of the procedure they remain non-negative. By the choice of $\epsilon$
in \lref[step]{dalg:convex3}, we have $\prob(i,j,u,t) = \epsilon \cdot
\pprob(u,t) \leq \frac{z^{j-1}_{u,t}}{\pprob(u,t)}\pprob(u,t) =
z^{j-1}_{u,t}$ (notice that this inequality holds even if $\pprob(u,t) =
0$); consequently even after the update in \lref[step]{dalg:convex5},
$z^{j}_{u,t} \geq 0$ for all $u,t$. This and the fact that the
constraints~(\ref{eq:mabdaglp1}) are satisfied implies that $\{z^j,
w^j\}$ satisfies the non-negativity requirement.

We now show that constraint~\eqref{eq:mabdaglp2} is satisfied. Suppose
for the sake of contradiction there exist some $u \in \S$ and $t \in
[1,B]$ such that $\{z^j, w^j\}$ violates this constraint. Then, let us
consider any such $u$ and the earliest time $t_u$ such that the
constraint is violated.  For such a $u$, let $t'_u \leq t_u$ be the
latest time before $t_u$ where $z^{j-1}_{u,t'} > 0$. We now consider two
cases.

{\bf Case (i): $t'_u < t_u$}. This is the simpler case of the
two. Because $t_u$ was the earliest time where
constraint~\eqref{eq:mabdaglp2} was violated, we know that $\sum_{t'
  \leq t'_u} w^{j}_{u,t'} \geq \sum_{t' \leq t'_u}
z^{j}_{u,t'}$. Furthermore, since $z_{u,t}$ is never increased during
the course of the algorithm we know that $\sum_{t' = t'_u +1}^{t_u}
z^{j}_{u,t'} = 0$. This fact coupled with the non-negativity of
$w^j_{u,t}$ implies that the constraint in fact is not violated, which
contradicts our assumption about the tuple $u,t_u$.

{\bf Case (ii): $t'_u = t_u$}. In this case, observe that there cannot
be any pair of tuples $(v,t_1) \rightarrow (u,t_2)$ s.t.\ $t_1 < t_u$ and
$t_2 > t_u$, because any copy of $v$ (some ancestor of $u$) that is
played before $t_u$, will mark a copy of $u$ that occurs before $t_u$ or
the one being played at $t_u$ in \lref[Step]{dalg:peel3} of \peelStrat.
We will now show that summed over all $t' \leq t_u$, the decrease in the
LHS is counter-balanced by a corresponding drop in the RHS, between the
solutions $\{z^{j-1}, w^{j-1}\}$ and $\{z^{j}, w^{j}\}$ for this
constraint~\eqref{eq:mabdaglp2} corresponding to $u$ and $t_u$.  To this
end, notice that the only times when $w_{u,t'}$ is updated (in
\lref[Step]{dalg:convex6}) for $t' \leq t_u$, are when considering some
$(v,t_1)$ in \lref[Step]{dalg:convex3a} such that $(v,t_1) \rightarrow
(u,t_2)$ and $t_1 < t_2 \leq t_u$.  The value of $w_{u, t_1+1}$ is
dropped by exactly $\prob(i,j,v,t_1) \cdot p_{v,u}$. But notice that the
corresponding term $z_{u,t_2}$ drops by $\prob(i,j,u,t_2) =
\sum_{(v'',t'')~\mathsf{s.t}~(v'',t'')\rightarrow (u,t_2)}
\prob(i,j,v'',t'') \cdot p_{v'',u}$. Therefore, the total drop in $w$ is
balanced by a commensurate drop in $z$ on the RHS.

Finally, constraint~\eqref{eq:mabdaglp3} is also satisfied as the $z$
variables only decrease in value.

\noindent {\bf Property (c):} This is an immediate consequence of the
\lref[Step]{dalg:convex5} of the convex decomposition algorithm.
\end{proof}

As a consequence of the above lemma, we get the following.
\begin{lemma}
  \label{dlem:convexppt}
  Given a solution to~(\ref{lp:mabdag}), there exists a collection of
  at most $nB^2|\S|$ strategy dags $\{\lpd(i,j)\}$ such that $z_{u,t} =
  \sum_{j} \prob(i,j,u,t)$. Hence, $\sum_{(i, j, u)} \prob(i,j,u,t) \leq
  1$ for all $t$.
\end{lemma}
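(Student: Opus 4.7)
The plan is to apply Lemma~\ref{dlem:convexstep} inductively across rounds and then telescope. The main technical content has already been absorbed into Lemma~\ref{dlem:convexstep}; what remains is a potential argument for termination and a one-line identity-plus-LP-constraint at the end.

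First I would bound the number of iterations executed for each arm $i$. By the choice of $\epsilon = \min_{(u,t) \in A} z^{j-1}_{u,t}/\pprob(u,t)$ in Step~\ref{dalg:convex3}, at least one coordinate $z^{j}_{u^\star,t^\star}$ (the one attaining the minimum) is driven to exactly zero in iteration $j$. Lemma~\ref{dlem:convexstep}(b) tells us that $\{z^j,w^j\}$ stays feasible, hence non-negative, across rounds, so this coordinate remains at zero thereafter. Moreover \peelStrat only marks nodes $(v,t')$ for which $z^{j-1}_{v,t'}>0$, so zeroed coordinates are never reactivated. The number of iterations for arm $i$ is therefore at most the number of positive coordinates $(u,t)$ present in the initial solution for arm $i$, which is at most $B|\S_i|$. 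Summing across arms yields at most $B|\S|$ strategy dags overall --- well within the stated bound $nB^2|\S|$.

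Next I would establish the decomposition identity. Let $J_i$ denote the number of iterations performed for arm $i$. The terminating condition of the while loop in Step~\ref{dalg:convex1} forces $\sum_t z^{J_i}_{u,t} = 0$ for every state $u \in \S_i$, and non-negativity then gives $z^{J_i}_{u,t} = 0$ for every pair $(u,t)$. Telescoping Lemma~\ref{dlem:convexstep}(c) across rounds $j = 1, \ldots, J_i$ gives
\[
z_{u,t} \;=\; z^{0}_{u,t} - z^{J_i}_{u,t} \;=\; \sum_{j=1}^{J_i} \bigl(z^{j-1}_{u,t} - z^{j}_{u,t}\bigr) \;=\; \sum_{j=1}^{J_i} \prob(i,j,u,t),
\]
which is the claimed identity. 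Summing over $i$ and $u \in \S_i$ and invoking LP constraint~\eqref{eq:mabdaglp3} on the original $z$, for every $t$ we obtain
\[
\sum_{(i,j,u)} \prob(i,j,u,t) \;=\; \sum_{u \in \S} z_{u,t} \;\leq\; 1.
\]

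The main delicacy I anticipate lies in justifying that \peelStrat makes strict progress whenever some residual $z^{j-1}_{u,t}$ is still positive --- i.e., that the set $A = \{(u,t): \pprob(u,t) \neq 0\}$ intersects the support of $z^{j-1}$, so that some positive coordinate actually decreases. If some state $u$ with $\sum_t z^{j-1}_{u,t}>0$ were not reached by the marking phase, the potential argument would stall. Ruling this out requires using residual LP constraint~\eqref{eq:mabdaglp2} to exhibit a chain of positive residual mass from $\rho_i$ down to $u$, ensuring that \peelStrat's earliest-time marking rule propagates to a copy of $u$. Once this ``no-stalling'' property is in hand, everything else --- the telescoping and the final summation --- is routine bookkeeping.
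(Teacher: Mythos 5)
Your proposal is correct and follows essentially the same route as the paper: the paper derives this lemma as an immediate consequence of Lemma~\ref{dlem:convexstep}, with the detailed argument (mirroring the tree case in Appendix~\ref{sec:details-phase-i}) being exactly your telescoping of property~(c), termination because each iteration zeroes at least one residual coordinate $z^j_{u,t}$, and the final summation via constraint~\eqref{eq:mabdaglp3}. Your observation that the ``no-stalling'' of \peelStrat\ rests on the residual solution remaining feasible for constraints~\eqref{eq:mabdaglp1}--\eqref{eq:mabdaglp2} (guaranteed by Lemma~\ref{dlem:convexstep}(b)) is a point the paper leaves implicit, and your count of $B|\S|$ dags is comfortably within the stated $nB^2|\S|$ bound.
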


\subsection{Phases II and III}
\label{dsec:phase-iii}

We now show how to execute the strategy dags $\lpd(i,j)$. At a high
level, the development of the plays mirrors that of
\lref[Sections]{sec:phase-ii} and \ref{sec:phase-iii}. First we
transform $\lpd(i,j)$ into a (possibly exponentially large) blown-up
tree and show how this playing these exactly captures playing the
strategy dags. Hence (if running time is not a concern), we can simply
perform the gap-filling algorithm and make plays on these blown-up trees
following Phases II and III in \lref[Sections]{sec:phase-ii}
and~\ref{sec:phase-iii}. To achieve polynomial running time, we then
show that we can \emph{implicitly execute} the gap-filling phase while
playing this tree, thus getting rid of actually performing
\lref[Phase]{sec:phase-ii}. Finally, to complete our argument, we show
how we do not need to explicitly construct the blown-up tree, and can
generate the required portions depending on the transitions made thus
far \emph{on demand}.

\subsubsection{Transforming the DAG into a Tree}


Consider any strategy dag $\lpd(i,j)$. We first transform this dag into
a (possibly exponential) tree by making as many copies of a node $(i,j,u,t)$
as there are paths from the root to $(i,j,u,t)$ in $\lpd(i,j)$. More
formally, define $\lpdt(i,j)$ as the tree whose vertices are the simple
paths in $\lpd(i,j)$ which start at the root. To avoid confusion, we will explicitly refer to vertices of the tree
$\lpdt$ as tree-nodes, as distinguished from the
\emph{nodes} in $\lpd$; to simplify the notation we
identify each tree-node in $\lpdt$ with its corresponding path in
$\lpd$. Given two tree-nodes $P, P'$ in $\lpdt(i,j)$, add an arc from
$P$ to $P'$ if $P'$ is an immediate extension of $P$, i.e., if $P$
corresponds to some path $(i,j,u_1, t_1) \rightarrow \ldots \rightarrow
(i,j,u_k, t_k)$ in $\lpd(i,j)$, then $P'$ is a path $(i,j,u_1,t_1)
\rightarrow \ldots \rightarrow (i,j,u_k,t,k) \rightarrow
(i,j,u_{k+1},t_{k+1})$ for some node $(i,j,u_{k+1},t_{k+1})$.

For a tree-node $P \in \lpdt(i,j)$ which
corresponds to the path $(i,j,u_1, t_1) \rightarrow \ldots \rightarrow
(i,j,u_k,t_k)$ in $\lpd(i,j)$, we define $\state(P) = u_k$, i.e.,
$\state(\cdot)$ denotes the final state (in $\S_i$) in the path $P$. Now,
for tree-node $P \in \lpdt(i,j)$, if $u_1, \ldots, u_k$ are the children
of $\state(P)$ in $\Si$ with positive transition probability from $\state(P)$, then $P$ has exactly $k$ children $P_1, \ldots,
P_k$ with $\state(P_l)$ equal to $u_l$ for all $l \in [k]$. The
\emph{depth} of a tree-node $P$ is defined as the depth of $\state(P)$.
	
We now define the quantities $\ptime$ and $\prob$ for tree-nodes in
$\lpdt(i,j)$. Let $P$ be a path in $\lpd(i,j)$ from $\rho_i$ to node
$(i,j,u,t)$. We define $\ptime(P) := t$ and $\prob(P) :=
\prob(P')
  p_{(\state(P'),u)}$, where $P'$ is obtained by dropping the last node
  from $P$.  The blown-up tree $\lpdt(i,j)$ of our running example
$\lpd(i,j)$ (\lref[Figure]{dfig:dag}) is given in
\lref[Figure]{dfig:blown-up}.
\begin{lemma}
  For any state $u$ and time $t$, $\sum_{P~\mathsf{s.t}~\ptime(P) =
    t~\mathsf{and}~\state(P)=u} \prob(P) = \prob(i,j,u,t)$.
\end{lemma}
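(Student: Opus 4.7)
The plan is to prove this by induction on the depth of $u$ in the arm's layered DAG (equivalently, on the length of paths in $\lpd(i,j)$ ending at a copy of $u$). The statement essentially says that the blown-up tree $\lpdt(i,j)$ faithfully redistributes the $\prob$ mass of each DAG-node across the paths reaching it, so the claim should follow by simply unwinding the recursive definitions of $\prob(\cdot)$ on both sides.

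For the base case, take $u = \rho_i$. Since $\rho_i$ has no in-neighbors in $\lpd(i,j)$ and the only simple path starting at the root and ending at $\rho_i$ is the trivial one-node path $P_0$, the sum on the LHS has a single term $\prob(P_0)$, which by convention (matching the base of the recursion defining $\prob(P)$) equals $\prob(i,j,\rho_i, t_{\rho_i})$, where $t_{\rho_i}$ is the unique time for which $\prob(i,j,\rho_i,\cdot)$ is nonzero. For any other $(u',t)$ with $u' = \rho_i$ but $t \neq t_{\rho_i}$, both sides are zero.

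For the inductive step, fix a node $(i,j,u,t)$ with $u \neq \rho_i$ and assume the claim for all states of smaller depth. By the construction of $\lpdt(i,j)$, every tree-node $P$ with $\state(P) = u$ and $\ptime(P) = t$ is obtained by extending a unique parent tree-node $P'$ (whose final state is some $v$ with $p_{v,u} > 0$ and whose $\ptime$ is some $t' < t$ satisfying $(i,j,v,t') \rightarrow (i,j,u,t)$ in $\lpd$) by appending the node $(i,j,u,t)$. Conversely, every such $P'$ yields exactly one such extension $P$, and $\prob(P) = \prob(P') \, p_{v,u}$ by definition. Grouping the sum over $P$ according to the predecessor pair $(v,t')$ and applying the inductive hypothesis to $P'$ gives
\begin{align*}
\sum_{\substack{P:\ \state(P)=u\\ \ptime(P)=t}} \prob(P)
&= \sum_{\substack{(v,t'):\\ (i,j,v,t')\rightarrow(i,j,u,t)}} p_{v,u} \sum_{\substack{P':\ \state(P')=v\\ \ptime(P')=t'}} \prob(P') \\
&= \sum_{\substack{(v,t'):\\ (i,j,v,t')\rightarrow(i,j,u,t)}} \prob(i,j,v,t') \cdot p_{v,u}
= \prob(i,j,u,t),
\end{align*}
where the last equality is exactly condition (ii) in the definition of a strategy DAG. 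Since the layered structure ensures the recursion is well-founded (every predecessor lies at strictly smaller depth), the induction goes through. No real obstacle arises; the entire argument is a bookkeeping unwinding of two mutually compatible recursive definitions, and condition (i) of a strategy DAG is implicitly used to ensure that the predecessor node $(v,t')$ of $(u,t)$ is uniquely determined along each path.
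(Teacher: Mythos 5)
Your proof is correct: the induction on depth, grouping the paths ending at the DAG-node $(i,j,u,t)$ by their last arc $(i,j,v,t')\rightarrow(i,j,u,t)$ and invoking condition~(ii) of the strategy-DAG definition, is exactly the bookkeeping argument the claim requires. The paper states this lemma without proof (treating it as immediate from the definitions of $\lpdt$ and $\prob(P)$), and your write-up is a faithful formalization of that omitted argument, so there is nothing further to compare.
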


\begin{figure}[ht]
\centering
\includegraphics[scale=0.5]{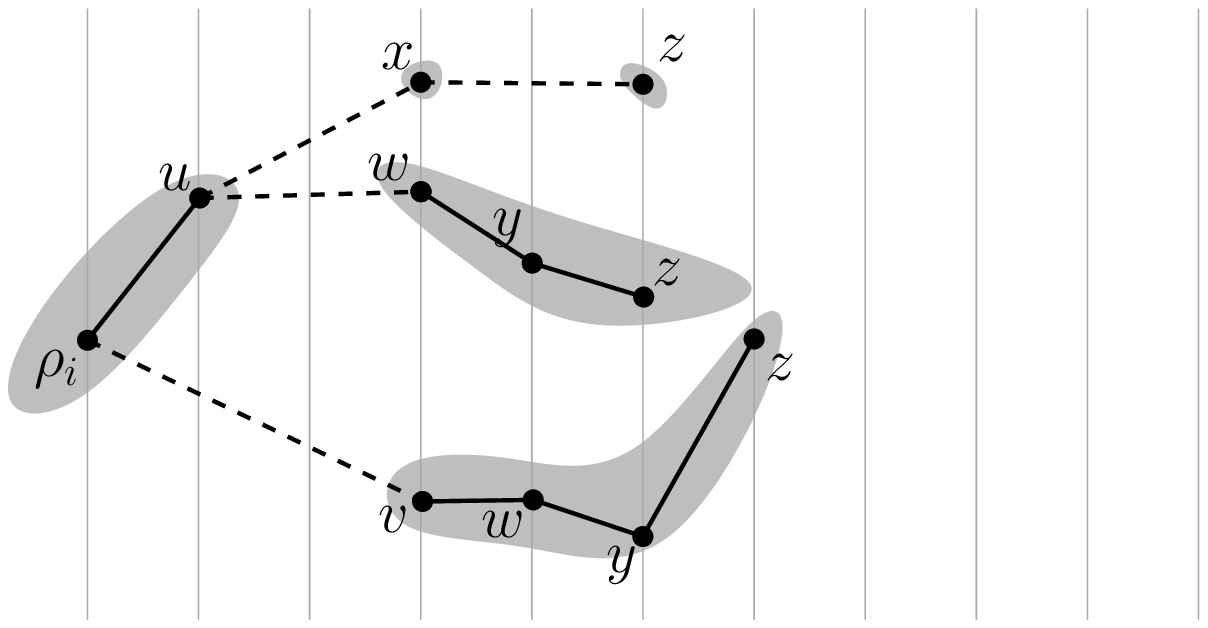}
\caption{Blown-up Strategy Forest $\lpdt(i,j)$}
\label{dfig:blown-up}
\end{figure}

Now that we have a tree labeled with $\prob$ and $\ptime$ values, the
notions of connected components and heads from
Section~\ref{sec:phase-ii} carry over. Specifically, we define
$\head(P)$ to be the ancestor $P'$ of $P$ in $\lpdt(i,j)$ with least
depth such that there is a path $(P' = P_1 \rightarrow \ldots
\rightarrow P_l = P)$ satisfying $\ptime(P_i) = \ptime(P_{i-1}) + 1$ for
all $i \in [2,l]$, i.e., the plays are made contiguously from $\head(P)$
to $P$ in the blown-up tree. We also define $\comp(P)$ as the set of all
tree-nodes $P'$ such that $\head(P) = \head(P')$.
	
In order to play the strategies $\lpdt(i,j)$ we first eliminate small
gaps. The algorithm \textsf{GapFill} presented in
\lref[Section]{sec:phase-ii} can be employed for this purpose and
returns trees $\lpdt'(i,j)$ which satisfy the analog of
\lref[Lemma]{lem:gapfill}.
	
\begin{lemma} \label{lem:gapfillDAG} The trees returned by
  \textsf{GapFill} satisfy the followings properties.
  \begin{OneLiners}
  \item[(i)] For each tree-node $P$ such that $r_{\state(P)} > 0$,
    $\ptime(\head(P)) \ge 2 \cdot \depth(\head(P))$.
  \item[(ii)] The total extent of plays at any time $t$, i.e.,
    $\sum_{P: \ptime(P)=t} \prob(P)$ is at most $3$.
  \end{OneLiners}
\end{lemma}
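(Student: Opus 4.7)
The plan is to reduce this lemma to the already-established tree version (\lref[Lemma]{lem:gapfill}). The key observation is that each $\lpdt(i,j)$ is, by construction, a genuine rooted tree whose tree-nodes carry $\ptime$ and $\prob$ labels, and whose local in-out structure mirrors exactly the tree case of \lref[Section]{sec:phase-ii}. Moreover, by \lref[Lemma]{dlem:convexppt} together with the identity $\sum_{P:\ \ptime(P)=t,\ \state(P)=u} \prob(P) = \prob(i,j,u,t)$, the aggregate extent of plays at any time $t$ in the initial blown-up trees satisfies $\sum_{i,j} \sum_{P:\ \ptime(P)=t} \prob(P) \le 1$. So the preconditions for the tree-case analysis carry over verbatim; since \textsf{GapFill} is a purely local procedure on its input tree structure, its behavior on $\lpdt(i,j)$ is exactly the one analyzed in \lref[Section]{sec:phase-ii}.

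For part (i), I would simply invoke the termination condition of Algorithm~\ref{alg:ridgaps}: the outer \textbf{while} loop keeps advancing components whose heads are played at times strictly less than twice the depth of the head, and only exits when no such tree-node exists. Hence, for every tree-node $P$ remaining after the algorithm (in particular those with $r_{\state(P)} > 0$), $\ptime(\head(P)) \ge 2\cdot \depth(\head(P))$. This part is essentially syntactic.

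The main content is part (ii), and here I would reuse the charging argument from the tree case. Fix a time $t$ and decompose the mass played at $t$ after \textsf{GapFill} into two parts: (a) the mass that was already scheduled at time $t$ in the unmodified $\lpdt$'s, which is at most $1$ by the observation above; and (b) the mass belonging to components whose heads were advanced so that their play now crosses $t$. To bound (b), I would use the fact that a head $\head(P)$ with current time $\tau$ is advanced only when $\tau < 2\cdot\depth(\head(P))$; consequently once \textsf{GapFill} has finished processing a component that contributes to the time slot $t$, the ancestor path feeding into that component has total length at least $t/2$ and occupies at least $t/2$ units of the timeline in $[1,t]$ contiguously up to $t$. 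Since the total initial mass of plays in $[1,t]$ is at most $t$, a standard double-counting then gives that the total mass of components advanced to cross $t$ is at most $2$. Adding (a) and (b) yields the bound of $3$.

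The main obstacle I anticipate is formalizing the bookkeeping for cascading advances cleanly. Because \textsf{GapFill} iterates $\tau$ from $B$ down to $1$, advancing a component $\comp(v)$ can shift its head-time (and all descendant $\ptime$ values) to a smaller value, which in turn can trigger further advances of descendant components at later iterations. The cleanest way I see around this is to charge each unit of ``advanced'' mass at time $t$ to a disjoint unit of ancestor path sitting in $[1,t-1]$ inside the same $\lpdt(i,j)$ (using the preflow property from \lref[Observation]{obs:treeflow} to justify that the ancestor mass dominates the descendant mass), and argue this charging is injective across all arms and indices. Once this injection is set up, combining it with the initial unit of mass already scheduled at $t$ gives the $1+2=3$ bound, exactly as in the tree case.
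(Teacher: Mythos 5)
Your proposal is correct and follows essentially the same route as the paper, which itself proves this lemma only by asserting that the tree-case argument (Lemma~\ref{lem:gapfill}, proved in Appendix~\ref{sec:details-phase-ii}) carries over to the blown-up trees $\lpdt(i,j)$; the preconditions you verify — that each $\lpdt(i,j)$ is a genuine labeled tree satisfying the preflow property of Observation~\ref{obs:treeflow}, and that the initial per-timestep mass is at most $1$ by Lemma~\ref{dlem:convexppt} together with $\sum_{P:\ptime(P)=t,\,\state(P)=u}\prob(P)=\prob(i,j,u,t)$ — are exactly what that argument needs. Your $1+2=3$ charging for part~(ii) (heads advanced across $t$ have depth exceeding $t/2$, hence ancestors at each depth $d\le t/2$ with original $\ptime\le t$, so their total mass is at most $2$) is the same double-counting the paper uses.
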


Now we use \lref[Algorithm]{alg:roundmab} to play the trees
$\lpdt(i,j)$. We restate the algorithm to conform with the notation used
in the trees $\lpdt(i,j)$.

\begin{algorithm}[ht!]
  \caption{Scheduling the Connected Components: Algorithm \textsf{AlgDAG}}
  \begin{algorithmic}[1]
  \label{alg:roundmabDAG}
    \STATE for arm $i$, \textbf{sample} strategy $\lpdt(i,j)$ with
    probability $\frac{\prob(\mroot(\lpdt(i,j)))}{24}$; ignore arm
    $i$ w.p.\ $1 - \sum_{j}
    \frac{\prob(\mroot(\lpdt(i,j)))}{24}$.  \label{alg:mabstep1DAG}
    \STATE let $A \gets$ set of ``active'' arms which chose a
    strategy in the random process. \label{alg:mabstep2DAG}
    \STATE for each $i \in A$, \textbf{let} $\sigma(i) \gets$ index $j$
    of the chosen $\lpdt(i,j)$ and \textbf{let} $\currnode(i) \gets $ root of $\lpdt(i,\sigma(i))$. \label{alg:mabstep3DAG}
    \WHILE{active arms $A \neq \emptyset$}
    \STATE \textbf{let} $i^* \gets$ arm with tree-node played earliest (i.e., $i^* \gets \argmin_{i \in A} \{ \ptime(\currnode(i))
    \}$).  \label{alg:mabstep4DAG}
    \STATE \textbf{let} $\tau \gets \ptime(\currnode(i^*))$.
		\WHILE{$\ptime(\currnode(i^*)) \neq \infty$
      \textbf{and} $\ptime(\currnode(i^*)) = \tau$} \label{alg:mabLoopDAG}
    \STATE \textbf{play} arm $i^*$ at state $\state(\currnode(i^*))$ \label{alg:mabPlayDAG}
    \STATE \textbf{let} $u$ be the new state of arm $i^*$ and \textbf{let} $P$ be the child of $\currnode(i^*)$ satisfying $\state(P) = u$.
    \STATE \textbf{update} $\currnode(i^*)$ to be $P$; \textbf{let} $\tau \gets \tau + 1$. \label{alg:mabstep5DAG}
    \ENDWHILE \label{alg:mabEndLoopDAG}
    \IF{$\ptime(\currnode(i^*)) = \infty$} \label{alg:mabAbandonDAG}
    \STATE \textbf{let} $A \gets A \setminus \{i^*\}$
    \ENDIF
    \ENDWHILE
 \end{algorithmic}
\end{algorithm}

Now an argument identical to that for Theorem~\ref{thm:main-mab} gives
us the following: 
\begin{theorem}
  \label{thm:main-mabDAG}
  The reward obtained by the algorithm~\textsf{AlgDAG} is at least a
  constant fraction of the optimum for \eqref{lp:mabdag}.
\end{theorem}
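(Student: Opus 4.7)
The plan is to reduce the analysis to the tree case (Theorem~\ref{thm:main-mab}) by observing that the blown-up trees $\lpdt(i,j)$ are genuine strategy forests in the sense of Section~\ref{sec:phase-i}, and that \textsf{AlgDAG} is literally \textsf{AlgMAB} executed on this (possibly exponential-size) collection of trees. The running time is not at issue for the reward bound: one never materializes $\lpdt(i,j)$ explicitly, since the current tree-node is uniquely determined by the path of states observed so far, and this can be maintained on-the-fly.

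First I would verify that the trees $\lpdt(i,j)$ satisfy the three ingredients used in the proof of Theorem~\ref{thm:main-mab}. (a) Each $\lpdt(i,j)$ is a strategy forest: the tree structure follows from the construction, and for every non-root tree-node $P$ with parent $P'$ we have $\prob(P) = \prob(P') \cdot p_{\state(P'),\state(P)}$ by definition, matching condition~(ii) of Section~\ref{sec:phase-i}. (b) The sampling is well-defined: by Lemma~\ref{dlem:convexppt}, $\sum_j \prob(\mroot(\lpdt(i,j))) = \sum_j \prob(i,j,\rho_i,1) \leq 1$, so \lref[Step]{alg:mabstep1DAG} defines a valid distribution on arm $i$'s strategies (with positive mass on ``ignore''). (c) The gap-filling output satisfies the exact analogs of Lemma~\ref{lem:gapfill} via Lemma~\ref{lem:gapfillDAG}, in particular the crucial contention bound $\sum_{P:\ptime(P)=t} \prob(P) \leq 3$.

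Next I would lift Lemmas~\ref{lem:visitprob} and~\ref{lem:beforetime} to the DAG setting. Since \textsf{AlgDAG} follows the tree-node pointer $\currnode(i)$ and updates it by the child $P$ with $\state(P)$ equal to the observed transition, the probability that a particular tree-node $P \in \lpdt(i,j)$ is reached, conditional on $\sigma(i)=j$, equals the product of transition probabilities along the path from the root of $\lpdt(i,j)$ to $P$, which equals $\prob(P)/\prob(\mroot(\lpdt(i,j)))$ by the preflow property. For the timing lemma, the proof of Lemma~\ref{lem:beforetime} carries over essentially verbatim: define, for each other arm $i'$ and strategy $j'$, a truncated load $Z_{i',j'}$ counting plays that could precede the head of $P$'s connected component, use $\depth(\head(P)) \le \ptime(\head(P))/2$ from Lemma~\ref{lem:gapfillDAG}(i) and the contention bound from Lemma~\ref{lem:gapfillDAG}(ii) to show $\E[Z] \leq t_\bfv/4$, and apply Markov's inequality. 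The analog of Observation~\ref{obs:treeflow} needed for Claim~\ref{cl:localexp} holds for each $\lpdt(i,j)$ since it is literally a tree with preflow-consistent $\prob$ values.

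Combining these, linearity of expectation yields, for each state $u \in \Si$, an expected reward of at least
\[
\sum_j \frac{\prob(\mroot(\lpdt(i,j)))}{24} \cdot \sum_{P:\state(P)=u} \frac{\prob(P)}{\prob(\mroot(\lpdt(i,j)))} \cdot \frac{1}{2} \cdot r_u
\;=\; \frac{r_u}{48} \sum_j \sum_{t} \prob(i,j,u,t),
\]
where the inner identity uses the tree-level accounting $\sum_{P:\state(P)=u,\ptime(P)=t}\prob(P) = \prob(i,j,u,t)$. Summing over $i$ and $u$ and applying Lemma~\ref{dlem:convexppt}'s identity $\sum_j \prob(i,j,u,t) = z_{u,t}$ gives total expected reward at least $\tfrac{1}{48}\sum_{u,t} r_u z_{u,t} = \tfrac{1}{48}\LPOpt$.

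The main obstacle is convincing oneself that the exponential blow-up from $\lpd$ to $\lpdt$ does no harm. The point is that the analysis only uses aggregate quantities ($\sum_P \prob(P)$ grouped by state, time, or connected-component head), all of which collapse back to polynomial-size sums indexed by $(i,j,u,t)$ via the marginalization $\sum_{P:\ptime(P)=t,\state(P)=u}\prob(P)=\prob(i,j,u,t)$. Thus every quantity in the analysis can be accounted for at the DAG level, even though the conceptual object we argue about is the blown-up tree; and the algorithm itself never needs to touch more than one tree-node per arm at a time.
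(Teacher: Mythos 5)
Your proposal is correct and matches the paper's approach: the paper proves this theorem by observing that the blown-up trees $\lpdt(i,j)$ are genuine strategy forests (with the marginalization $\sum_{P:\state(P)=u,\ptime(P)=t}\prob(P)=\prob(i,j,u,t)$ and the \textsf{GapFill} guarantees of Lemma~\ref{lem:gapfillDAG}), so that \textsf{AlgDAG} is \textsf{AlgMAB} run on these trees and the argument of Theorem~\ref{thm:main-mab} applies verbatim. You have simply spelled out the details of that reduction, arriving at the same $\LPOpt/48$ bound.
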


\subsubsection{Implicit gap filling}

Our next goal is to execute \textsf{GapFill} implicitly, that is, to
incorporate the gap-filling within Algorithm~\textsf{AlgDAG} without
having to explicitly perform the advances.

To do this, let us review some properties of the trees returned by
\textsf{GapFill}. For a tree-node $P$ in $\lpdt(i,j)$, let
$\ptime(P)$ denote the associated time in the original tree (i.e.,
before the application of \textsf{GapFill}) and let $\ptime'(P)$ denote
the time in the modified tree (i.e., after $\lpdt(i,j)$ is modified by
\textsf{GapFill}).

\begin{claim} \label{cl:gapppt}
For a non-root tree-node $P$ and its parent $P'$,
$\ptime'(P) = \ptime'(P') + 1$ if and only if, either $\ptime(P)
= \ptime(P') + 1$ or $2 \cdot \depth(P) > \ptime(P)$.
\end{claim}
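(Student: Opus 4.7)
The plan is to track how the $\ptime$ values evolve during the execution of \textsf{GapFill}, and to characterize exactly which originally-head nodes get absorbed into their parents' components. Throughout I will use $\ptime(\cdot)$ for the original times (as in the claim) and $\ptime'(\cdot)$ for the final times, reserving the phrase ``current $\ptime$ value'' for intermediate values during the execution. Two structural facts to record first: (a) an advance of $\comp(v)$ shifts the current time of every node in $\comp(v)$ by a common offset and merges $\comp(v)$ into the component containing $v' = \parent(v)$, so once two nodes lie in the same component they remain together forever in lockstep; (b) the tree structure, and hence $\depth(\cdot)$, is invariant throughout.

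The easy case is $\ptime(P) = \ptime(P') + 1$: then $P$ and $P'$ start in the same component, so by (a) they remain together and $\ptime'(P) = \ptime'(P') + 1$. I therefore focus on the complementary case $\ptime(P) > \ptime(P') + 1$, in which $P$ is originally a head. The key invariant I would prove by backward induction on $\tau$ (from $B$ down to $\ptime(P)+1$) is: at the end of the outer iteration indexed by $\tau$, node $P$ is still a head, its current time equals $\ptime(P)$, and likewise $P'$'s current time equals $\ptime(P')$. The only way $P$'s time could change in some iteration is for $P$'s current component to be advanced, which requires its head's current time to equal $\tau$; but by induction $P$ is that head with current time $\ptime(P) < \tau$, so no such advance triggers. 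Descendants $D$ that are heads at current time $\tau$ may be advanced and may merge $\comp(D)$ into $P$'s component, but the current times in $\comp(D)$ are all $\geq \tau > \ptime(P)$, so $P$ remains the minimum-time head of the enlarged component and its current time is unchanged. The same reasoning handles $P'$.

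Armed with the invariant, at iteration $\tau = \ptime(P)$ the algorithm sees $P$ as a head with current time $\ptime(P)$ and depth $\depth(P)$, and advances $\comp(P)$ iff the trigger $\ptime(P) < 2\,\depth(P)$ fires. If it fires, the new current time of $P$ becomes $\ptime(P') + 1$ (using the invariant for $P'$, which is still valid since $P'$'s own component is touched only at some $\tau \leq \ptime(P') < \ptime(P)$), so $P$ joins $P'$'s component, and by (a) this contiguity persists to the end, yielding $\ptime'(P) = \ptime'(P') + 1$. If the trigger fails, $P$ remains a head at time $\ptime(P)$ for the rest of the execution: no later iteration at some $\tau' < \ptime(P)$ can advance $P$'s component because $P$ still sits at $\ptime(P) \neq \tau'$. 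In that case $\ptime'(P) = \ptime(P) > \ptime(P') + 1 \geq \ptime'(P') + 1$, so the LHS of the iff fails. Both directions of the claim then follow from this dichotomy.

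The main obstacle is establishing the invariant cleanly in the presence of descendant-component merges: one must argue that such merges do not shift the current time of $P$ or strip $P$ of its head status, and likewise for $P'$. Once those subtleties are resolved by the backward induction on $\tau$ described above, the iff reduces immediately to the case analysis outlined in the preceding two paragraphs.
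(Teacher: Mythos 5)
Your proof is correct and follows essentially the same route as the paper's: the easy case when $P$ and $P'$ start in the same component, and otherwise a case split on whether the advance trigger $2\cdot\depth(P) > \ptime(P)$ fires at iteration $\tau = \ptime(P)$, using the fact that \textsf{GapFill} only merges components. Your backward-induction invariant (that $P$ remains a head at its original time until iteration $\ptime(P)$, and that $P'$ is untouched until then) makes explicit the details the paper's terser argument leaves implicit.
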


\begin{proof}
Let us consider the forward direction. Suppose $\ptime'(P) = \ptime'(P') + 1$ but $\ptime(P) >
\ptime(P') + 1$. Then $P$ must have been the head of its component in the
original tree and an \textbf{advance} was performed on it, so we must
have $2 \cdot \depth(P) > \ptime(P)$.

For the reverse direction, if
$\ptime(P) = \ptime(P') + 1$ then $P$ could not have been a head since it belongs to the same component as $P'$ and hence it will always remain in the same component as $P'$ (as \textsf{GapFill} only merges components and never breaks them apart). Therefore, $\ptime'(P) = \ptime'(P') + 1$. On the other hand, if
$\ptime(P) > \ptime(P') + 1$ and $2 \cdot \depth(P) > \ptime(P)$, then
$P$ was a head in the original tree, and because of the above criterion, \textsf{GapFill} must have made an advance on $P'$ thereby including it in the same component as $P$; so again it is easy to see that $\ptime'(P) = \ptime'(P') + 1$.
\end{proof}

The crucial point here is that whether or not $P$ is in the same component
as its predecessor after the gap-filling (and, consequently, whether it was played contiguously along with its predecessor should that transition happen in~\textsf{AlgDAG})
can be inferred from the $\ptime$ values of $P, P'$ before gap-filling
and from the depth of $P$---it does not depend on any
other \textbf{advance}s that happen during the gap-filling.

Algorithm~\ref{alg:implicitFill} is a procedure
which plays the original trees $\lpdt(i,j)$ while implicitly performing
the \textbf{advance} steps of \textsf{GapFill} (by checking if the properties of Claim~\ref{cl:gapppt} hold). This change is reflected
in \lref[Step]{impalg:fill} where we may play a node even if it is not
contiguous, so long it satisfies the above stated properties.
Therefore, as a consequence of Claim~\ref{cl:gapppt}, we get the following Lemma that the plays made by \textsf{ImplicitFill} are identical to those made by \textsf{AlgDAG} after running \textsf{GapFill}.
	
\begin{algorithm}[ht!]
  \caption{Filling gaps implicitly: Algorithm \textsf{ImplicitFill}}
  \begin{algorithmic}[1]
  \label{alg:implicitFill}
    \STATE for arm $i$, \textbf{sample} strategy $\lpdt(i,j)$ with
    probability $\frac{\prob(\mroot(\lpdt(i,j)))}{24}$; ignore arm
    $i$ w.p.\ $1 - \sum_{j}
    \frac{\prob(\mroot(\lpdt(i,j)))}{24}$.  \label{impalg:mabstep1DAG}
    \STATE let $A \gets$ set of ``active'' arms which chose a
    strategy in the random process.
    \STATE  for each $i \in A$, \textbf{let} $\sigma(i) \gets$ index $j$
    of the chosen $\lpdt(i,j)$ and \textbf{let} $\currnode(i) \gets $ root of $\lpdt(i,\sigma(i))$. \label{impalg:rootchoose}
    \WHILE{active arms $A \neq \emptyset$}
    \STATE \textbf{let} $i^* \gets$ arm with state played earliest (i.e., $i^* \gets \argmin_{i \in A} \{ \ptime(\currnode(i))
    \}$).
    \STATE \textbf{let} $\tau \gets \ptime(\currnode(i^*))$.
		\WHILE{$\ptime(\currnode(i^*)) \neq \infty$
      \textbf{and} ($\ptime(\currnode(i^*)) = \tau$ \textbf{or} $2 \cdot  \depth(\currnode(i^*)) > \ptime(\currnode(i^*))$) } \label{impalg:fill}
    \STATE \textbf{play} arm $i^*$ at state $\state(\currnode(i^*))$
    \label{impalg:play}
    \STATE \textbf{let} $u$ be the new state of arm $i^*$ and \textbf{let} $P$ be the child of $\currnode(i^*)$ satisfying $\state(P) = u$. \label{impalg:nextNode}
    \STATE \textbf{update} $\currnode(i^*)$ to be $P$; \textbf{let} $\tau \gets \tau + 1$.
    \ENDWHILE
    \IF{$\ptime(\currnode(i^*)) = \infty$}
    \STATE \textbf{let} $A \gets A \setminus \{i^*\}$
    \ENDIF
    \ENDWHILE
 \end{algorithmic}
\end{algorithm}	

\begin{lemma}
  Algorithm $\textsf{ImplicitFill}$ obtains the same reward as algorithm
  $\textsf{AlgDAG}\circ\textsf{GapFill}$.
\end{lemma}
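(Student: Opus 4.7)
The plan is to establish a coupling between the random choices of the two algorithms and then inductively show that, under this coupling, they produce identical sequences of plays (and therefore identical rewards). Specifically, I would couple (a) the strategy DAG $\sigma(i)$ sampled for each arm $i$ in \lref[Step]{impalg:mabstep1DAG}/\ref{alg:mabstep1DAG}, and (b) the outcome of each random state transition when an arm is played. Under this coupling, I would prove by induction on the number of plays performed so far that, at the start of each iteration of the outer while loop in both algorithms, the set of active arms $A$ is the same and $\currnode(i)$ agrees for every $i \in A$. The base case is immediate from \lref[Step]{impalg:rootchoose} (identical roots for identical sampled strategy DAGs).

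For the inductive step two things need to match. The first is \emph{arm selection}: I need that $\argmin_{i \in A}\{\ptime(\currnode(i))\}$ in $\textsf{ImplicitFill}$ equals $\argmin_{i \in A}\{\ptime'(\currnode(i))\}$ in $\textsf{AlgDAG}$ run on the gap-filled tree (where $\ptime'$ denotes the post-\textsf{GapFill} times). The key observation is that every $\currnode(i)$ encountered at a decision point is a head of the gap-filled tree: it is either the root of $\lpdt(i,\sigma(i))$, whose $\ptime$ \textsf{GapFill} never touches, or it is a node $P$ at which the previous iteration's inner loop exited because $\ptime(P) \neq \tau$ and $2\depth(P) \leq \ptime(P)$, which by \lref[Claim]{cl:gapppt} is precisely the condition that $P$ remains a head after gap-filling. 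For such heads one has $\ptime'(P) = \ptime(P)$, since \textsf{GapFill} changes $\ptime$ values only by advancing entire components, and a node that remains a head was never part of any advanced component (advancing a head merges it with its parent's component). Hence the two argmins coincide.

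The second is the \emph{contiguous-play} inner loop. Once $i^*$ is selected, $\textsf{AlgDAG}$ keeps playing as long as $\ptime'(\currnode(i^*)) = \tau$, while $\textsf{ImplicitFill}$ keeps playing as long as $\ptime(\currnode(i^*)) = \tau$ or $2\depth(\currnode(i^*)) > \ptime(\currnode(i^*))$. By \lref[Claim]{cl:gapppt}, for any child $P$ of a just-played node $P'$ we have $\ptime'(P) = \ptime'(P') + 1$ if and only if $\ptime(P) = \ptime(P')+1$ or $2\depth(P) > \ptime(P)$. Hence the two loops traverse exactly the same nodes of $\lpdt(i^*,\sigma(i^*))$ and exit at the same node, which (by the same equivalence) is again a head of the gap-filled tree, preserving the invariant.

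The main obstacle is the arm-selection step: one must argue carefully that heads of the gap-filled tree retain their original $\ptime$, so that the two orderings over active arms agree at decision points. Once this is in place, together with the direct correspondence between the two inner-loop conditions supplied by \lref[Claim]{cl:gapppt}, the coupling gives identical play sequences, and hence identical rewards (since both algorithms collect profit only from their first $B$ plays), completing the proof.
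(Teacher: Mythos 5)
Your proposal is correct and is essentially the argument the paper intends: the paper gives no explicit proof, simply asserting the lemma "as a consequence of Claim~\ref{cl:gapppt}," and your coupling-plus-induction writeup (including the observation that surviving heads keep their original $\ptime$, which is what makes the arm-selection argmins agree) fills in exactly the details the paper leaves implicit. One small step to tighten: in the inner loop, \textsf{ImplicitFill} tests $\ptime(\currnode) = \tau$ rather than $\ptime(\currnode) = \ptime(P') + 1$, and after an implicit advance one can have $\tau < \ptime(P') + 1$; the two tests still agree because $\ptime$ increases by at least one per edge (so $\ptime(P') \geq \tau$ always, with strict inequality only when $2\depth(P') > \ptime(P')$, in which case $2\depth(P) > \ptime(P') + 2 \geq \ptime(P)$ whenever $\ptime(P) = \ptime(P')+1$, so the depth disjunct covers $P$ and both conditions evaluate identically), but this deserves a line of justification rather than a bare "hence."
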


\subsubsection{Running \textbf{ImplicitFill} in Polynomial Time}
With the description of \textsf{ImplicitFill}, we are almost complete with our proof with the exception of handling the exponential blow-up incurred in moving from $\lpd$ to $\lpdt$. To resolve this, we now argue that while the blown-up $\lpdt$ made it easy to visualize the transitions and plays made, all of it can be done implicitly from the strategy DAG $\lpd$.
Recall that the tree-nodes in $\lpdt(i,j)$ correspond to simple paths in $\lpd(i,j)$. In the following, the final algorithm we employ (called \textsf{ImplicitPlay}) is simply the algorithm \textsf{ImplicitFill}, but with the exponentially blown-up trees $\lpdt(i, \sigma(i))$ being generated \emph{on-demand}, as the different transitions are made. We now describe how this can be done.

In Step~\ref{impalg:rootchoose} of \textsf{ImplicitFill},
we start off at the roots of the trees $\lpdt(i,\sigma(i))$, which
corresponds to the single-node path corresponding to the root of
$\lpd(i,\sigma(i))$. Now, at some point in time in the execution of \textsf{ImplicitFill}, suppose we are at the
tree-node $\currnode(i^*)$, which corresponds to a path $Q$ in
$\lpd(i,\sigma(i))$ that ends at $(i, \sigma(i), v, t)$ for some
$v$ and $t$. The invariant we maintain is that, in our algorithm \textsf{ImplicitPlay}, we are at node $(i, \sigma(i), v, t)$ in $\lpd(i,\sigma(i))$. Establishing this invariant would show that the two runs \textsf{ImplicitPlay} and \textsf{ImplicitFill} would be identical, which when coupled with Theorem~\ref{thm:main-mabDAG} would complete the proof---the information that \textsf{ImplicitFill} uses of $Q$, namely $\ptime(Q)$ and $\depth(Q)$, can be obtained from $(i, \sigma(i), v, t)$.

The invariant is clearly satisfied at the beginning, for the different root nodes.
Suppose it is true for some tree-node $\currnode(i)$, which corresponds to a path $Q$ in
$\lpd(i,\sigma(i))$ that ends at $(i, \sigma(i), v, t)$ for some
$v$ and $t$.
Now, suppose upon playing the arm $i$ at state $v$ (in Step~\ref{impalg:play}), we make a transition to state $u$ (say),
then \textsf{ImplicitFill} would find the unique child tree-node $P$ of
$Q$ in $\lpdt(i, \sigma(i))$ with $\state(P) = u$. Then let $(i, \sigma(i), u, t')$ be the last node of the path $P$, so that $P$ equals $Q$ followed by $(i, \sigma(i), u, t')$.

But, since the tree $\lpdt(i, \sigma(i))$ is
just an expansion of $\lpd(i, \sigma(i))$, the unique child $P$ in $\lpdt(i,\sigma(i))$ of tree-node $Q$ which has $\state(P) = u$, is (by definition of $\lpdt$) the unique node $(i, \sigma(i), u, t')$ of $\lpd(i, \sigma(i))$ such that $(i, \sigma(i), v, t) \rightarrow (i,\sigma(i), u,t')$.
Hence, just as \textsf{ImplicitFill} transitions
to $P$ in $\lpdt(i, \sigma(i))$ (in Step~\ref{impalg:nextNode}), we can transition to the state
$(i, \sigma(i), u, t')$ with just $\lpd$ at our disposal, thus establishing the invariant.

For completeness, we present the implicit algorithm below.

\begin{algorithm}[ht!]
  \caption{Algorithm \textsf{ImplicitPlay}}
  \begin{algorithmic}[1]
  \label{alg:implicitPlay}
    \STATE for arm $i$, \textbf{sample} strategy $\lpd(i,j)$ with
    probability $\frac{\prob(\mroot(\lpd(i,j)))}{24}$; ignore arm
    $i$ w.p.\ $1 - \sum_{j}
    \frac{\prob(\mroot(\lpd(i,j)))}{24}$.  
    \STATE let $A \gets$ set of ``active'' arms which chose a
    strategy in the random process.
    \STATE  for each $i \in A$, \textbf{let} $\sigma(i) \gets$ index $j$
    of the chosen $\lpd(i,j)$ and \textbf{let} $\currnode(i) \gets $ root of $\lpd(i,\sigma(i))$. 
    \WHILE{active arms $A \neq \emptyset$}
    \STATE \textbf{let} $i^* \gets$ arm with state played earliest (i.e., $i^* \gets \argmin_{i \in A} \{ \ptime(\currnode(i))
    \}$).
    \STATE \textbf{let} $\tau \gets \ptime(\currnode(i^*))$.
		\WHILE{$\ptime(\currnode(i^*)) \neq \infty$
      \textbf{and} ($\ptime(\currnode(i^*)) = \tau$ \textbf{or} $2 \cdot  \depth(\currnode(i^*)) > \ptime(\currnode(i^*))$) } 
    \STATE \textbf{play} arm $i^*$ at state $\state(\currnode(i^*))$
    \STATE \textbf{let} $u$ be the new state of arm $i^*$. 
    \STATE \textbf{update} $\currnode(i^*)$ to be $u$; \textbf{let} $\tau \gets \tau + 1$.
    \ENDWHILE
    \IF{$\ptime(\currnode(i^*)) = \infty$}
    \STATE \textbf{let} $A \gets A \setminus \{i^*\}$
    \ENDIF
    \ENDWHILE
 \end{algorithmic}
\end{algorithm}

\section{Concluding Remarks}

We presented the first constant-factor approximations for the
stochastic knapsack problem with cancellations and correlated size/reward
pairs, and for the budgeted learning problem without the martingale
property. We showed that existing LPs for the restricted versions of the
problems have large integrality gaps, which required us to give new LP
relaxations, and well as new rounding algorithms for these problems.

\paragraph*{Acknowledgments.} We thank Kamesh Munagala and Sudipto Guha
for useful conversations.

\bibliographystyle{alpha}
{\small \bibliography{stoc-ks}}

\begin{thebibliography}{GGM06}

\bibitem[Ber05]{Bert05}
Dimitri~P. Bertsekas.
\newblock {\em Dynamic programming and optimal control.}
\newblock Athena Scientific, Belmont, MA, third edition, 2005.

\bibitem[BGK11]{BGK11}
Anand Bhalgat, Ashish Goel, and Sanjeev Khanna.
\newblock Improved approximation results for stochastic knapsack problems.
\newblock In {\em SODA '11}. Society for Industrial and Applied Mathematics,
  2011.

\bibitem[BL97]{BirgeL97}
John~R. Birge and Fran{\c{c}}ois Louveaux.
\newblock {\em Introduction to stochastic programming}.
\newblock Springer Series in Operations Research. Springer-Verlag, New York,
  1997.

\bibitem[CR06]{ChawlaR06}
Shuchi Chawla and Tim Roughgarden.
\newblock Single-source stochastic routing.
\newblock In {\em Proceedings of APPROX}, pages 82--94. 2006.

\bibitem[Dea05]{Dean-thesis}
Brian~C. Dean.
\newblock {\em Approximation Algorithms for Stochastic Scheduling Problems}.
\newblock PhD thesis, MIT, 2005.

\bibitem[DGV05]{dgv05}
Brian~C. Dean, Michel~X. Goemans, and Jan Vondr{\'a}k.
\newblock Adaptivity and approximation for stochastic packing problems.
\newblock In {\em SODA}, pages 395--404, 2005.

\bibitem[DGV08]{DeanGV08}
Brian~C. Dean, Michel~X. Goemans, and Jan Vondr{\'a}k.
\newblock Approximating the stochastic knapsack problem: The benefit of
  adaptivity.
\newblock {\em Math. Oper. Res.}, 33(4):945--964, 2008.

\bibitem[GGM06]{GGM06}
Ashish Goel, Sudipto Guha, and Kamesh Munagala.
\newblock Asking the right questions: model-driven optimization using probes.
\newblock In {\em PODS}, pages 203--212, 2006.

\bibitem[GI99]{GI99}
Ashish Goel and Piotr Indyk.
\newblock Stochastic load balancing and related problems.
\newblock In {\em 40th Annual Symposium on Foundations of Computer Science (New
  York, 1999)}, pages 579--586. IEEE Computer Soc., Los Alamitos, CA, 1999.

\bibitem[Git89]{Gittins89}
J.~C. Gittins.
\newblock {\em Multi-armed bandit allocation indices}.
\newblock Wiley-Interscience Series in Systems and Optimization. John Wiley \&
  Sons Ltd., Chichester, 1989.
\newblock With a foreword by Peter Whittle.

\bibitem[GKN09]{GoelKN09}
Ashish Goel, Sanjeev Khanna, and Brad Null.
\newblock The ratio index for budgeted learning, with applications.
\newblock In {\em SODA '09: Proceedings of the twentieth Annual ACM-SIAM
  Symposium on Discrete Algorithms}, pages 18--27, Philadelphia, PA, USA, 2009.
  Society for Industrial and Applied Mathematics.

\bibitem[GM07a]{GuhaM-stoc07}
Sudipto Guha and Kamesh Munagala.
\newblock Approximation algorithms for budgeted learning problems.
\newblock In {\em S{TOC}'07---{P}roceedings of the 39th {A}nnual {ACM}
  {S}ymposium on {T}heory of {C}omputing}, pages 104--113. ACM, New York, 2007.
\newblock Full version as \emph{Sequential Design of Experiments via Linear
  Programming}, \url{http://arxiv.org/abs/0805.2630v1}.

\bibitem[GM07b]{GuhaM-soda07}
Sudipto Guha and Kamesh Munagala.
\newblock Model-driven optimization using adaptive probes.
\newblock In {\em SODA '07: Proceedings of the eighteenth annual ACM-SIAM
  symposium on Discrete algorithms}, pages 308--317, Philadelphia, PA, USA,
  2007. Society for Industrial and Applied Mathematics.
\newblock Full version as \emph{Adaptive Uncertainty Resolution in Bayesian
  Combinatorial Optimization Problems}, \url{http://arxiv.org/abs/0812.1012v1}.

\bibitem[GM09]{GuhaM09}
Sudipto Guha and Kamesh Munagala.
\newblock Multi-armed bandits with metric switching costs.
\newblock In {\em ICALP}, pages 496--507, 2009.

\bibitem[GMP11]{GuhaMP11}
Sudipto Guha, Kamesh Munagala, and Martin Pal.
\newblock Iterated allocations with delayed feedback.
\newblock {\em ArXiv}, arxiv:abs/1011.1161, 2011.

\bibitem[GMS07]{GuhaMS07}
Sudipto Guha, Kamesh Munagala, and Peng Shi.
\newblock On index policies for restless bandit problems.
\newblock {\em CoRR}, abs/0711.3861, 2007.
\newblock \url{http://arxiv.org/abs/0711.3861}. Full version of
  \emph{Approximation algorithms for partial-information based stochastic
  control with Markovian rewards} (FOCS'07), and \emph{Approximation algorithms
  for restless bandit problems}, (SODA'09).

\bibitem[KRT00]{KRT-sched}
Jon Kleinberg, Yuval Rabani, and {\'E}va Tardos.
\newblock Allocating bandwidth for bursty connections.
\newblock {\em SIAM J. Comput.}, 30(1):191--217 (electronic), 2000.

\bibitem[MSU99]{MohringSU99}
Rolf~H. M\"{o}hring, Andreas~S. Schulz, and Marc Uetz.
\newblock Approximation in stochastic scheduling: the power of lp-based
  priority policies.
\newblock {\em Journal of the ACM (JACM)}, 46(6):924--942, 1999.

\bibitem[Pin95]{Pinedo}
Michael Pinedo.
\newblock {\em Scheduling: Theory, Algorithms, and Systems}.
\newblock Prentice Hall, 1995.

\bibitem[SU01]{SkutU01}
Martin Skutella and Marc Uetz.
\newblock Scheduling precedence-constrained jobs with stochastic processing
  times on parallel machines.
\newblock In {\em Proceedings of the twelfth annual ACM-SIAM symposium on
  Discrete algorithms}, pages 589--590. Society for Industrial and Applied
  Mathematics, 2001.

\end{thebibliography}

\appendix

\section{Some Bad Examples}
\label{sec:egs}

\subsection{Badness Due to Cancelations}
\label{sec:badness-cancel}

We first observe that the LP relaxation for the \sks problem used
in~\cite{DeanGV08} has a large integrality gap in the model where cancelations are allowed,
\emph{even when the rewards are fixed for any item}. This was also noted
in~\cite{Dean-thesis}. Consider the following example: there are $n$
items, every item instantiates to a size of $1$ with probability $0.5$
or a size of $n/2$ with probability $0.5$, and its reward is always $1$.
Let the total size of the knapsack be $B = n$.  For such an instance, a
good solution would cancel any item that does not terminate at
size $1$; this way, it can collect a reward of at least $n/2$ in
expectation, because an average of $n/2$ items will instantiate with a
size $1$ and these will all contribute to the reward. On the other hand,
the LP from~\cite{DeanGV08} has value $O(1)$, since the mean size of any
item is at least $n/4$. In fact, any strategy that does not cancel jobs will
also accrue only $O(1)$ reward.

\subsection{Badness Due to Correlated Rewards}
\label{sec:badness-corr}
While the LP relaxations used for \mab (e.g., the formulation in ~\cite{GuhaM-stoc07})  can handle the issue explained above w.r.t cancelations,
      we now present an example of stochastic knapsack (where the reward is correlated with the actual size)
	for which the existing \mab LP formulations all have a large integrality gap.

	Consider the following example: there are $n$
items, every item instantiates to a size of $1$ with probability $1-1/n$
or a size of $n$ with probability $1/n$, and its reward is $1$ only if its size is $n$, and $0$ otherwise.
Let the total size of the knapsack be $B = n$.
Clearly, any integral solution can fetch an expected reward of $1/n$ --- if the first item it schedules instantiates to a large size, then it gives us a reward. Otherwise, no subsequent item can be fit within our budget even if it instantiates to its large size.
The issue with the existing LPs is that the \emph{arm-pull} constraints are ensured locally, and there is one global budget.
That is, even if we play each arm to completion individually, the expected size (i.e., number of pulls) they occupy is $1 \cdot (1-1/n) + n \cdot (1/n) \leq 2$. Therefore, such LPs can accommodate $n/2$ jobs, fetching a total reward of $\Omega(1)$. This example brings to attention the fact that all these item are competing to be pulled in the first time slot (if we begin an item in any later time slot it fetches zero reward), thus naturally motivating our time-indexed LP formulation in Section~\ref{sec:large}.

In fact, the above example also shows that if we allow ourselves a budget of $2B$, i.e., $2n$ in this case, we can in fact achieve an expected reward of $O(1)$ (much higher than what is possible with a budget of $B$) --- keep playing all items one by one, until one of them does not step after size $1$ and then play that to completion; this event happens with probability $\Omega(1)$.

\subsection{Badness Due to the Non-Martingale Property in MAB: The Benefit of Preemption}
\label{sec:preemption-gap}
Not only do cancelations help in our problems (as can be seen from the example in Appendix~\ref{sec:badness-cancel}), we now show that even \emph{preemption} is necessary in the case of \mab where the rewards do not satisfy the martingale property. In fact, this brings forward another key difference between our rounding scheme and earlier algorithms for \mab --- the necessity of preempting arms is not an artifact of our algorithm/analysis but, rather, is unavoidable.

Consider the following instance. There are $n$ identical arms, each of them with the following (recursively defined) transition tree starting at $\rho(0)$:

When the root $\rho(j)$ is pulled for $j < m$, the following two transitions can happen:
\begin{enumerate}
\item[(i)] with probability $1/(n \cdot n^{m-j})$, the arm transitions to the ``right-side'', where if it makes $B - n(\sum_{k=0}^{j} L^k)$ plays, it will deterministically reach a state with reward $n^{m-j}$. All intermediate states have $0$ reward.
\item[(ii)] with probability $1 - 1/(n \cdot n^{m-j})$, the arm transitions to the ``left-side'', where if it makes $L^{j+1} - 1$ plays, it will deterministically reach the state $\rho(j+1)$. No state along this path fetches any reward.
\end{enumerate}

Finally, node $\rho(m)$ makes the following transitions when played: (i) with probability $1/n$, to a leaf state that has a reward of $1$ and the arm ends there; (ii) with probability $1-1/n$, to a leaf state with reward of $0$.

For the following calculations, assume that $B \gg L > n$ and $m \gg 0$.

\noindent {\bf Preempting Solutions.}
We first exhibit a preempting solution with expected reward $\Omega(m)$. The strategy plays $\rho(0)$ of all the arms until one of them transitions to the ``right-side'', in which case it continues to play this until it fetches a reward of $n^m$. Notice that any root which transitioned to the right-side can be played to completion, because the number of pulls we have used thus far is at most $n$ (only those at the $\rho(0)$ nodes for each arm), and the size of the right-side is exactly $B - n$.
Now, if all the arms transitioned to the left-side, then it plays the $\rho(1)$ of each arm until one of them transitioned to the right-side, in which case it continues playing this arm and gets a reward of $n^{m-1}$.
Again, any root $\rho(1)$ which transitioned to the right-side \emph{can be played} to completion, because the number of pulls we have used thus far is at most $n(1 + L)$ (for each arm, we have pulled the root $\rho(0)$, transitioned the walk of length $L-1$ to $\rho(1)$ and then pulled $\rho(1)$), and the size of the right-side is exactly $B - n(1+L)$. This strategy is similarly defined, recursively.

We now calculate the expected reward: if any of the roots $\rho(0)$ made a transition to the right-side, we get a reward of $n^m$. This happens with probability roughly $1/n^m$, giving us an expected reward of $1$ in this case.
If all the roots made the transition to the left-side, then at least one of the $\rho(1)$ states will make a transition to their right-side with probability $\approx 1/n^{m-1}$ in which case will will get reward of $n^{m-1}$, and so on.
Thus, summing over the first $m/2$ such rounds, our expected reward is at least
\[ \frac{1}{n^m} n^m  + \left(1- \frac{1}{n^m}\right) \frac{1}{n^{m-1}}  n^{m-1} + \left(1- \frac{1}{n^m}\right) \left(1- \frac{1}{n^{m-1}}\right)\frac{1}{n^{m-2}}  n^{m-2} + \ldots
\]
Each term above is $\Omega(1)$ giving us a total of $\Omega(m)$ expected reward.

\noindent {\bf Non-Preempting Solutions.}
Consider any non-preempting solution. Once it has played the first node of an arm and it has transitioned to the left-side, it has to irrevocably decide if it abandons this arm or continues playing. But if it has continued to play (and made the transition of $L-1$ steps), then it cannot get any reward from the right-side of $\rho(0)$ of any of the other arms, because $L > n$ and the right-side requires $B-n$ pulls before reaching a reward-state.
Likewise, if it has decided to move from $\rho(i)$ to $\rho(i+1)$ on any arm, it cannot get \emph{any} reward from the right-sides of $\rho(0), \rho(1), \ldots, \rho(i)$ on \emph{any} arm due to budget constraints. Indeed, for any $i \geq 1$, to have reached $\rho(i+1)$ on any particular arm, it must have utilized $(1 + L-1)  + (1 + L^2 -1) + \ldots + (1 + L^{i+1} -1)$ pulls in total, which exceeds $n(1 +L + L^2 + \ldots + L^{i})$ since $L > n$. Finally, notice that if the strategy has decided to move from $\rho(i)$ to $\rho(i + 1)$ on any arm, the maximum reward that it can obtain is $n^{m - i - 1}$, namely, the reward from the right-side transition of $\rho(i + 1)$.

Using these properties, we observe that an optimal non-preempting strategy proceeds in rounds as described next.

\medskip \noindent {\bf Strategy at round $i$.} Choose a set $N_i$ of $n_i$ available arms and play them as follows: pick one of these arms, play until reaching state $\rho(i)$ and then play once more. If there is a right-side transition before reaching state $\rho(i)$, discard this arm since there is not enough budget to play until reaching a state with positive reward. If there is a right-side transition at state $\rho(i)$, play this arm until it gives reward of $n^{m - i}$. If there is no right-side transition and there is another arm in $N_i$ which is still to be played, discard the current arm and pick the next arm in $N_i$.

\medskip

In round $i$, at least $\max(0, n_i - 1)$ arms are discarded, hence $\sum_i n_i \le 2 n$. Therefore, the expected reward can be at most
\[ \frac{n_1}{n \cdot n^m} n^m  + \frac{n_2}{n \cdot n^{m-1}}  n^{m-1}  + \ldots + \frac{n_m}{n}  \leq 2
\]


\section{Proofs from Section~\ref{sec:nopmtn}}

\subsection{Proof of Theorem~\ref{thm:large}} \label{app:nopmtn-proof}
	Let $\mathsf{add}_i$ denote the event that item $i$ was added to the knapsack in \lref[Step]{alg:big3}. Also, let $V_i$ denote the random variable
corresponding to the reward that our algorithm gets from item
$i$.
		
Clearly if item $i$ has $D_i
= t$ and was added, then it is added to the knapsack before time $t$.
In this case it is easy to see that $\E[V_i \mid \mathsf{add}_i \wedge (D_i =
t)] \ge R_{i,t}$ (because its random size is independent of when the algorithm started it). Moreover, from the previous lemma we have that
$\Pr(\mathsf{add}_i \mid (D_i = t)) \ge 1/2$ and from
\lref[Step]{alg:big1} we have $\Pr(D_i = t) =
\frac{x^*_{i,t}}{4}$; hence $\Pr(\mathsf{add}_i
\wedge (D_i = t)) \ge x^*_{i,t}/8$. Finally adding over all
possibilities of $t$, we lower bound the expected value of
$V_i$ by $$\E[V_i] \ge \sum_t \E[V_i \mid
\mathsf{add}_i \wedge (D_i = t)] \cdot \Pr(\mathsf{add}_i
\wedge (D_i = t)) \ge \frac{1}{8} {\sum_t x^*_{i,t} R_{i,t}}.$$
		
Finally, linearity of expectation over all items shows that the total expected reward
of our algorithm is at least $\frac18 \cdot \sum_{i, t} x^*_{i,t}
R_{i,t} = \LPOpt/8$, thus completing the proof.

\subsection{Making \sksnocancel Fully Polynomial} \label{app:polytime-nopmtn}
	
Recall that our LP relaxation \ref{lp:large} in Section~\ref{sec:nopmtn}
uses a global time-indexed LP. In order to make it compact, our approach will be to group the $B$ timeslots
in \ref{lp:large} and show that the grouped LP has optimal value
within constant factor of \ref{lp:large}; furthermore, we show also that it can be rounded and analyzed
almost identically to the original LP. To this end, consider the following LP relaxation:
\begin{alignat}{2} \tag{$\mathsf{PolyLP}_L$} \label{lp:largePoly}
  \max &\ts \sum_i \sum_{j = 0}^{\log B} \er_{i,2^{j + 1}} \cdot x_{i,2^j} &\\
  &\ts \sum_{j = 0}^{\log B} x_{i,2^j} \le 1 &\forall i \label{LPbig1Poly}\\
  &\ts \sum_{i, j' \le j} x_{i,2^{j'}} \cdot \E[\min(S_i,2^{j+1})]
  \le 2\cdot 2^j  \qquad &\forall j \in [0, \log B] \label{LPbig2Poly}\\
  &x_{i,2^j} \in [0,1] &\forall j \in [0, \log B], \forall i
\end{alignat}	
	
The next two lemmas relate the value of \eqref{lp:largePoly} to that of
the original LP \eqref{lp:large}.
	
\begin{lemma} \label{lemma:largePoly1} The optimum of
  \eqref{lp:largePoly} is at least half of the optimum of
  \eqref{lp:large}.
\end{lemma}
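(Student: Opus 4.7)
The plan is to produce, from any optimal $x^*$ of~\eqref{lp:large}, a feasible solution $\tilde x$ of~\eqref{lp:largePoly} whose objective is at least $\tfrac12 \LPOpt$. The natural move is to aggregate the fractional $x^*$-mass by dyadic time buckets: set $\tilde x_{i,2^j} := \tfrac12 \sum_{t \in [2^j,\, 2^{j+1})} x^*_{i,t}$ for every item $i$ and every $j \in \{0,1,\ldots,\log B\}$. The factor $\tfrac12$ is included to absorb the factor-$2$ mismatch between the constraint at bucket $j$ in~\eqref{lp:largePoly} (whose RHS is $2\cdot 2^j$) and the constraint of~\eqref{lp:large} evaluated at $t = 2^{j+1}$ (whose RHS is $2\cdot 2^{j+1}$).

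Feasibility of $\tilde x$ is direct. Constraint~\eqref{LPbig1Poly} follows by telescoping, since $\sum_j \tilde x_{i,2^j} = \tfrac12 \sum_t x^*_{i,t} \le \tfrac12$. For~\eqref{LPbig2Poly} at bucket $j$, summing $\tilde x_{i,2^{j'}}$ over $j' \le j$ collapses to $\tfrac12$ times the total $x^*$-mass on $t < 2^{j+1}$, so the LHS equals $\tfrac12 \sum_{i}\sum_{t \le 2^{j+1}-1} x^*_{i,t}\,\E[\min(S_i, 2^{j+1})]$, which by~\eqref{LPbig2} applied at $t = 2^{j+1}$ is at most $\tfrac12 \cdot 2 \cdot 2^{j+1} = 2\cdot 2^j$, as required. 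The box constraints are trivial since $\tilde x_{i,2^j} \le \tfrac12$.

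To lower-bound the objective, I would exploit the fact that $\er_{i,t} = \sum_{s\le B-t} \pi_{i,s} R_{i,s}$ is monotonically non-increasing in $t$ (the sum ranges over fewer terms as $t$ grows). Hence, for each $t$ lying in the bucket assigned to index $j$, the bucket's representative $\er$-coefficient can be taken to dominate $\er_{i,t}$, giving a term-by-term comparison. Combined with the $\tfrac12$ scaling, this yields $\mathrm{obj}(\tilde x) \ge \tfrac12 \sum_{i,t} \er_{i,t}\, x^*_{i,t} = \tfrac12 \LPOpt$.

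The main technical subtlety is making the monotonicity work in the correct direction: one must assign each unit of $x^*_{i,t}$ to a bucket whose representative value of $\er$ is at least $\er_{i,t}$, and by monotonicity of $\er$ in $t$ this forces the representative time of the bucket to be at most $t$. Once this bookkeeping is set up, the entire proof reduces to two clean facts: a factor-$\tfrac12$ loss in the budget (to satisfy the halved RHS of~\eqref{LPbig2Poly}) and essentially no additional loss in the reward (by the monotonicity of $\er$).
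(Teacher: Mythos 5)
There is a genuine gap in the objective comparison, and it is exactly the subtlety you flag in your last paragraph without actually resolving. Your construction sends the mass $x^*_{i,t}$ for $t\in[2^j,2^{j+1})$ to the variable $\tilde x_{i,2^j}$, whose reward coefficient in \eqref{lp:largePoly} is fixed by the LP to be $\er_{i,2^{j+1}}$. Every $t$ in that bucket satisfies $t<2^{j+1}$, so monotonicity of $\er$ gives $\er_{i,2^{j+1}}\le \er_{i,t}$ --- the inequality points the wrong way, and the loss is not bounded by any constant: if item $i$ collects reward only from the instantiation of size $B-t$ for some $t\in[2^j,2^{j+1})$, then $\er_{i,t}>0$ while $\er_{i,2^{j+1}}=0$, so your $\tilde x$ can have objective value $0$ even when $\LPOpt>0$. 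By your own stated criterion, the representative time of the bucket receiving $x^*_{i,t}$ must be at most $t$; since bucket $j$ carries the coefficient $\er_{i,2^{j+1}}$, this forces the shifted assignment $t\in[2^{j+1},2^{j+2})\mapsto$ bucket $j$ (with the mass at $t\in[1,4)$ folded into bucket $0$), which is what the paper's proof does by setting $\bar x_{i,2^j}=\tfrac12\sum_{t\in[2^{j+1},2^{j+2})}x^*_{i,t}$.

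Your feasibility verification is correct for your bucketing, but note that it is precisely the half of the argument that becomes delicate once the buckets are shifted to fix the objective: with the shifted assignment, $\sum_{j'\le j}\bar x_{i,2^{j'}}$ collects $x^*$-mass over all $t\le 2^{j+2}-1$, so one must invoke \eqref{LPbig2} at $t=2^{j+2}-1$ and one only gets an upper bound of about $4\cdot 2^j$ rather than the required $2\cdot 2^j$ (this can be repaired by taking a smaller scaling factor, at the cost of a worse constant in the lemma). The two halves of the argument thus pull in opposite directions and the bucket offset cannot make both trivial; the offset must be dictated by the objective, because the reward loss under your assignment is unbounded whereas the budget overflow under the shifted assignment is only a constant factor, which can be absorbed into the damping. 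As written, your construction does not establish the lemma.
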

	
\begin{proof}
  Consider a solution $x$ for \eqref{lp:large} and define $\bar{x}_{i1}
  = x_{i,1}/2 + \sum_{t \in [2,4)} x_{i,t}/2$ and $\bar{x}_{i,2^j} =
  \sum_{t \in [2^{j + 1}, 2^{j + 2})} x_{i,t}/2$ for $1 < j \le \log
  B$. It suffices to show that $\bar{x}$ is a feasible solution to
  \eqref{lp:largePoly} with value greater than of equal to half of the
  value of $x$.
		
  For constraints \eqref{LPbig1Poly} we have $\sum_{j = 0}^{\log B}
  \bar{x}_{i,2^j} = \sum_{t \ge 1} x_{i,t}/2 \le 1/2$; these
  constraints are therefore easily satisfied. We now show that $\{\bar{x}\}$ also satisfies
  constraints \eqref{LPbig2Poly}:
  \begin{align*}
    &\sum_{i, j' \le j} x_{i,2^{j'}} \cdot \E[\min(S_i,2^{j+1})] = \sum_i \sum_{t = 1}^{2^{j+2} - 1} \frac{x_{i,t} \E[\min(S_i, 2^{j + 1})]}{2} \\
    & \le  \sum_i \sum_{t = 1}^{2^{j+2} - 1} \frac{x_{i,t} \E[\min(S_i, 2^{j + 2} - 1)]}{2} \le 2^{j + 2} - 1,
  \end{align*}
  where the last inequality follows from feasibility of $\{x\}$.
		
  Finally, noticing that $\er_{i,t}$ is non-increasing with respect to
  $t$, it is easy to see that $\sum_i \sum_{j = 0}^{\log B} \er_{i,2^{j
      + 1}} \cdot \bar{x}_{i,2^j} \ge \sum_{i,t} \er{i,t} \cdot
  x_{i,t}/2$ and hence $\bar{x}$ has value greater than of equal to half
  of the value of $x$ ad desired.
\end{proof}

\begin{lemma} \label{lemma:largePoly2} Let $\{\bar{x}\}$ be a feasible
  solution for \eqref{lp:largePoly}. Define $\{\hat{x}\}$ satisfying $\hat{x}_{i,t} =
  \bar{x}_{i,2^j}/2^j$ for all $t \in [2^j, 2^{j+1})$ and $i \in
  [n]$. Then $\{\hat{x}\}$ is feasible for \eqref{lp:large} and has value at least
  as large as $\{\bar{x}\}$.
\end{lemma}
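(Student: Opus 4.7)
The plan is to verify the two claims in turn: (i) that $\{\hat x\}$ satisfies constraints \eqref{LPbig1}--\eqref{LPbig3} of \eqref{lp:large}, and (ii) that plugging $\{\hat x\}$ into the objective of \eqref{lp:large} gives a value no smaller than the objective of \eqref{lp:largePoly} at $\{\bar x\}$. Both parts are routine algebra once two monotonicities are noted: the truncated expectation $\E[\min(S_i,t)]$ is non-decreasing in $t$, while $\er_{i,t} = \sum_{s \le B - t} \pi_{i,s} R_{i,s}$ is non-increasing in $t$ (the latter was already used in the proof of \lref[Lemma]{lemma:largePoly1}).

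For feasibility, I would first observe that spreading $\bar x_{i,2^j}$ uniformly across the $2^j$ time-slots of the dyadic block $[2^j,2^{j+1})$ preserves total mass per item: $\sum_{t \in [2^j,2^{j+1})} \hat x_{i,t} = \bar x_{i,2^j}$, so \eqref{LPbig1} follows immediately from \eqref{LPbig1Poly}. For \eqref{LPbig2}, given any $t \in [B]$, let $j$ be the unique index with $t \in [2^j,2^{j+1})$; then $\sum_{t'\le t} \hat x_{i,t'} \le \sum_{j'\le j} \bar x_{i,2^{j'}}$ and $\E[\min(S_i,t)] \le \E[\min(S_i,2^{j+1})]$. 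Combining these bounds and invoking \eqref{LPbig2Poly} shows the LHS of \eqref{LPbig2} at time $t$ is at most $2\cdot 2^j \le 2t$. The box constraint \eqref{LPbig3} is immediate from $\hat x_{i,t} = \bar x_{i,2^j}/2^j \le \bar x_{i,2^j} \le 1$.

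For the objective comparison, I would work block-by-block: within the dyadic block $[2^j,2^{j+1})$, every $t$ satisfies $t < 2^{j+1}$, so by the monotonicity of $\er_{i,\cdot}$ we have $\er_{i,t} \ge \er_{i,2^{j+1}}$. Summing over the $2^j$ values of $t$ in the block, each carrying weight $\bar x_{i,2^j}/2^j$, yields a contribution at least $\er_{i,2^{j+1}} \cdot \bar x_{i,2^j}$, which matches the $(i,j)$-term in the \eqref{lp:largePoly} objective. Summing over $i$ and $j$ then concludes the objective comparison.

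There is no real obstacle here; the only thing to track carefully is that the two monotonicities run in opposite directions, which is exactly what lets the feasibility bound and the objective bound both go the ``right'' way. Together with \lref[Lemma]{lemma:largePoly1}, this lemma shows that solving the compact LP \eqref{lp:largePoly} (which has size polynomial in $n$ and $\log B$) and converting its solution to an $\{\hat x\}$ for \eqref{lp:large} loses only a constant factor, making algorithm \sksnocancel run in time $\poly(n,\log B)$ while preserving the $O(1)$-approximation guarantee.
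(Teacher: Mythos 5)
Your proposal is correct and follows essentially the same route as the paper's own proof: uniform spreading across each dyadic block preserves per-item mass (giving~\eqref{LPbig1}), the opposite monotonicities of $\E[\min(S_i,\cdot)]$ and $\er_{i,\cdot}$ give respectively the feasibility bound for~\eqref{LPbig2} and the objective comparison, both collapsed block-by-block exactly as in the paper. The only addition is the explicit (and trivial) check of the box constraint~\eqref{LPbig3}, which the paper leaves implicit.
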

	
\begin{proof}
  The feasibility of $\{\bar{x}\}$ directly imply that $\{\hat{x}\}$ satisfies
  constraints \eqref{LPbig1}. For constraints \eqref{LPbig2}, consider
  $t \in [2^j, 2^{j + 1})$; then we have the following:
  \begin{align*}
    & \sum_{i, t' \le t} \hat{x}_{i,t'} \cdot \E[\min(S_i,t)] \le \sum_i \sum_{j' \le j} \sum_{t \in [2^{j'}, 2^{j' + 1})} \frac{\bar{x}_{i,2^j}}{2^j} \E[\min(S_i, 2^{j+1})] \\
    & = \sum_i \sum_{j' \le j} \bar{x}_{i,2^j} \E[\min(S_i, 2^{j+1})] \le 2 \cdot 2^j \le 2t.
  \end{align*}

  Finally, again using the fact that $\er_{i,t}$ is non-increasing in
  $t$ we get that the value of $\{\hat{x}\}$ is
  \begin{align*}
    \sum_{i,t} \er_{i,t} \cdot \hat{x}_{i,t} = \sum_i \sum_{j = 0}^{\log B} \sum_{t \in [2^j, 2^{j+1})} \er_{i,t} \frac{\bar{x}_{i,2^j}}{2^j} \ge \sum_i \sum_{j = 0}^{\log B} \sum_{t \in [2^j, 2^{j+1})} \er_{i,2^{j + 1}} \frac{\bar{x}_{i,2^j}}{2^j} = \sum_i \sum_{j = 0}^{\log B} \er_{i,2^{j + 1}} \bar{x}_{i,2^j},
  \end{align*}
  which is then at least as large as the value of $\{\bar{x}\}$. This
  concludes the proof of the lemma.
\end{proof}

The above two lemmas show that the \ref{lp:largePoly} has value
close to that of \ref{lp:large}: let's now show that we can simulate
the execution of Algorithm \skslarge just given an optimal solution
$\{\bar{x}\}$ for \eqref{lp:largePoly}. Let $\{\hat{x}\}$ be defined as in the above
lemma, and consider the Algorithm \skslarge applied to $\{\hat{x}\}$. By the definition of $\{\hat{x}\}$, here's how to execute
\lref[Step]{alg:big1} (and hence the whole algorithm) in polynomial
time: we obtain $D_i = t$ by picking $j \in [0, \log B]$ with
probability $\bar{x}_{i,2^j}$ and then selecting $t \in [2^j, 2^{j +
  1})$ uniformly; notice that indeed $D_i = t$ (with $t \in [2^j, 2^{j +
  1})$) with probability $\bar{x}_{i,2^j}/2^j = \hat{x}_{i,t}$.
	
Using this observation we can obtain a $1/16$ approximation for our
instance $\mathcal{I}$ in polynomial time by finding the optimal
solution $\{\bar{x}\}$ for \eqref{lp:largePoly} and then running Algorithm
\skslarge over $\{\hat{x}\}$ as described in the previous paragraph. Using a
direct modification of \lref[Theorem]{thm:large} we have that the
strategy obtained has expected reward at least at large as $1/8$ of the
value of $\{\hat{x}\}$, which by \lref[Lemmas]{lemma:largePoly1} and
\ref{lemma:largePoly2} (and \lref[Lemma]{thm:lp-large-valid}) is
within a factor of $1/16$ of the optimal solution for $\mathcal{I}$.

\section{Proofs from Section~\ref{sec:sk}} \label{app:small}

\subsection{Proof of Lemma~\ref{lem:stop-dist}}
  The proof works by induction. For the base case, consider $t=0$.
  Clearly, this item is forcefully canceled in \lref[step]{alg:st:2} of Algorithm~\ref{alg:skssmall} \skssmall
  (in the iteration with $t=0$) with probability $s^*_{i,0}/v^*_{i,0} -
  \pi_{i,0}/\sum_{t' \geq 0} \pi_{i,t'}$. But since $\pi_{i,0}$ was
  assumed to be $0$ and $v^*_{i,0}$ is $1$, this quantity is exactly
  $s^*_{i,0}$, and this proves property~(i).  For property~(ii), item
  $i$ is processed for its $\mathbf{1}^{st}$ timestep if it did not get
  forcefully canceled in \lref[step]{alg:st:2}. This therefore happens
  with probability $1 - s^*_{i,0} = v^*_{i,0} - s^*_{i,0} = v^*_{i,1}$.
  For property~(iii), conditioned on the fact that it has been processed
  for its $\mathbf{1}^{st}$ timestep, clearly the probability that its (unknown)
  size has instantiated to $1$ is exactly $\pi_{i,1}/\sum_{t' \geq 1}
  \pi_{i,t'}$. When this happens, the job stops in \lref[step]{alg:st:5}, thereby establishing the
  base case.

  Assuming this property holds for every timestep until some fixed value
  $t-1$, we show that it holds for $t$; the proofs are very similar to
  the base case.  Assume item $i$ was processed for the $t^{th}$
  timestep (this happens w.p $v^*_{i,t}$ from property
  (ii) of the induction hypothesis). Then from property (iii), the
  probability that this item completes at this timestep is exactly
  $\pi_{i,t}/\sum_{t' \geq t} \pi_{i,t'}$. Furthermore, it gets
  forcefully canceled in \lref[step]{alg:st:2} with probability
  $s^*_{i,t}/v^*_{i,t} - \pi_{i,t}/\sum_{t' \geq t} \pi_{i,t'}$.  Thus
  the total probability of stopping at time $t$, assuming it has been
  processed for its $t^{th}$ timestep is exactly $s^*_{i,t}/v^*_{i,t}$; unconditionally, the probability of stopping at time $t$ is
hence   $s^*_{i,t}$.

  Property~(ii) follows as a consequence of Property~(i), because the
  item is processed for its $(t+1)^{st}$ timestep only if it did not
  stop at timestep $t$. Therefore, conditioned on being processed for
  the $t^{th}$ timestep, it continues to be processed with probability
  $1 - s^*_{i,t}/v^*_{i,t}$. Therefore, removing the conditioning, we
  get the probability of processing the item for its $(t+1)^{st}$
  timestep is $v^*_{i,t} - s^*_{i,t} = v^*_{i,t+1}$.  Finally, for
  property~(iii), conditioned on the fact that it has been processed for
  its $(t+1)^{st}$ timestep, clearly the probability that its (unknown)
  size has instantiated to exactly $(t+1)$ is $\pi_{i,t+1}/\sum_{t' \geq
    t+1} \pi_{i,t'}$. When this happens, the job stops in \lref[step]{alg:st:5} of the algorithm.

\subsection{\sks with Small Sizes: A Fully Polytime Algorithm} \label{app:polytime}
	
The idea is to quantize the possible sizes of the items in order to
ensure that LP \ref{lpone} has polynomial size, then obtain a good
strategy (via Algorithm \skssmall) for the transformed instance, and
finally to show that this strategy is actually almost as good for the
original instance.
	
Consider an instance $\mathcal{I} = (\pi, R)$ where $R_{i,t} = 0$ for
all $t > B/2$. Suppose we start scheduling an item at some time; instead
of making decisions of whether to continue or cancel an item at each
subsequent time step, we are going to do it in time steps which are
powers of 2. To make this formal, define instance $\bar{\mathcal{I}} =
(\bar{\pi}, \bar{R})$ as follows: set $\bar{\pi}_{i,2^j} = \sum_{t \in
  [2^j, 2^{j + 1})} \pi_{i,t}$ and $\bar{R}_{i,2^j} = (\sum_{t \in [2^j,
  2^{j+1})} \pi_{i,t} R_{i,t} )/ \bar{\pi}_{i,2^j}$ for all $i \in [n]$
and $j \in \{0, 1, \ldots, \lfloor\log B \rfloor\}$. The instances are
coupled in the natural way: the size of item $i$ in the instance
$\bar{\mathcal{I}}$ is $2^j$ iff the size of item $i$ in the instance
$\mathcal{I}$ lies in the interval $[2^j, 2^{j+1})$.
	
In \lref[Section]{caseSmall}, a \emph{timestep} of an item has duration
of 1 time unit. However, due to the construction of $\bar{\mathcal{I}}$,
it is useful to consider that the $t^{th}$ time step of an item has
duration $2^t$; thus, an item can only complete at its $0^{th}$,
$1^{st}$, $2^{nd}$, etc. timesteps. With this in mind, we can write an
LP analogous to \eqref{lpone}:
	\begin{alignat}{2}
  \max &\ts \sum_{1 \leq j \leq \log(B/2)} \sum_{1 \leq i \leq n}  v_{i,2^j} \cdot \bar{R}_{i,2^j}  \frac{\bar{\pi}_{i,2^j}}{\sum_{j' \geq j} \pi_{i,2^{j'}}}   & &
  \tag{$\mathsf{PolyLP}_{S}$} \label{lpone.2} \\
  & v_{i,2^j} = s_{i,2^j} + v_{i,2^j+1} & \qquad & \forall \,
  j \in [0,\log B], \, i \in [n]   \label{eq:1.2} \\
  &s_{i,2^j} \geq  \frac{\bar{\pi}_{i,2^j}}{\sum_{j' \geq j} \bar{\pi}_{i,2^{j'}}} \cdot v_{i,2^j} & \qquad & \forall \, t \in
  [0,\log B], \, i \in [n] \label{eq:2.2} \\
  &\ts \sum_{i \in [n]} \sum_{j \in [0, \log B]} 2^j \cdot s_{i,2^j}  \leq B
    &   \label{eq:3.2}\\
  &v_{i,0} = 1 & \qquad & \forall \, i \label{eq:4.2} \\
  v_{i,2^j}, s_{i,2^j} &\in [0,1] & \qquad & \forall \, j \in [0,\log B], \, i \in
  [n] \label{eq:5.2}
\end{alignat}
Notice that this LP has size polynomial in the size of the instance
$\mathcal{I}$.
	
Consider the LP \eqref{lpone} with respect to the instance $\mathcal{I}$
and let $(v, s)$ be a feasible solution for it with objective value
$z$. Then define $(\bar{v}, \bar{s})$ as follows: $\bar{v}_{i,2^j} =
v_{i,2^j}$ and $\bar{s}_{i,2^j} = \sum_{t \in [2^j, 2^{j + 1})}
s_{i,j}$. It is easy to check that $(\bar{v}, \bar{s})$ is a feasible
solution for \eqref{lpone.2} with value at least $z$, where the latter
uses the fact that $v_{i,t}$ is non-increasing in $t$.
	%
Using \lref[Theorem]{thm:lp1-valid} it then follows that the optimum of
\eqref{lpone.2} with respect to $(\bar{\pi}, \bar{R})$ is at least as
large as the reward obtained by the optimal solution for the stochastic
knapsack instance $(\pi, R)$.
	
Let $(\bar{v}, \bar{s})$ denote an optimal solution of
\eqref{lpone.2}. Notice that with the redefined notion of timesteps we
can naturally apply Algorithm \skssmall to the LP solution $(\bar{v},
\bar{s})$. Moreover, \lref[Lemma]{lem:stop-dist} still holds in this
setting. Finally, modify Algorithm \skssmall by ignoring items with
probability $1 - 1/8 = 7/8$ (instead of $3/4$) in \lref[Step]{alg:st:1}
(we abuse notation slightly and shall refer to the modified algorithm
also as \skssmall) and notice that \lref[Lemma]{lem:stop-dist} still
holds.
	
Consider the strategy $\bar{\mathbb{S}}$ for $\bar{\mathcal{I}}$
obtained from Algorithm \skssmall. We can obtain a strategy $\mathbb{S}$
for $\mathcal{I}$ as follows: whenever $\mathbb{S}$ decides to process
item $i$ of $\bar{\mathcal{I}}$ for its $j$th timestep, we decide to
continue item $i$ of $\mathcal{I}$ while it has size from $2^j$ to $2^{j
  + 1} - 1$.
	
\begin{lemma}
  Strategy $\mathbb{S}$ is a $1/16$ approximation for $\mathcal{I}$.
\end{lemma}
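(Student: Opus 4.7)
The plan is to adapt the proof of Theorem~\ref{thm:small} to the quantized setting, using~\eqref{lpone.2} as the benchmark. Since the preceding lemmas already give that the optimum of~\eqref{lpone.2} is at least $\Opt(\mathcal{I})$, it suffices to prove that $\mathbb{S}$'s expected reward on $\mathcal{I}$ is at least $1/16$ of this LP optimum.

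First I would establish reward-matching under the natural coupling between $\mathbb{S}$ on $\mathcal{I}$ and $\bar{\mathbb{S}}$ on $\bar{\mathcal{I}}$. By the quantized analogue of Lemma~\ref{lem:stop-dist}, conditional on $\bar{\mathbb{S}}$ keeping item $i$, the probability that it reaches and completes the $j$-th timestep is $\bar{v}^*_{i,2^j}\cdot\bar{\pi}_{i,2^j}/\sum_{j'\geq j}\bar{\pi}_{i,2^{j'}}$. Under the coupling, completion in bucket $j$ corresponds to an actual size $t_0 \in [2^j, 2^{j+1})$ drawn with conditional distribution $\pi_{i,t_0}/\bar{\pi}_{i,2^j}$, and since $\bar{R}_{i,2^j}$ was defined as precisely this conditional expectation of $R_{i,t_0}$, the expected reward that $\mathbb{S}$ would collect in bucket $j$ matches $\bar{R}_{i,2^j}$ exactly, provided the budget is not violated.

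The hard part, and the reason the dropping probability was raised from $3/4$ to $7/8$, is the budget analysis: the actual size $t_0 \in [2^j, 2^{j+1})$ consumed by $\mathbb{S}$ can be up to twice the quantized cost $2^j$ paid by $\bar{\mathbb{S}}$. For a fixed item $i$, let $Z_{-i}$ denote the real space consumed by $\mathbb{S}$ on items other than $i$ before $i$ is considered. Combining this factor-$2$ blow-up with LP constraint~\eqref{eq:3.2} and the $1/8$ keep probability, the plan is to show
\begin{equation*}
\mathbb{E}[Z_{-i}] \;\leq\; \frac{1}{8}\sum_{i' \neq i}\sum_j 2\cdot 2^j\cdot \bar{s}^*_{i',2^j} \;\leq\; \frac{B}{4},
\end{equation*}
so that by Markov's inequality $Z_{-i} \leq B/2$ with probability at least $1/2$. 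Since $R_{i,t_0}=0$ for $t_0 > B/2$ in the early-rewards setting, any completion that actually contributes nonzero reward satisfies $t_0 \leq B/2$, so on the Markov event $Z_{-i}+t_0 \leq B$ and item $i$ fits in the knapsack. Multiplying each item's LP contribution by the $1/8$ keep probability and the $1/2$ Markov probability, and summing over $i$, yields total expected reward at least $\tfrac{1}{16}$ of the optimum of \eqref{lpone.2}, and therefore at least $\Opt(\mathcal{I})/16$.
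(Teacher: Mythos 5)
Your proposal is correct and follows essentially the same route as the paper: appeal to the quantized analogue of Lemma~\ref{lem:stop-dist} to match rewards under the coupling, bound the expected space used by other items (with the explicit factor-$2$ inflation from real versus quantized sizes) by $B/4$, apply Markov to get $B/2$ headroom with probability $1/2$, and combine with the $1/8$ keep probability and LP constraint~\eqref{eq:3.2} to obtain the $1/16$ factor. You have simply unpacked the paper's one-line observation ``$O \le 2\bar O$'' into the explicit inequality $\mathbb{E}[Z_{-i}]\le \tfrac18\sum_{i'}\sum_j 2\cdot 2^j\bar s^*_{i',2^j}$; the argument is otherwise identical.
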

	
\begin{proof}
  Consider an item $i$. Let $\bar{O}$ be the random variable denoting
  the total size occupied before strategy $\bar{\mathbb{S}}$ starts
  processing item $i$ and similarly let $O$ denote the total size
  occupied before strategy $\mathbb{S}$ starts processing item
  $i$. Since \lref[Lemma]{lem:stop-dist} still holds for the modified
  algorithm \skssmall, we can proceed as in \lref[Theorem]{thm:small}
  and obtain that $\E[\bar{O}] \le B/8$. Due to the definition of
  $\mathbb{S}$ we can see that $O \le 2 \bar{O}$ and hence $\E[O] \le
  B/4$. From Markov's inequality we obtain that $\Pr(O \ge B/2) \le
  1/2$. Noticing that $i$ is started by $\mathbb{S}$ with probability
  $1/8$ we get that the probability that $i$ is started and there is at
  least $B/2$ space left on the knapsack at this point is at least
  $1/16$. Finally, notice that in this case $\bar{\mathbb{S}}$ and
  $\mathbb{S}$ obtain the same expected value from item $i$, namely
  $\sum_j \bar{v}_{i,2^j} \cdot \bar{R}_{i,2^j}
  \frac{\bar{\pi}_{i,2^j}}{\sum_{j' \geq j} \pi_{i,2^{j'}}}$. Thus
  $\mathbb{S}$ get expected value at least that of the optimum of
  \eqref{lpone.2}, which is at least the value of the optimal solution
  for $\mathcal{I}$ as argued previously.
\end{proof}




\section{Details from Section~\ref{sec:mab}}

\subsection{Details of Phase~I (from Section~\ref{sec:phase-i})}
\label{sec:details-phase-i}

We first begin with some notation that will be useful in the algorithm below. For any state $u \in \Si$ such that the path from $\rho_i$ to $u$ follows the states $u_1 = \rho_i, u_2, \ldots, u_k = u$, let $\pi_u = \Pi_{l=1}^{k-1} p_{u_i, u_{i+1}}$.

Fix an arm $i$, for which we will perform the decomposition. Let $\{z, w\}$ be a feasible solution to \ref{lp:mab} and set $z^0_{u,t} = z_{u,t}$  and $w^0_{u,t} = w_{u,t}$ for all $u \in \Si$, $t \in [B]$. We will gradually alter the fractional solution as we build the different forests. We note that in a particular iteration with index $j$, all $z^{j-1}, w^{j-1}$ values that are not updated in \lref[Steps]{alg:convex5} and~\ref{alg:convex6} are retained in $z^j, w^j$ respectively.
\begin{algorithm}[ht!]
\caption{Convex Decomposition of Arm $i$}
\begin{algorithmic}[1]
\label{alg:convex}
	\STATE {\bf set} ${\cal C}_i \leftarrow \emptyset$ and {\bf set loop index} $j \leftarrow 1$.
	\WHILE {$\exists$ a node $u \in \Si$ s.t $\sum_{t} z^{j-1}_{u,t} > 0$} \label{alg:convex1}
	\STATE {\bf initialize} a new tree $\lpt(i,j) = \emptyset$. \label{alg:convex1a}
	\STATE {\bf set} $A \leftarrow \{u \in \Si ~\textsf{s.t}~\sum_{t} z^{j-1}_{u,t} > 0\}$. \label{alg:convex1b}
		\STATE for all $u \in \Si$, {\bf set} $\ptime(i,j,u) \leftarrow \infty$, $\prob(i,j,u) \leftarrow 0$, and {\bf set} $\epsilon_u \leftarrow \infty$.
		\FOR{ every $u \in A$}
			\STATE {\bf update} $\ptime(i,j,u)$ to the smallest time $t$ s.t $z^{j-1}_{u,t} > 0$. \label{alg:convex2}
			\STATE {\bf update} $\epsilon_u = {z^{j-1}_{u,\ptime(i,j,u)}}/{\pi_{u}}$ \label{alg:convex2a}
		\ENDFOR
		\STATE {\bf let} $\epsilon = \min_{u} \epsilon_u$. \label{alg:convex3}
		\FOR{ every $u \in A$}
			\STATE {\bf set} $\prob(i,j,u) = \epsilon \cdot \pi_u$. \label{alg:convex4}
			\STATE {\bf update} $z^j_{u,\ptime(i,j,u)} = z^{j-1}_{u, \ptime(i,j,u)} - \prob(i,j,u)$. \label{alg:convex5}
			\STATE {\bf update} $w^j_{v, \ptime(i,j,u)+1} = w^{j-1}_{v, \ptime(i,j,u)+1} - \prob(i,j,u) \cdot p_{u,v}$ for all $v$ s.t $\parent(v) = u$. \label{alg:convex6}
		\ENDFOR	
	\STATE {\bf set} ${\cal C}_i \leftarrow {\cal C}_i \cup \lpt(i,j)$.  	 \label{alg:convex7}
	\STATE {\bf increment} $j \leftarrow  j + 1$.
	\ENDWHILE
\end{algorithmic}
\end{algorithm}
For brevity of notation, we shall use ``iteration $j$ of \lref[step]{alg:convex1}'' to denote the execution of the entire block (\lref[steps]{alg:convex1a} -- \ref{alg:convex7}) which constructs strategy  forest $\lpt(i,j)$.

%

\begin{lemma} \label{lem:convexstep}
Consider an integer $j$ and suppose that $\{z^{j-1}, w^{j-1}\}$ satisfies constraints~\eqref{eq:mablp1}-\eqref{eq:mablp3} of \ref{lp:mab}. Then after iteration $j$ of \lref[Step]{alg:convex1}, the following properties hold:
\begin{enumerate}
\item[(a)] $\lpt(i,j)$ (along with the associated $\prob(i,j,.)$ and $\ptime(i,j,.)$ values) is a valid strategy forest, i.e., satisfies the conditions (i) and (ii) presented in Section \ref{sec:phase-i}.
\item[(b)] The residual solution $\{z^j, w^j\}$ satisfies constraints~\eqref{eq:mablp1}-\eqref{eq:mablp3}.
\item[(c)] For any time $t$ and state $u \in \Si$, $z^{j-1}_{u,t} - z^{j}_{u,t} = \prob(i,j,u) \mathbf{1}_{\ptime(i,j,u)=t}$.
\end{enumerate}
\end{lemma}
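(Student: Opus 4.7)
The plan is to establish the three properties in sequence, using the assumed LP feasibility of $\{z^{j-1}, w^{j-1}\}$ at the start of iteration $j$. Property~(c) is immediate from \lref[Step]{alg:convex5}, which is the only place a $z$-value is modified, subtracting exactly $\prob(i,j,u)$ from $z^{j-1}_{u,\ptime(i,j,u)}$ and leaving every other coordinate unchanged.

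For property~(a), I split into the two defining conditions of a strategy forest. Condition~(ii), namely $\prob(i,j,u) = p_{\parent(u),u} \cdot \prob(i,j,\parent(u))$, follows from $\prob(i,j,u) = \epsilon \cdot \pi_u$ together with the multiplicative identity $\pi_u = p_{\parent(u),u} \cdot \pi_{\parent(u)}$; for $u \notin A$ we have $\ptime(i,j,u) = \infty$ and $\prob(i,j,u) = 0$ by initialization. For condition~(i), the key observation is that if $u \in A$ with $\ptime(i,j,u) = t$, then constraint~\eqref{eq:mablp2} at time $t$ gives $\sum_{t' \leq t} w^{j-1}_{u,t'} \geq z^{j-1}_{u,t} > 0$, so some $w^{j-1}_{u,t'} > 0$ with $t' \leq t$; then \eqref{eq:mablp1} forces $z^{j-1}_{\parent(u), t'-1} > 0$, so $\parent(u)$ is also in $A$ and its assigned time is at most $t'-1 \leq t-1$.

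Property~(b) is the most delicate part and I expect to be the main obstacle. Constraint~\eqref{eq:mablp3} is trivially preserved since $z$-values only decrease. Constraint~\eqref{eq:mablp1} is preserved because \lref[Steps]{alg:convex5} and~\ref{alg:convex6} perform coupled updates: when $z_{u,\ptime(i,j,u)}$ drops by $\prob(i,j,u)$, each $w_{v, \ptime(i,j,u)+1}$ for a child $v$ drops by $\prob(i,j,u) \cdot p_{u,v}$, which by~(a-ii) equals $\prob(i,j,v)$, matching the LHS change after the increment of the index. For \eqref{eq:mablp2}, fix a state $u$ and split on $t$. When $t < \ptime(i,j,u)$, the RHS stays at $0$ by the minimality of $\ptime(i,j,u)$, so it suffices to verify that every $w^j_{u,\cdot}$ remains nonnegative; the single decrement applied to $w_{u, \ptime(i,j,\parent(u))+1}$ is $\prob(i,j,\parent(u)) \cdot p_{\parent(u),u}$, and by the choice $\epsilon \leq \epsilon_{\parent(u)} = z^{j-1}_{\parent(u), \ptime(i,j,\parent(u))}/\pi_{\parent(u)}$ this does not exceed the previous value $w^{j-1}_{u, \ptime(i,j,\parent(u))+1} = z^{j-1}_{\parent(u),\ptime(i,j,\parent(u))} \cdot p_{\parent(u),u}$. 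When $t \geq \ptime(i,j,u)$, property~(a-i) places the $w$-decrement at time $\ptime(i,j,\parent(u))+1 \leq \ptime(i,j,u) \leq t$, so both sides of the constraint drop by exactly $\prob(i,j,u)$ and the inequality survives. Everything balances precisely because $\ptime(i,j,u)$ is chosen as the earliest time with positive $z$-mass and $\epsilon$ is chosen to keep all residual quantities nonnegative.
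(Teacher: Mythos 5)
Your proposal is correct and follows essentially the same route as the paper's proof: property (c) read off from the update step, condition (i) of the strategy forest derived by chaining constraints \eqref{eq:mablp2} and \eqref{eq:mablp1} to locate the parent's earliest positive $z$-mass, nonnegativity from the choice of $\epsilon$, and constraint \eqref{eq:mablp2} preserved by matching the drop in $\sum_{t'\le t} w_{u,t'}$ (located at $\ptime(i,j,\parent(u))+1 \le t$) against the drop $\prob(i,j,u)$ on the right-hand side. The only cosmetic differences are that you argue condition (i) directly rather than by contradiction and verify $w$-nonnegativity directly rather than deducing it from the preservation of \eqref{eq:mablp1}.
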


\begin{proof}
We show the properties stated above one by one.

\noindent {\bf Property (a):}
We first show that the $\ptime$ values satisfy $\ptime(i,j,u)$ $\geq$ $\ptime(i,j,\parent(u)) + 1$, i.e. condition (i) of strategy forests.
For sake of contradiction, assume that there exists $u \in \Si$ with $v = \parent(u)$ where $\ptime(i,j,u) \leq \ptime(i,j,v)$. Define $t_u = \ptime(i,j,u)$ and $t_v = \ptime(i,j,\parent(u))$; the way we updated $\ptime(i,j,u)$ in \lref[step]{alg:convex2} gives that $z^{j-1}_{u,t_u} > 0$.

Then, constraint~(\ref{eq:mablp2}) of the LP implies that $\sum_{t' \leq t_u} w^{j-1}_{u,t'} > 0$. In particular, there exists a time $t' \leq t_u \leq t_v$ such that $w^{j-1}_{u,t'} > 0$. But now, constraint~(\ref{eq:mablp1}) enforces that $z^{j-1}_{v, t'-1} = w^{j-1}_{u,t'}/p_{v, u} > 0$ as well. But this contradicts the fact that $t_v$ was the first time s.t $z^{j-1}_{v,t} > 0$. Hence we have $\ptime(i,j,u) \geq \ptime(i,j,\parent(u))+1$.

As for condition (ii) about $\prob(i,j,.)$, notice that if $\ptime(i,j,u) \neq \infty$, then $\prob(i,j,u)$ is set to $\epsilon \cdot \pi_u$ in \lref[step]{alg:convex4}. It is now easy to see from the definition of $\pi_u$ (and from the fact that $\ptime(i,j,u) \neq \infty \Rightarrow \ptime(i,j,\parent(u)) \neq \infty$) that $\prob(i,j,u) = \prob(i,j,\parent(u)) \cdot p_{\parent(u),u}$.

\noindent {\bf Property (b):} Constraint~\eqref{eq:mablp1} of \ref{lp:mab} is clearly satisfied by the new LP solution $\{z^j, w^j\}$ because of the two updates performed in \lref[Steps]{alg:convex5} and~\ref{alg:convex6}: if we decrease the $z$ value of any node at any time, the $w$ of all children are appropriately reduced (for the subsequent timestep).

Before showing that the solution $\{z^j, w^j\}$ satisfies constraint~\eqref{eq:mablp2}, we first argue that they remain non-negative. By the choice of $\epsilon$ in step~\ref{alg:convex3}, we have $\prob(i,j,u) = \epsilon  \pi_u \leq \epsilon_u \pi_u = z^{j-1}_{u,\ptime(i,j,u)}$ (where $\epsilon_u$ was computed in \lref[Step]{alg:convex2a}); consequently even after the update in step~\ref{alg:convex5}, $z^{j}_{u,\ptime(i,j,u)} \geq 0$ for all $u$. This and the fact that the constraints~(\ref{eq:mablp1}) are satisfied implies that $\{z^j, w^j\}$ satisfies the non-negativity requirement.

We now show that constraint~\eqref{eq:mablp2} is satisfied.
For any time $t$ and state $u \notin A$ (where $A$ is the set computed in step~\ref{alg:convex1b} for iteration $j$), clearly it must be that $\sum_{t' \leq t} z^{j-1}_{u,t} = 0$ by definition of the set $A$; hence just the non-negativity of $w^j$ implies that these constraints are trivially satisfied.

Therefore consider some $t \in [B]$ and a state $u \in A$. We know from step~\ref{alg:convex2} that $\ptime(i,j,u) \neq \infty$. If $t < \ptime(i,j,u)$, then the way $\ptime(i,j,u)$ is updated in step~\ref{alg:convex2} implies that $\sum_{t' \leq t} z^j_{u,t'} = \sum_{t' \leq t} z^{j-1}_{u,t'} = 0$, so the constraint is trivially satisfied because $w^j$ is non-negative. If $t \geq \ptime(i,j,u)$, we claim that the change in the left hand side and right hand side (between the solutions $\{z^{j-1}, w^{j-1}\}$ and $\{z^{j}, w^{j}\}$) of the constraint under consideration is the same, implying that it will be still satisfied by $\{z^j, w^j\}$.

To prove this claim, observe that the right hand side has decreased by exactly $z^{j-1}_{u, \ptime(i,j,u)} - z^{j}_{u, \ptime(i,j,u)}  = \prob(i,j,u)$.
But the only value which has been modified in the left hand side is $w^{j-1}_{u, \ptime(i,j,\parent(u))+1}$, which has gone down by $\prob(i, j, \parent(u)) \cdot p_{\parent(u), u}$. Because $\lpt(i,j)$ forms a valid strategy forest, we have $\prob(i,j,u) = \prob(i,j,\parent(u)) \cdot p_{\parent(u), u}$, and thus the claim follows.

Finally, constraint~\eqref{eq:mablp3} are also satisfied as the $z$ variables only decrease in value over iterations.

\noindent {\bf Property (c):} This is an immediate consequence of the \lref[Step]{alg:convex5}.
\end{proof}

To prove \lref[Lemma]{lem:convexppt},  firstly notice that since $\{z^0, w^0\}$ satisfies constraints~\eqref{eq:mablp1}-\eqref{eq:mablp3}, we can proceed by induction and infer that the properties in the previous lemma hold for every strategy forest in the decomposition; in particular, each of them is a valid strategy forest.

In order to show that the marginals are preserved,
observe that in the last iteration $j^*$ of procedure we have $z^{j^*}_{u,t} = 0$ for all $u, t$. Therefore, adding the last property in the previous lemma over all $j$ gives
\begin{align*}
	z_{u,t} = \sum_{j \ge 1} (z^{j - 1}_{u,t} - z^j_{u,t}) = \sum_{j \ge 1} \prob(i,j,u) \mathbf{1}_{\ptime(i,j,u) = t} = \sum_{j : \ptime(i,j,u) = t} \prob(i,j,u).
\end{align*}

Finally, since some $z^j_{u,t}$ gets altered to $0$ since in each iteration of the above algorithm, the number of strategies for each arm in the decomposition is upper bounded by $B|\S|$. This completes the proof of
\lref[Lemma]{lem:convexppt}.


\subsection{Details of Phase~II (from Section~\ref{sec:phase-ii})}
\label{sec:details-phase-ii}

\begin{proofof}{Lemma~\ref{lem:gapfill}}
Let $\ptime^t(u)$ denote the time assigned to node $u$ by the end of round $\tau = t$ of the algorithm; $\ptime^{B + 1}(u)$ is the initial time of $u$. Since the algorithm works backwards in time, our round index will start at $B$ and end up at $1$.
	To prove property (i) of the statement of the lemma, notice that the algorithm only converts head nodes to non-head nodes and not the other way around. Moreover, heads which survive the algorithm have the same $\ptime$ as originally. So it suffices to show that heads which originally did not satisfy property (i)---namely, those with $\ptime^{B + 1}(v) < 2 \cdot \depth(v)$---do not survive the algorithm; but this is clear from the definition of Step \ref{alg:gap1}.

	To prove property (ii), fix a time $t$, and consider the execution of \textsf{GapFill} at the end of round $\tau = t$. We claim that the total extent of fractional play at time $t$ does not increase as we continue the execution of the algorithm from round $\tau=t$ to round $1$. To see why, let $C$ be a connected component at the end of round $\tau = t$ and let $h$ denote its head. If $\ptime^t(h) > t$ then no further {\bf advance} affects $C$ and hence it does not contribute to  an increase in the number of plays at time $t$. On the other hand, if $\ptime^t(h) \le t$, then even if $C$ is advanced in a subsequent round, each node $w$ of $C$ which ends up being played at $t$, i.e., has $\ptime^1(w) = t$ must have an ancestor $w'$ satisfying $\ptime^t(w') = t$, by the contiguity of $C$. Thus, \lref[Observation]{obs:treeflow} gives that $\sum_{u \in C : \ptime^1(u)=t} \prob(u) \le \sum_{u \in C : \ptime^t(u)=t} \prob(u)$. Applying this for each connected component $C$, proves the claim. Intuitively, any component which advances forward in time is only reducing its load/total fractional play at any fixed time $t$.
	
\begin{figure}[ht]
\centering
\subfigure[Connected components in the beginning of the algorithm]{
\includegraphics[scale=0.5]{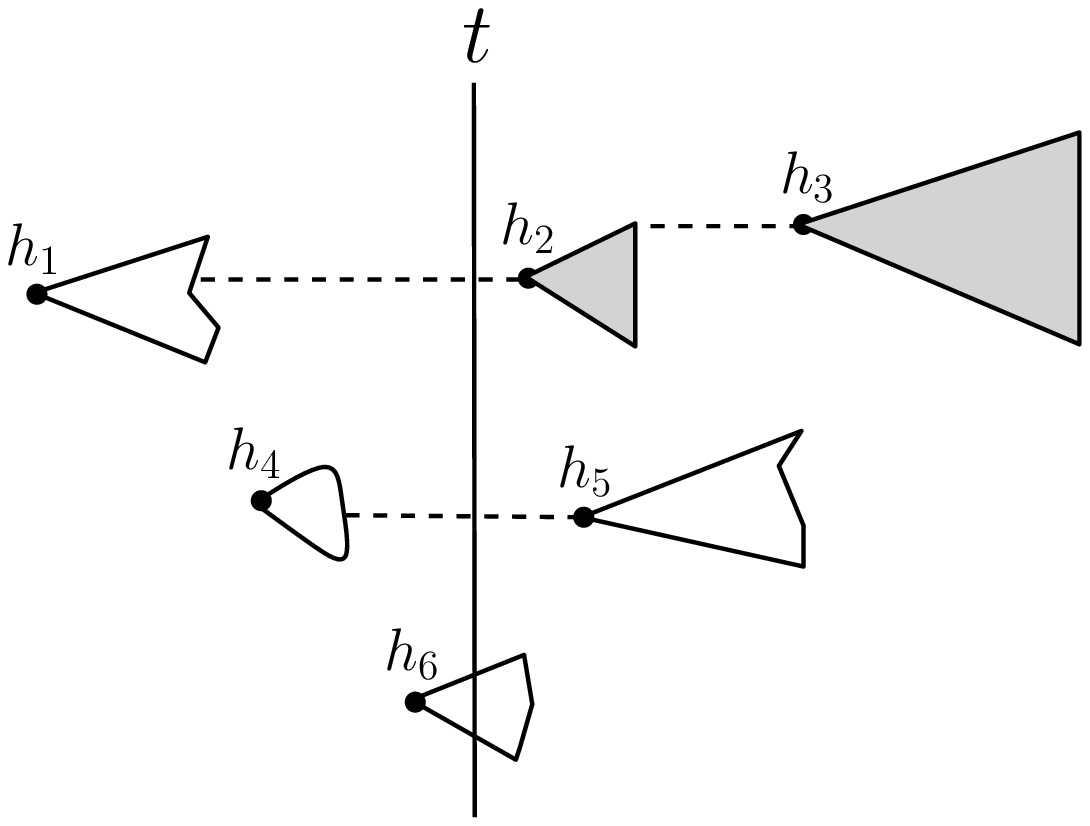}
\label{fig:subGapFill1}
}
\hspace{20pt} \subfigure[Configuration at the end of iteration $\tau = t$]{
\includegraphics[scale=0.5]{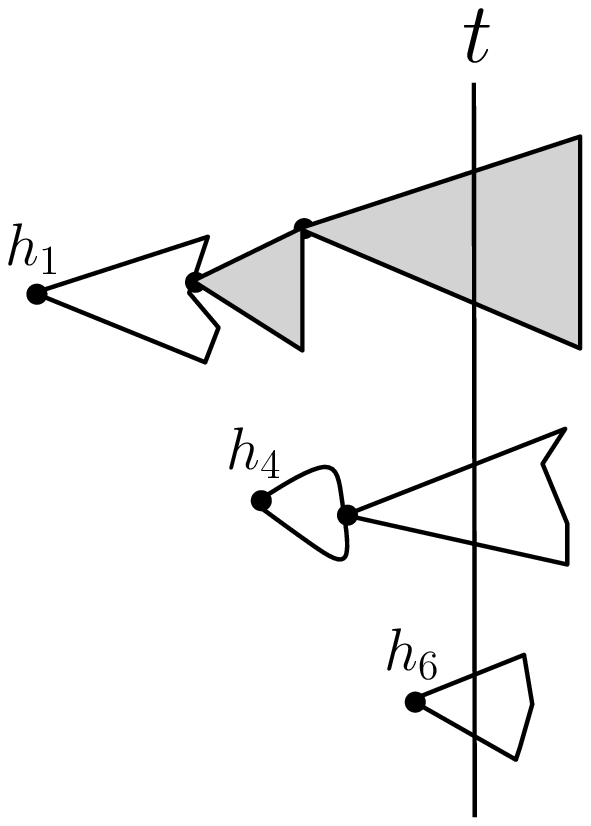}
\label{fig:subGapFill2}
}
\caption{Depiction of a strategy forest $\lpt(i,j)$ on a timeline, where each triangle is a connected component. In this example, $H = \{h_2, h_5\}$ and $C_{h_2}$ consists of the grey nodes. From Observation \ref{obs:treeflow} the number of plays at $t$ do not increase as components are moved to the left.}
\label{fig:gapFill}
\end{figure}
	
	Then consider the end of iteration $\tau = t$ and we now prove that the fractional extent of play at time $t$ is at most 3. Due to \lref[Lemma]{lem:convexppt}, it suffices to prove that $\sum_{u \in U} \prob(u) \le 2$, where $U$ is the set of nodes which caused an increase in the number of plays at time $t$, namely, $U = \{u : \ptime^{B + 1}(u) > t \textrm{ and } \ptime^t(u) = t\}$.
	
	Notice that a connected component of the original forest can only contribute to this increase if its head $h$ crossed time $t$, that is $\ptime^{B + 1}(h) > t$ and $\ptime^t(h) \le t$. However, it may be that this crossing was not directly caused by an {\bf advance} on $h$ (i.e. $h$ advanced till $\ptime^{B + 1}(\parent(h)) \ge t$), but an {\bf advance} to a head $h'$ in a subsequent round was responsible for $h$ crossing over $t$. But in this case $h$ must be part of the connected component of $h'$ when the latter {\bf advance} happens, and we can use $h'$'s advance to bound the congestion.
	
	To make this more formal, let $H$ be the set of heads of the original forest whose {\bf advances} made them cross time $t$, namely, $h \in H$ iff $\ptime^{B + 1}(h) > t$, $\ptime^t(h) \le t$ and $\ptime^{B + 1}(\parent(h)) < t$. Moreover, for $h \in H$ let $C_h$ denote the connected component of $h$ in the beginning of the iteration where an {\bf advance} was executed on $h$, that is, when $v$ was set to $h$ in \lref[Step]{alg:setV}. The above argument shows that these components $C_h$'s contain all the nodes in $U$, hence it suffices to see how they increase the congestion at time $t$.
	
In fact, it is sufficient to focus just on the heads in $H$. To see this, consider $h \in H$ and notice that no node in $U \cap C_h$ is an ancestor of another. Then \lref[Observation]{obs:treeflow} gives $\sum_{u \in U \cap C_h} \prob(u) \le \prob(h)$, and adding over all $h$ in $H$ gives $\sum_{u \in U} \prob(u) \le \sum_{h \in H} \prob(h)$.
	
	To conclude the proof, we upper bound the right hand side of the previous inequality. The idea now is that the play probabilities on the nodes in $H$ cannot be too large since their parents have $\ptime^{B + 1} < t$ (and each head has a large number of ancestors in $[1,t]$ because it was considered for an advance). More formally, fix $i,j$ and consider a head $h$ in $H \cap \lpt(i,j)$. From \lref[Step]{alg:gap1} of the algorithm, we obtain that $\depth(h) > (1/2) \ptime^{B + 1}(h) \ge t/2$. Since $\ptime^{B + 1}(\parent(h)) < t$, it follows that for every $d \le \lfloor t/2 \rfloor$, $h$ has an ancestor $u \in \lpt(i,j)$ with $\depth(u) = d$ and $\ptime^{B + 1}(u) \le t$. Moreover, the definition of $H$ implies that no head in $H \cap \lpt(i,j)$ can be an ancestor of another. Then again employing \lref[Observation]{obs:treeflow} we obtain
	\begin{align*}
		\sum_{h \in H \cap \lpt(i,j)} \prob(h) \le \sum_{u \in \lpt(i,j) : \depth(u) = d, \ptime^{B + 1}(u) \le t} \prob(u) \ \ \ \ \ \ \ (\forall d \le \lfloor t/2 \rfloor).
	\end{align*}
	Adding over all $i,j$ and $d \le \lfloor t/2 \rfloor$ leads to the bound $(t/2) \cdot \sum_{h \in H} \prob(h) \le \sum_{u : \ptime^{B + 1}(u) \le t} \prob(u)$. Finally, using \lref[Lemma]{lem:convexppt} we can upper bound the right hand side by $t$, which gives $\sum_{u \in U} \prob(u) \le \sum_{h \in H} \prob(u) \le 2$ as desired.
\end{proofof}

\subsection{Details of Phase~III (from Section~\ref{sec:phase-iii})}
\label{sec:details-phase-iii}


\proofof{Lemma~\ref{lem:visitprob}}
The proof is quite straightforward. Intuitively, it is because $\textsf{AlgMAB}$ (Algorithm~\ref{alg:roundmab}) simply follows the probabilities according to the transition tree $T_i$ (unless $\ptime(i,j,u) = \infty$ in which case it abandons the arm).
Consider an arm $i$ such that $\sigma(i) = j$, and any state $u \in \Si$. Let $\langle v_1 = \rho_i, v_2, \ldots, v_t = u \rangle$ denote the unique path in the transition tree for arm $i$ from $\rho_i$ to $u$.
Then, if $\ptime(i,j,u) \neq \infty$ the probability that state $u$ is played is exactly the probability of the transitions reaching $u$ (because in \lref[steps]{alg:mabPlay} and~\ref{alg:mabstep5}, the algorithm just keeps playing the states\footnote{We remark that while the plays just follow the transition probabilities, they may not be made contiguously.} and making the transitions, unless $\ptime(i,j,u) = \infty$). But this is precisely $\Pi_{k=1}^{t-1} p_{v_k, v_{k+1}} = \prob(i,j,u)/\prob(i,j,\rho_i)$ (from the properties of each strategy in the convex decomposition).
If $\ptime(i,j,u) = \infty$ however, then the algorithm terminates the arm in \lref[Step]{alg:mabAbandon} without playing $u$, and so the probability of playing $u$ is $0 = \prob(i,j,u)/\prob(i,j,\rho_i)$. This completes the proof.

\section{Proofs from Section~\ref{dsec:mab}}
\label{sec:app-dag}

\subsection{Layered DAGs capture all Graphs} 
\label{dsec:layered-enough}

We first show that \emph{layered DAGs} can capture all transition
graphs, with a blow-up of a factor of $B$ in the state space. For each
arm $i$, for each state $u$ in the transition graph $\S_i$, create $B$
copies of it indexed by $(v,t)$ for all $1 \leq t \leq B$. Then for each
$u$ and $v$ such that $p_{u,v} > 0$ and for each $1 \leq t < B$, place
an arc $(u,t) \to (v,t+1)$. Finally, delete all vertices that are not
reachable from the state $(\rho_i, 1)$ where $\rho_i$ is the starting
state of arm $i$. There is a clear correspondence between the
transitions in $\S_i$ and the ones in this layered graph: whenever state
$u$ is played at time $t$ and $\S_i$ transitions to state $v$, we have
the transition from $(u, t)$ to $(v, t + 1)$ in the layered
DAG. Henceforth, we shall assume that the layered graph created in this
manner is the transition graph for each arm.


\section{MABs with Budgeted Exploitation}
\label{xsec:mab}

As we remarked before, we now explain how to generalize the argument
from~\lref[Section]{sec:mab} to the presence of ``exploits''. A strategy
in this model needs to choose an arm in each time step and perform one
of two actions: either it \emph{pulls} the arm, which makes it
transition to another state (this corresponds to \emph{playing} in the
previous model), or \emph{exploits} it.  If an arm is in state $u$ and
is exploited, it fetches reward $r_u$, and cannot be pulled any more. As
in the previous case, there is a budget $B$ on the total number of pulls
that a strategy can make and an additional budget of $K$ on the total
number of exploits allowed. (We remark that the same analysis handles the case when pulling an arm also fetches reward, but for a clearer presentation we do not consider such rewards here.)

Our algorithm in~\lref[Section]{sec:mab} can be, for the large part, directly applied in this situation as well; we now explain the small changes that need to be done in the various steps, beginning with the new LP relaxation.
The  additional variable in the LP, denoted by $x_{u,t}$ (for $u \in \Si, t \in [B]$) corresponds to the probability of exploiting state $u$ at time $t$.
\begin{alignat}{2} \tag{$\mathsf{LP4}$} \label{xlp:mab}
	\max \ts \sum_{u,t} r_u &\cdot x_{u,t}\\
	w_{u,t} &= z_{\parent(u), t-1}  \cdot p_{\parent(u),u} & \qquad \forall t \in [2,B],\, u \in \S \label{xeq:mablp1}\\
	\ts \sum_{t' \le t} w_{u,t'} &\geq  \sum_{t' \leq t} (z_{u,t'} + x_{u,t'}) & \qquad \forall t \in [1,B], \, u \in \S \label{xeq:mablp2}\\
	\ts \sum_{u \in \S} z_{u,t} &\le 1 & \qquad \forall t \in [1,B]  \label{xeq:mablp3}\\
	\ts \sum_{u \in \S, t \in [B]}  x_{u,t} &\le K & \qquad \forall t \in [1,B]  \label{xeq:mablp4}\\
	w_{\rho_i, 1} &= 1 & \qquad \forall i \in [1,n] \label{xeq:mablp5}
\end{alignat}

\newcommand{\xsf}{\mathsf{x}\mathbb{T}}
\newcommand{\pull}{\mathsf{pull}}
\newcommand{\xpl}{\mathsf{exploit}}

\subsection{Changes to the Algorithm}

{\bf Phase I: Convex Decomposition}
\label{xsec:phase-i}

This is the step where most of the changes happen, to incorporate the
notion of exploitation.  For an arm $i$, its strategy forest $\xsf(i,j)$
(the ``\textsf{x}'' to emphasize the ``exploit'') is an assignment of
values $\ptime(i,j,u)$, $\pull(i,j,u)$ and $\xpl(i,j,u)$ to each state
$u \in \Si$ such that:
\begin{OneLiners}
\item[(i)] For $u \in \Si$ and $v = \parent(u)$, it holds that
  $\ptime(i,j,u) \geq 1+ \ptime(i,j,v)$, and
\item[(ii)] For $u \in \Si$ and $v = \parent(u)$ s.t $\ptime(i,j,u) \neq
  \infty$, then one of $\pull(i,j,u)$ or $\xpl(i,j,u)$ is equal to $p_{v,u}\,\pull(i,j,v)$ and the other is $0$; if
  $\ptime(i,j,u) = \infty$ then $\pull(i,j,u) = \xpl(i,j,u) = 0$.
\end{OneLiners}

For any state $u$, the value $\ptime(i,j,u)$ denotes
the time at which arm $i$ is \emph{played} (i.e., pulled or exploited) at state $u$, and $\pull(i,j,u)$ (resp. $\xpl(i,j,u)$) denotes the probability
that the state $u$ is pulled (resp. exploited). With the new definition, if
$\ptime(i,j,u) = \infty$ then this strategy does not play the arm at
$u$. If state $u$ satisfies $\xpl(i,j,u) \neq 0$, then strategy $\xsf(i,j)$ \emph{always exploits} $u$ upon reaching it and hence none of its descendants can be reached. For states $u$ which have $\ptime(i,j,u) \neq \infty$ and have $\xpl(i,j,u) = 0$, this strategy \emph{always pulls} $u$ upon reaching it. In essence, if $\ptime(i,j,u) \neq \infty$, either $\pull(i,j,u) = \pull(i,j,\rho_i) \cdot \pi_u$, or $\xpl(i,j,u) = \pull(i,j,\rho_i) \cdot \pi_u$.

Furthermore, these strategy forests are such that the following are also true.
\begin{OneLiners}
\item[(i)] $\sum_{j~\textsf{s.t}~\ptime(i,j,u)=t} \pull(i,j,u) = z_{u,t}$,
\item[(ii)] $\sum_{j~\textsf{s.t}~\ptime(i,j,u)=t} \xpl(i,j,u) = x_{u,t}$.
\end{OneLiners}
For convenience, let us define $\prob(i,j,u) = \pull(i,j,u) + \xpl(i,j,u)$, which denotes the probability of some play happening at $u$.

The algorithm to construct such a decomposition is very similar to the one presented in \lref[Section]{sec:details-phase-i}.
The only change is that in \lref[Step]{alg:convex2} of \lref[Algorithm]{alg:convex}, instead of looking at the first time when $z_{u,t} > 0$, we look at the first time when either $z_{u,t} > 0$ or $x_{u,t} > 0$. If $x_{u,t} > 0$, we ignore all of $u$'s descendants in the current forest we plan to peel off. Once we have such a collection, we again appropriately select the largest $\epsilon$ which preserves non-negativity of the $x$'s and $z$'s. Finally, we update the fractional solution to preserve feasibility. The same analysis can be used to prove the analogous of \lref[Lemma]{lem:convexstep} for this case, which in turn gives the desired properties for the strategy forests.

{\bf Phase II: Eliminating Small Gaps}
\label{xsec:phase-ii}

This is identical to the \lref[Section]{sec:phase-ii}.

{\bf Phase III: Scheduling the Arms}
\label{xsec:phase-iii}

The algorithm is also identical to that in \lref[Section]{sec:phase-iii}.
We sample a strategy forest $\xsf(i,j)$ for each arm $i$ and simply play connected components contiguously. Each time we finish playing a connected component, we
play the next component that begins earliest in the LP. The only
difference is that a play may now be either a \emph{pull} or an \emph{exploit} (which is deterministically determined once we fix a strategy forest); if this play is an exploit, the arm does not proceed to other states and is dropped. Again we let the algorithm run ignoring the pull and exploit budgets, but in the analysis we only collect reward from exploits which happen before either budget is exceeded.

The lower bound on the expected reward collected is again very similar to the previous model; the only change is to the statement of \lref[Lemma]{lem:beforetime}, which now becomes the following.

\begin{lemma} \label{xlem:beforetime}
  For arm $i$ and strategy $\xsf(i,j)$, suppose arm $i$ samples strategy
  $j$ in \lref[step]{alg:mabstep1} of \textsf{AlgMAB} (i.e., $\sigma(i)
  = j$). Given that the algorithm plays the arm $i$ in state $u$ during
  this run, the probability that this play happens before time
  $\ptime(i,j,u)$ \textbf{and} the number of exploits before this play is smaller than $K$, is at least $11/24$.
\end{lemma}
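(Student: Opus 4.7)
\noindent\textbf{Proof Plan for Lemma~\ref{xlem:beforetime}.} The plan is to decompose the desired event into two sub-events and apply a union bound on their complements. Let $\Evt_{iju}$ denote the event $(\sigma(i) = j) \wedge (\text{state } u \text{ is played})$. Define event $A$ to be ``this play of $u$ happens before $\ptime(i,j,u)$'' and event $B$ to be ``strictly fewer than $K$ exploits have been performed by the algorithm before this play.'' We aim to show that $\Pr[\neg A \mid \Evt_{iju}] \leq 1/2$ and $\Pr[\neg B \mid \Evt_{iju}] \leq 1/24$, which together yield $\Pr[A \cap B \mid \Evt_{iju}] \geq 1 - 1/2 - 1/24 = 11/24$ by a union bound.

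The bound on $\Pr[\neg A \mid \Evt_{iju}]$ follows essentially verbatim from the proof of Lemma~\ref{lem:beforetime}. The key observation is that the analysis there counts the total number of \emph{plays} (regardless of whether they are pulls or exploits in our new model) that can occur before $\bfv = \head(i,j,u)$ through the random variable $Z = \sum_{i' \neq i, j'} Z_{i',j'}$. Since the strategy forests $\xsf(i',j')$ still assign each node a unique $\ptime$ value and the gap-filling analysis of Phase~II applies unchanged, the bound $\E[Z] \leq t_\bfv/4$ still follows from Claims analogous to~\ref{cl:sumbound} and~\ref{cl:localexp} (using $\prob(i,j,u) = \pull(i,j,u) + \xpl(i,j,u)$ in place of $\prob$). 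Markov's inequality then gives $\Pr[\neg A \mid \Evt_{iju}] \leq 1/2$.

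The main new ingredient is bounding $\Pr[\neg B \mid \Evt_{iju}]$, for which I would use the global exploit-budget constraint~\eqref{xeq:mablp4}. Since the arm $i$ containing state $u$ cannot contribute exploits before $u$ is reached (any earlier exploit would have terminated the arm before reaching $u$), only the exploits of the other arms $i' \neq i$ need to be bounded. Let $E_{i'}$ denote the number of exploits performed by arm $i'$; by the analog of Lemma~\ref{lem:visitprob}, conditioned on $\sigma(i') = j'$, the expected number of exploits is $\sum_{u'} \xpl(i',j',u') / \pull(i',j',\rho_{i'})$. Multiplying by the sampling probability $\pull(i',j',\rho_{i'})/24$ from \lref[Step]{alg:mabstep1} and summing,
\begin{equation*}
\E\big[\ts\sum_{i' \neq i} E_{i'} \mid \Evt_{iju}\big]
\leq \tfrac{1}{24}\ts\sum_{i',j',u'} \xpl(i',j',u')
= \tfrac{1}{24}\ts\sum_{u',t'} x_{u',t'} \leq \tfrac{K}{24},
\end{equation*}
where the penultimate equality uses the marginal-preservation property of the convex decomposition (the analog of \lref[Lemma]{lem:convexppt} for exploits) and the final inequality is~\eqref{xeq:mablp4}. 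Independence of arms makes the conditioning on $\Evt_{iju}$ irrelevant for $i' \neq i$. Applying Markov's inequality yields $\Pr[\sum_{i' \neq i} E_{i'} \geq K \mid \Evt_{iju}] \leq 1/24$, which is precisely $\Pr[\neg B \mid \Evt_{iju}] \leq 1/24$.

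The main subtlety to verify, and the step that deserves the most care, is that the analog of Lemma~\ref{lem:visitprob} holds in the exploit setting — specifically that conditioned on $\sigma(i') = j'$, the probability of exploiting state $u'$ along the random transitions equals $\xpl(i',j',u') / \pull(i',j',\rho_{i'})$. This should follow directly from the definition of the strategy forests $\xsf(i',j')$ in Phase~I, since each node's exploit probability was constructed to track the product of transition probabilities along the unique ancestral path, deterministically either pulling or exploiting once that state is reached. Once this lemma is in hand, combining the two bounds via a union bound completes the proof.
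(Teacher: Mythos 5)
Your proposal is correct and follows essentially the same route as the paper: a union bound over the two bad events, with $\Pr[\neg A\mid\Evt_{iju}]\le 1/2$ carried over from the proof of Lemma~\ref{lem:beforetime} and $\Pr[\neg B\mid\Evt_{iju}]\le 1/24$ obtained by bounding the expected number of exploits by $K/24$ via the analog of Lemma~\ref{lem:visitprob}, the marginal-preservation of the decomposition, constraint~\eqref{xeq:mablp4}, and Markov's inequality. The only (immaterial) difference is notational: the paper normalizes by $\prob(i',j',\rho_{i'})$ rather than $\pull(i',j',\rho_{i'})$, but the factor cancels against the sampling probability either way.
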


In \lref[Section]{sec:mab}, we showed \lref[Lemma]{lem:beforetime} by showing that
\[     \Pr [ {\tau}_u > \ptime(i,j,u) \mid \Evt_{iju} ] \leq
    \ts \frac{1}{2} \]
Additionally, suppose we can also show that
\begin{equation}
\Pr [ \textsf{number of exploits before }~u > (K-1) \mid \Evt_{iju} ] \leq
    \ts \frac{1}{24}  \label{eq:xpl-bound}
\end{equation}
Then we would have
\[     \Pr [( \textsf{number of exploits before }~u > (K-1) )\vee  ({\tau}_u > \ptime(i,j,u) ) \mid \Evt_{iju} ] \leq
    \ts 13/24, \]
which would imply the Lemma.

To show \lref[Equation]{eq:xpl-bound} we start with an analog of \lref[Lemma]{lem:visitprob} for bounding arm exploitations: conditioned on $\Evt_{i,j,u}$ and $\sigma(i') = j'$, the probability that arm $i'$ is exploited at state $u'$ before $u$ is exploited is at most $\xpl(i',j',u')/\prob(i',j',\rho_{i'})$. This holds even when $i' = i$: in this case the probability of arm $i$ being exploited before reaching $u$ is zero, since an arm is abandoned after its first exploit. Since $\sigma(i') = j'$ with probability $\prob(i',j',\rho_{i'})/24$, it follows that the probability of exploiting arm $i'$ in state $u'$ conditioned on $\Evt_{i,j,u}$ is at most $\sum_{j'} \xpl(i',j',u')/24$. By linearity of expectation, the expected number of exploits before $u$ conditioned on $\Evt_{i,j,u}$ is at most $\sum_{(i',j',u')} \xpl(i',j',u')/24 = \sum_{u',t} x_{u,t}/24$, which is upper bounded by $K/24$ due to LP feasibility. Then \lref[Equation]{eq:xpl-bound} follows from Markov inequality.

The rest of the argument is identical to that in \lref[Section]{sec:mab} giving us the following.
\begin{theorem}
There is a randomized $O(1)$-approximation algorithm for the \mab problem with an exploration budget of $B$ and an exploitation budget of $K$.
\end{theorem}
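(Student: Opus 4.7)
The plan is to mimic the three-phase framework of Section~\ref{sec:mab} almost verbatim, carrying a second kind of ``play'' (the exploit) through the decomposition and the analysis. First I would verify that \ref{xlp:mab} is a valid relaxation along the same lines as \ref{lp:mab}: given any optimal adaptive policy, set $z_{u,t}$ to the probability that state $u$ is pulled at time $t$ and $x_{u,t}$ to the probability it is exploited at time $t$; then \eqref{xeq:mablp2} records that a state cannot be pulled or exploited before it is entered, \eqref{xeq:mablp3} counts pulls per time step, and \eqref{xeq:mablp4} bounds total exploits.

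For Phase~I, I would modify the decomposition of Appendix~\ref{sec:details-phase-i} so that when it identifies the earliest time some state $u$ is ``touched'' (i.e., either $z_{u,t}>0$ or $x_{u,t}>0$), the peeled forest commits $u$ to be either pulled or exploited depending on which variable is positive, with exploited states becoming leaves (no descendants are added). With the natural bottleneck choice of $\epsilon$, an inductive argument parallel to Lemma~\ref{lem:convexstep} preserves feasibility of the residual LP solution and yields the marginal identities $\sum_{j:\ptime(i,j,u)=t}\pull(i,j,u)=z_{u,t}$ and $\sum_{j:\ptime(i,j,u)=t}\xpl(i,j,u)=x_{u,t}$. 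Phase~II (gap filling) carries over unchanged, since it depends only on $\ptime$ and $\prob:=\pull+\xpl$; its two guarantees (Lemma~\ref{lem:gapfill}) continue to hold.

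Phase~III runs Algorithm~\textsf{AlgMAB} as-is, sampling each strategy $\xsf(i,j)$ with probability $\pull(i,j,\rho_i)/24$ and playing connected components contiguously; reward is collected only from exploits that occur before both budgets are exhausted. The main new step is replacing Lemma~\ref{lem:beforetime} by the statement that, conditioned on $\sigma(i)=j$ and on state $u$ being reached, the probability that $u$ is played before time $\ptime(i,j,u)$ \emph{and} strictly fewer than $K$ exploits have preceded it is at least $11/24$. The time-bound half is exactly Lemma~\ref{lem:beforetime}, giving failure probability at most $1/2$. For the exploit-count half, the marginal identity from Phase~I combined with the sampling probability $\pull(i',j',\rho_{i'})/24$ shows that the expected number of exploits preceding $u$ is at most $\sum_{u',t} x_{u',t}/24\le K/24$ by \eqref{xeq:mablp4}, so Markov gives failure probability at most $1/24$. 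A union bound leaves success probability $1-\tfrac12-\tfrac{1}{24}=11/24$, and then a linearity-of-expectation argument identical to Theorem~\ref{thm:main-mab} converts this into an $\Omega(1)$ fraction of the LP value, hence of $\Opt$.

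The main obstacle, as always with two coupled budget constraints, is ensuring that the randomized schedule does not simultaneously run out of time \emph{and} over-exploit; the generous factor of $24$ built into the sampling probabilities in Section~\ref{sec:phase-iii} was chosen precisely so both failure events can be absorbed together by a union bound, which is why essentially no new ideas beyond the Phase~I modification are required.
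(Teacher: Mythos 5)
Your proposal follows the paper's own argument essentially verbatim: the same LP with exploit variables $x_{u,t}$, the same Phase~I modification in which exploited states become leaves of the peeled strategy forests while preserving the marginals, unchanged Phases~II and~III, and the same replacement of Lemma~\ref{lem:beforetime} by a $11/24$ bound obtained from the $1/2$ time-overflow bound plus a Markov bound of $1/24$ on the exploit count via $\sum_{u,t} x_{u,t}\le K$. The only (cosmetic) difference is that the paper samples and normalizes by the total play probability $\prob(i,j,\rho_i)=\pull(i,j,\rho_i)+\xpl(i,j,\rho_i)$ at the root rather than by $\pull(i,j,\rho_i)$ alone; otherwise the two proofs coincide.
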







\end{document}